%% file: paper.tex
\documentclass[12pt]{article}
\input{preamble/preamble}


\input{preamble/preamble_tikz}

\title{%
\textbf{Restricted Search Space Graph MCMC via Birth-Death Processes}\thanks{We thank Piotr Zwiernik and Ricardo Baptista for their constructive suggestions that have improved the paper. Funding support for the work was provided by NSERC of Canada.}
}

\author{%
Morris Greenberg$^{1}$\thanks{Corresponding author: \texttt{morris.greenberg@mail.utoronto.ca}}, 
Kieran R Campbell$^{1,2,3,4,5}$, 
Radu Craiu$^{1,4}$\\
\small\itshape
$^{1}$Department of Statistical Sciences, University of Toronto, Toronto, ON, Canada\\
\small\itshape
$^{2}$Lunenfeld-Tanenbaum Research Institute, Sinai Health System, Toronto, ON, Canada\\
\small\itshape
$^{3}$Department of Molecular Genetics, University of Toronto, Toronto, ON, Canada\\
\small\itshape
$^{4}$Vector Institute, Toronto, ON, Canada\\
\small\itshape
$^{5}$Ontario Institute of Cancer Research, Toronto, ON, Canada}

\date{}
\begin{document}

\maketitle
\bigskip

\begin{abstract}
Inferring directed acyclic graphs (DAGs) from data via Markov chain Monte Carlo (MCMC) is computationally challenging in moderate-to-high dimensional settings because their discrete sampling space grows super-exponentially with the number of nodes. To address scalability, several recent MCMC-based graph inference methods restrict the search space to a subset of edges, at the cost of introducing error into the inference procedure.

In this work, we derive sharp lower and upper bounds on the total variation distance between the unrestricted posterior distribution and the posterior distribution induced by a state-of-the-art restricted search space MCMC method. These bounds characterize regimes in which the approximation error is negligible and regimes in which it is not. In order to reduce the error, we propose a flexible transdimensional MCMC sampler which allows the search space to expand or contract dynamically as the chain progresses. The sampler is defined by birth-and-death rates that induce a prior distribution on the set of search spaces, rather than assume a fixed restricted search space throughout. We outline an efficient implementation of the proposed algorithm and demonstrate its finite-sample performance through simulation studies.

\end{abstract}


\section{Introduction}
Understanding the dependence structure of a set of random variables is important in multivariate statistical analysis. \textit{Graphical models} can be used to represent statistical models with conditional independence structures, and \textit{Bayesian networks} are the special cases which can be represented by a directed acyclic graph (DAG). In a Bayesian network, each vertex in the DAG represents a unique variable, and a directed edge between two vertices corresponds to conditional dependence. When data are generated from a multivariate normal distribution, the DAG dictates the sparsity pattern of the precision matrix. In addition, the acyclicity of the DAG implies possible causal relationships between variables, although it is unclear to what extent Bayesian networks facilitate causal discovery \citep{dawid_beware_2010}. In most problems, the DAG structure is not known a priori, which motivates the development of \textit{structure discovery} methods that infer the Bayesian network from data.

When inferring a Bayesian network, data $D$ are used to estimate the DAG $G$ and the set of parameters $\theta$. Although it is straightforward to either evaluate the likelihood or generate data (that is, compute or sample from $P(D | G, \theta)$) or infer the set of parameters ($P(\theta | G, D)$) given a known structure, inferring $G$ and its associated posterior distribution $\pi(G | D)$ is more challenging. This is mainly because (i) it is computationally infeasible to consider all DAGs that span a set of $p$ variables unless $p$ is small, and (ii) \textit{ score equivalence} of DAGs can lead to identifiability problems \citep{chickering_optimal_structure_2002}. As a result, to estimate $P(G | D)$, exact Markov chain Monte Carlo (MCMC) methods can be employed for small $p$, but approximate methods (such as restricting the set of edges considered in the search space, often called \textit{hybrid} samplers) are typically used for larger networks. These techniques greatly improve scalability, but can also introduce systematic error in posterior estimates, an issue that is minimally quantified in the existing literature.

This paper contribution is multivalent. First, we establish lower and upper error bounds on posterior estimates for hybrid MCMC sampling methods on the topological order space. Second, we develop a transdimensional MCMC sampler that gives practitioners direct ability to control the fidelity-scalability tradeoff, and outperforms competing methods on large classes of graphs by avoiding the error from fixing the restricted search space. Thirdly, we implement a number of computational swindles that reduces the computational run time by orders of magnitude to allow for scalable transdimensional sampling. 

We introduce notation and review previous work in Section \ref{sec:background}. In Section \ref{sec:error_bound}, we derive error bounds for existing fixed-space samplers and discuss how transdimensional samplers can improve them. In Section \ref{sec:brood_intro}, we propose \textbf{B}irth-death processes \textbf{R}estricted \textbf{O}ver \textbf{O}rder \textbf{D}istributions (BROOD), a novel algorithm that efficiently targets transdimensional posteriors. We compare numerically the performance of BROOD with that of other algorithms in Section \ref{sec:results_sim}, and conclude in Section \ref{sec:discussion}.

\section{Background}\label{sec:background}
\subsection{Graphs and Bayesian Networks}

A directed acyclic graph (DAG) $G= (V, E_G)$ consists of a set of vertices $V$ (also called nodes) and directed edges $E_G$ such that no directed cycles exist (i.e., no node can be both an ancestor and a descendant of another). An edge exists from node $i$ to node $j$ is denoted as $i \rightarrow j$, where $i$ is a \textit{parent} of $j$. We denote the set of parents of node $i$ as $pa_{G}(i)$ or $pa(i)$ when the graph is clear from context.

One popular application of DAGs is in probabilistic graphical modeling, where DAGs represent minimal factorizations of joint distributions into conditional and marginal components. Specifically, in a Bayesian network $\mathcal{B}= (G, \theta)$, each vertex $i \in V$ corresponds to a random variable $X_i$, $E_G$ encodes the dependence structure among the variables, and $\theta$ denotes the parameters specifying the functional forms of the conditional distributions. We denote the random variables associated with the parents of $i$ in $G$ as $\mathbf{Pa}_{G}(i)$ (or simply $\mathbf{Pa}(i)$), and write the observed data generated from a $p$-dimensional Bayesian network as $D=(X_1,...,X_p)$. Given $\theta$ and $G$, the joint distribution $P(X_1,...,X_p)$ factorizes according to the \textit{Markov property} where each variable $X_i$ is conditionally independent of its non-descendants given $ \mathbf{Pa}(i)$. Consequentially, the joint distribution is expressed as:
\begin{align}
  P(X_1,...,X_p)  = \prod_{i=1}^{p} P(X_i | \mathbf{Pa}(i)).
\end{align}

Every DAG $G$ is consistent with at least one \textit{topological order} (also called \textit{topological ordering}, \textit{order}, or \textit{ordering}), a permutation of $\{1,...,p\}$ that exhibits the ancestral relations of $G$. Formally, a topological order $\prec \in \mathbb{S}^p$ (the symmetric group of degree $p$) is compatible with $G$ if for every edge $i \rightarrow j$ in $G$, we have $\prec_{[i]} < \prec_{[j]}$. Furthermore, $i$ is a \textit{predecessor} of $j$ in $\prec$, or $i \in pr_{\prec}(j)$ for short. A four-node DAG is illustrated in figure \ref{fig:dags_orders_a} with two example compatible orders in \ref{fig:dags_orders_b}. We define $\mathcal{G}_{\prec} := \{G: G \in \mathbb{G}_p, pa_{G}(i) \subseteq pr_{\prec}(i) \text{ for all } i\}$ (where $\mathbb{G}_p$ is the set of $p$-node DAGs) as the set of DAGs consistent with an order $\prec$. For a given order $\prec$, we define $\mathbf{Pa}_{\prec}(i) = \cup_{\{G:G \in \mathcal{G}_{\prec}\}} \mathbf{Pa}_{G}(i)$.

For the purpose of Bayesian inference, we treat the graph and the topological order as random variables denoted by $\bm{G}, \bm{O}$, respectively, which take values in the spaces $\mathbb{G}_p, \mathbb{S}^p$. We use $G, \prec$ to refer to specific realizations of these variables.

\begin{figure}[t]
\centering

\begin{subfigure}{0.45\textwidth}
\centering
\begin{tikzpicture}[
    node/.style={circle, draw, thick, minimum size=7mm},
    ->, >=Stealth, thick
  ]
  \node[node, draw=purple] (1) at (1,2) {1};
  \node[node, draw=purple] (2) at (2,0) {2};
  \node[node, draw=purple] (3) at (0,0) {3};
  \node[node, draw=purple] (4) at (3,2) {4};

  \draw (1) -- (2);
  \draw (1) -- (3);
  \draw (4) -- (2);

  \node at (0,3.5) {(a)};
\end{tikzpicture}
\caption{}
\label{fig:dags_orders_a}
\end{subfigure}
\hfill
\begin{subfigure}{0.45\textwidth}
\centering
\begin{tikzpicture}[
    node/.style={circle, draw, thick, minimum size=7mm},
    ->, >=Stealth, thick
  ]
  \node[node, draw=purple] (2) at (0,0) {2};
  \node[node, draw=purple] (4) at (1.5,0) {4};
  \node[node, draw=purple] (3) at (3,0) {3};
  \node[node, draw=purple] (1) at (4.5,0) {1};

  \draw (4) -- (2);
  \draw (1) -- (3);
  \draw (1) .. controls +(up:10mm) and +(up:10mm) .. (2);

  \node at (0,3.5) {(b)};

  \node[node, draw=purple] (3b) at (0,2) {3};
  \node[node, draw=purple] (2b) at (1.5,2) {2};
  \node[node, draw=purple] (1b) at (3,2) {1};
  \node[node, draw=purple] (4b) at (4.5,2) {4};

  \draw (4b) .. controls +(up:10mm) and +(up:10mm) .. (2b);
  \draw (1b) -- (2b);
  \draw (1b) .. controls +(down:10mm) and +(down:10mm) .. (3b);

\end{tikzpicture}
\caption{}
\label{fig:dags_orders_b}
\end{subfigure}

\caption{(a) Example DAG with 4 nodes (b) 2 example topological order representations of the DAG in (a). In both topological orders, parents are always to the right of children. }\label{fig:dags_orders}
\end{figure}
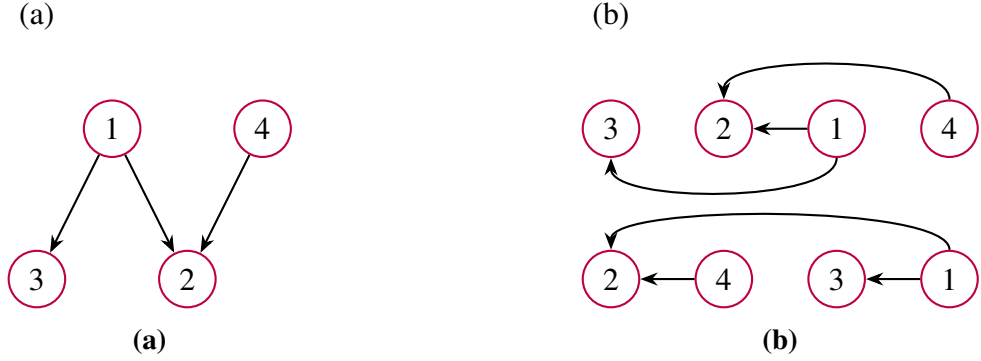

\subsection{Graph Inference Methods}
There are three main approaches to structure inference: \textit{constraint-based}, \textit{score-based}, and \textit{hybrid} methods. We summarize each below; for a complete review, see \cite{kitson_survey_2023}.

\subsubsection{Constraint-Based Methods}
Constraint-based methods infer the graph structure by estimating conditional dependencies via hypothesis testing. Since DAG edges correspond to conditional dependencies between nodes, these methods estimate a \textit{skeleton} (the undirected edge structure) by conducting a complete set of independence tests and retaining the edges that correspond to null hypotheses that are rejected. Edge directions are then added in a post-processing step to infer the DAG. 

Constraint-based methods are consistent estimators of the true DAG under mild assumptions (\citealt{kalisch_estimating_2007}), but can be unstable. Because tests are often correlated across sets of random variables, slight misspecification can cause many false negatives, resulting in true edge omissions (\citealt{uhler_geometry_2013}). In addition, the variable ordering used during testing can significantly influence the estimated graph (\citealt{colombo_order-independent_2014}).

\subsubsection{Score-Based Methods}
Score-based methods (sometimes called score-and-search-based methods) infer structure by evaluating an objective function (\textit{score}) that quantifies the relative merits of each graph. The score is then used to estimate a single best graph by searching for a maximum or to estimate a distribution over structures by sampling graphs proportionally to their scores. Unique score-based approaches vary in (1) the choice of score and (2) the strategy used to explore the space. We discuss desirable properties for score functions in \ref{subsec:score_funcs} and discuss sampling techniques in \ref{subsec:dag_mcmc_methods}.

When computationally feasible, score-based methods tend to outperform constraint-based ones, as they directly work with an objective tied to model quality. However, they can scale poorly due to the exponential size of the graph space with $p$ nodes. As a result, score-based methods are typically used in sparse graph problems or problems with few variables.

\subsubsection{Hybrid Methods}
Hybrid methods aim to combine the strengths of constraint-based and score-based approaches to graph inference. A common approach is to use a computationally efficient constraint-based approach to build a reduced search space, and then use a deeper score-based search within this restricted space. We describe a recent MCMC variant of this in Section \ref{subsubsec:hybrid_mcmc}.

\subsection{Score Functions}\label{subsec:score_funcs}
We define a graph score $Q(G, D)$ to be any objective function used to compare graphs. Scores are typically proportional to either (1) a model selection criterion (such as Bayesian information criterion (BIC) or Akaike information criterion (AIC)) or (2) to a posterior probability. We define a posterior score as the special case in which the score can be mapped to the posterior probability of the graph via a strictly monotone transformation $g$, i.e.,
\begin{align}
    Q(G, D) = g(\pi_{DAG}(G | D)).
\end{align}
Examples of nontrivial $g$ include the unnormalized posterior $P(D | \bm{G}=G)\pi(G)$ (where $g(x) = xP(D)$) or its log form ($g(x) = \log{x} +\log{ P(D)}$). Posterior scores therefore provide a direct Bayesian interpretation of the scoring function. We denote $(Q, \pi_{DAG})$ as a coupled posterior score and its associated posterior distribution, noting that any monotone transformation $Q'$ of $Q$ yields the same $\pi_{DAG}$.

While we introduce scores here in the graph space, it is often convenient (see Section \ref{subsubsec:straight_mcmc}) to conduct the search in reparameterized spaces to achieve efficiency gains. In those settings, we will analogously define scores and posterior scores over the corresponding space.

\subsubsection{Computationally Efficient Scores}\label{subsubsec:eff_scores}
To ensure computational tractability, most score-based methods utilize \textit{decomposable} scores. A score $Q(G, D)$ is decomposable iff $Q(G, D) = \prod_{i=1}^p S(X_i, \mathbf{Pa}(i) | D)$, where $S$ depends only on a node and its parents (sometimes, this is written in logarithmic scale where the product becomes a sum). These scores are especially useful in local update procedures, as only the terms associated with the affected node are reevaluated (whereas nondecomposable scores require rescoring the entire graph for each node-wise update). 

\cite{friedman_being_2003} found sufficient conditions for decomposable Bayesian network posterior scores. Notable examples include the BDe score for multinomial data and the BGe score for multivariate Gaussian data \citep{heckerman_learning_1995}.

Another approach for modular scoring is to use priors that induce \textit{structural Markov laws} \citep{byrne_structural_2015}, graph distributions with strong conditional independence properties. \cite{ben-david_high_2015} used these laws to create the DAG-Wishart distribution, a flexible generalization of the BGe score that still yields decomposable scores.

\subsubsection{Identifiable Scores within Equivalence Classes}\label{subsubsec:identifiable}
Another consideration in design of scores is allowing for identifiability. Two DAGs belong to the same \textit{Markov equivalence class} if they encode the same set of independence relationships \citep{drton_structure_2017}; when the true DAG belongs to a nontrivial equivalence class, structure learning from observational data alone becomes unidentifiable. Many score functions, including the BDe and BGe scores, are \textit{score equivalent}, meaning they assign identical scores to all graphs in the equivalence class \citep{geiger_parameter_2002}, thus preserving this ambiguity.

Recent methods achieve identifiability by leveraging the unique constraints of topological orders, though often at the cost of computational or model complexity. For instance, the topological order is identifiable under structural equation models with equal error variance \citep{pmlr-v84-ghoshal18a, chen_causal_2019}. While \cite{chang_orderbased_2023} utilized this to design an identifiable empirical Bayes score, the resulting function is non-decomposable and requires approximate sampling. Alternatively, the DAG-Wishart distribution \citep{ben-david_high_2015} maintains decomposability but achieves identifiability through order-dependent priors. This approach penalizes Markov equivalent structures differently based on their directionality, which risks model misspecification if the assumed ordering is incompatible with the underlying data.

\subsection{DAG MCMC Methods}\label{subsec:dag_mcmc_methods}
As the number of nodes $p$ in a graph grows, brute-force search methods become infeasible for 2 reasons: (1) the number of graphs in the search space grows super-exponentially, and (2) many scores introduce intractable terms, such as posteriors in Gaussian graph scoring setups. As a result, MCMC techniques are used to score and search the space.

\subsubsection{MCMC Setups}\label{subsubsec:straight_mcmc}

MCMC algorithms for graphs can vary in 3 key ways: (1) the score function used, (2) the state space used, and (3) the move proposal setup (both in neighborhood choice for proposals, and the proposal scheme itself).

As outlined in \ref{subsec:score_funcs}, choice of score function is motivated by either (1) minimizing the computational cost of score updates or (2) encoding information (such as identifiability) about unique graphs. Typically, MCMC techniques use posterior scores derived from graph priors that invoke the desired score properties.

The simplest state space for graph inference is the set of DAGs. \textit{Structure MCMC} (\citealt{madigan_1995_BayesianGM}, \citealt{giudici_improving_2003}) is a random-walk Metropolis-Hastings (MH) on this space, proposing moves by adding, deleting, or reversing one edge from the current DAG while ensuring acyclicity. More sophisticated moves on the DAG space include block-Gibbs sampling \citep{goudie_gibbs_2016}, large-scale edge reversals \citep{grzegorczyk_improving_2008}, adaptive neighborhood sets (\citealt{liang_adaptive_2022}, \citealt{caron_structure_2024}), and transdimensional methods (\citealt{giudici_decomposable_1999}, \citealt{mohammadi_bayesian_2015}, \citealt{dobra_loglinear_2018}), the last of which we discuss further in Section \ref{subsubsec:transdim_mcmc}.

Alternatively, grouping related DAGs as an element of the state space can improve mixing. For instance, sampling over equivalence classes (\citealt{he_reversible_2013}, \citealt{castelletti_learning_2018}) avoids transitions between indistinguishable graphs under score equivalent scores. Another approach is to group by topological orders (\citealt{friedman_being_2003}, \citealt{kuipers_efficient_2021}) or partial orders (\citealt{kuipers_partition_2017}). Since order identifiability implies graph identifiability, samplers on the order space can distinguish between otherwise equivalent DAGs. 

\textit{Order MCMC} is the class of graph MCMC techniques that operate on the topological order space. Beyond enabling larger move proposals than DAG-based samplers and distinguishing between equivalent graphs, the order space offers two further advantages. First, while it is NP-hard to find an optimal scoring graph (\citealt{chickering_learning_1996}, \citealt{chickering_large-sample_2004}), the problem is only $O(p^K)$ given a known order $\prec$ where $K$ is the \textit{sparsity} - maximal parent set size. This reduces structure learning to a variable selection task over the $\binom{p}{2}$ possible edges, instead of a simultaneous identification and selection task (\citealt{buntine_theory_1991}, \citealt{cooper_bayesian_1992}). Second, order scores efficiently factorize over nodes when using decomposable graph scores. If $(Q(G, D), \pi_{DAG}(G | D))$ is a posterior graph score and distribution pair, the marginal posterior distribution over orders $\pi_{Ord}$ (with associated score $R$) is:
\begin{align}
\pi_{Ord}(\prec | D) \propto R(\prec | D) &\propto \sum_{G \in \mathcal{G}_{\prec}} \pi_{DAG}(G | D).
\end{align}

For decomposable graph scores, this admits the factorized form,
\begin{align}
    R(\prec | D) &\propto \sum_{G:G \in \mathcal{G}_{\prec}}\prod_{i=1}^p S(X_i, \mathbf{Pa}_{G}(i) | D) = \prod_{i=1}^p \sum_{\mathbf{Pa} \in \mathbf{Pa}_\prec(i)} S(X_i, \mathbf{Pa} | D).
\end{align}
After sampling $\prec$ via MH, graphs can then be sampled from $\pi_{DAG}(G | \prec, D)$.

Note that order MCMC estimates are biased  towards graphs with more compatible orderings (namely, sparser graphs). This motivates substituting orders with partial orders, though both the search and scoring steps become more computationally expensive.

\subsubsection{Hybrid MCMC Setups}\label{subsubsec:hybrid_mcmc}

Recent work (\citealt{kuipers_efficient_2021}, \citealt{viinikka_towards_2020}) further improves the efficiency and scalability of order MCMC by restricting the search space to a subset of all possible edges, estimated via constraint-based methods in a hybrid setup. While in theory, a restricted search space could be any arbitrary subset of $\mathbb{G}_p$, recent literature focuses on restrictions induced by a directed graph $\mathcal{H} = (V, E_{\mathcal{H}}) \in \mathbb{D}_p$, where  $\mathbb{D}_p$ is the set of $p$-node directed graphs. The set of admissible DAGs is then,
\begin{align}
    \mathcal{G}_{\mathcal{H}} &:= \{G: G = (V, E_{G}) \in \mathbb{G}_p, E_{G} \subseteq E_{\mathcal{H}}\}.
\end{align}
This edge-based restriction is computationally advantageous because it allows the order score to remain decomposable over the nodes. Let $H$ be the corresponding adjacency matrix (where $H_{ji} = 1$ if $j \rightarrow i \in \mathcal{H}$) and $\mathbf{h}^i = \{X_j: H_{ji}=1\}$. The resulting restricted decomposable order score function is,
\begin{align}
    R_{\mathcal{H}}(\prec | D) &\propto \prod_{i=1}^p \sum_{\substack{\mathbf{Pa} \subseteq \mathbf{h}^i\\ \mathbf{Pa} \in \mathbf{Pa}_\prec(i)}} S(X_i, \mathbf{Pa} | D).
\end{align}
Typical order MCMC follows, just using $R_{\mathcal{H}}$ instead of $R$. Figure \ref{fig:example_res_space} outlines order MCMC procedure visually on a 4-node graph problem, where \ref{fig:example_res_space_a} shows the posterior probabilities for all 4-node DAGs, and some graphs are omitted by removing an edge from the space in \ref{fig:example_res_space_b}; the ensuing topological order posterior changes as a result, shown in \ref{fig:example_res_space_c}.

\begin{figure}[htb]
    \centering
    \begin{minipage}{0.52\textwidth}
        \centering
        \begin{subfigure}{\textwidth}
            \centering
            \includegraphics[width=1\linewidth]{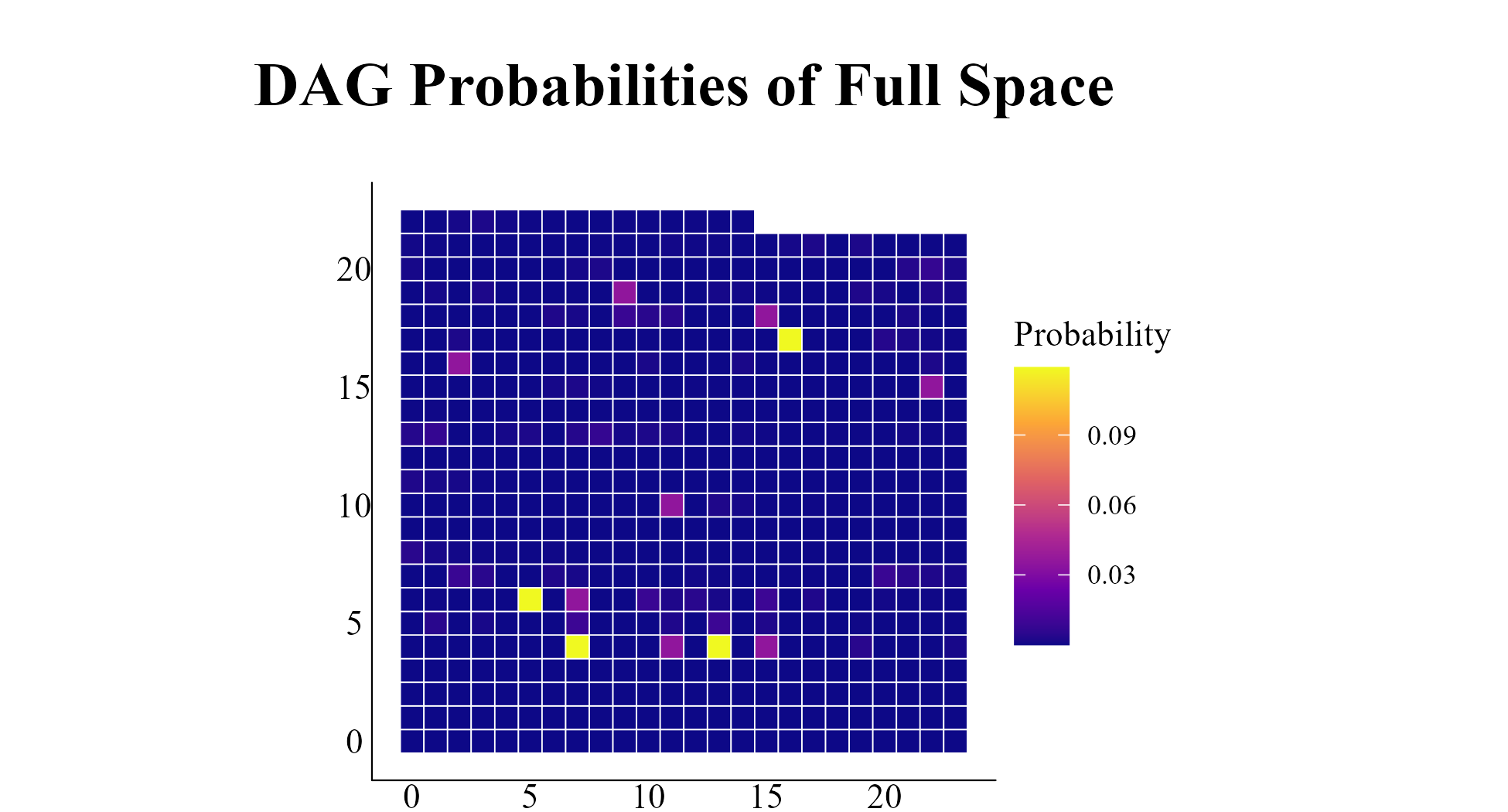}
            \caption{}
            \label{fig:example_res_space_a}
        \end{subfigure}
        
        \vspace{1cm} 
        
        \begin{subfigure}{\textwidth}
            \centering
            \includegraphics[width=1\linewidth]{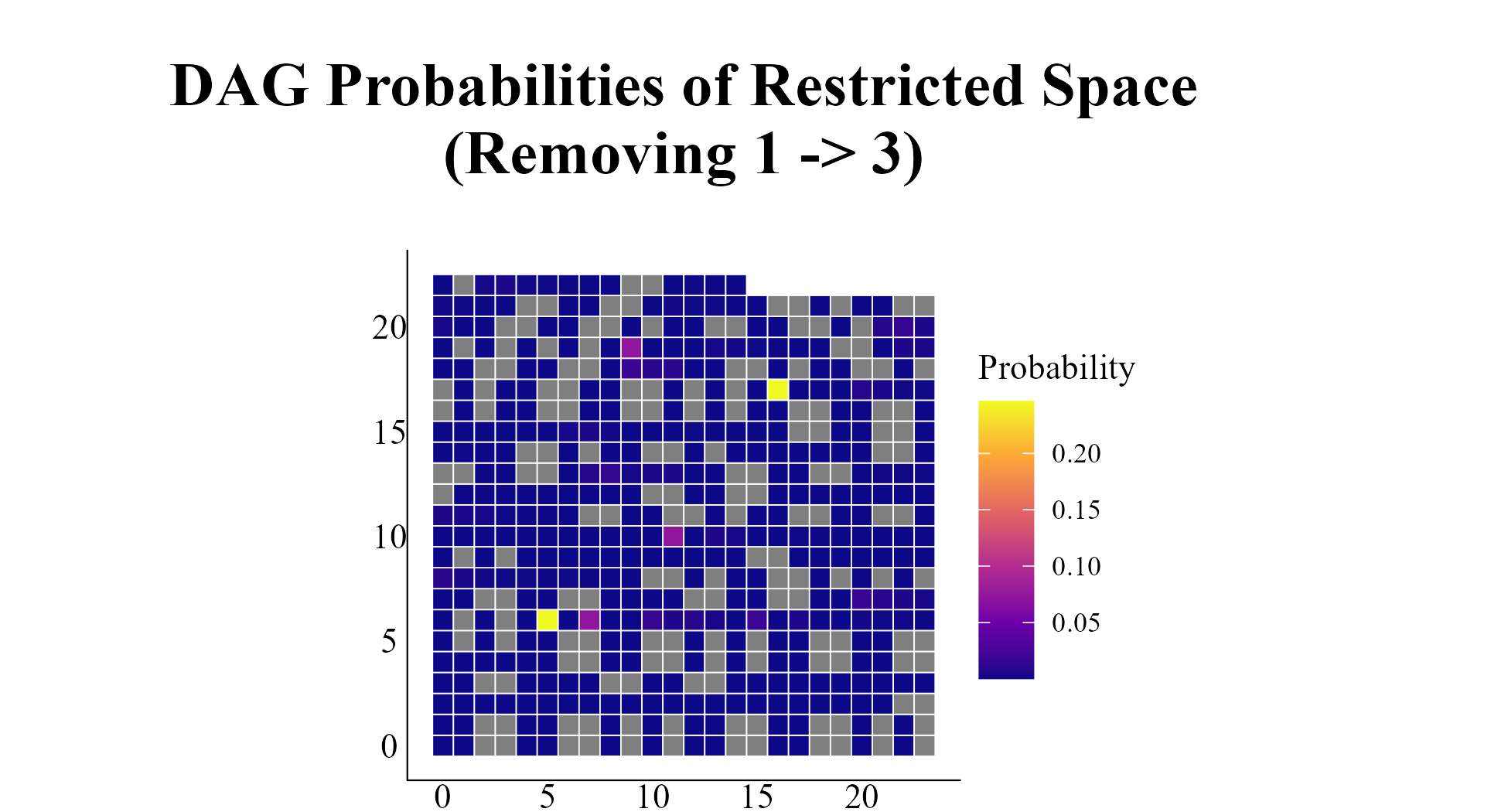}
            \caption{}
            \label{fig:example_res_space_b}
        \end{subfigure}
    \end{minipage}
    \hfill 
    \begin{minipage}{0.46\textwidth}
        \centering
        \begin{subfigure}{\textwidth}
            \centering
            \includegraphics[width=1\linewidth, height=9cm]{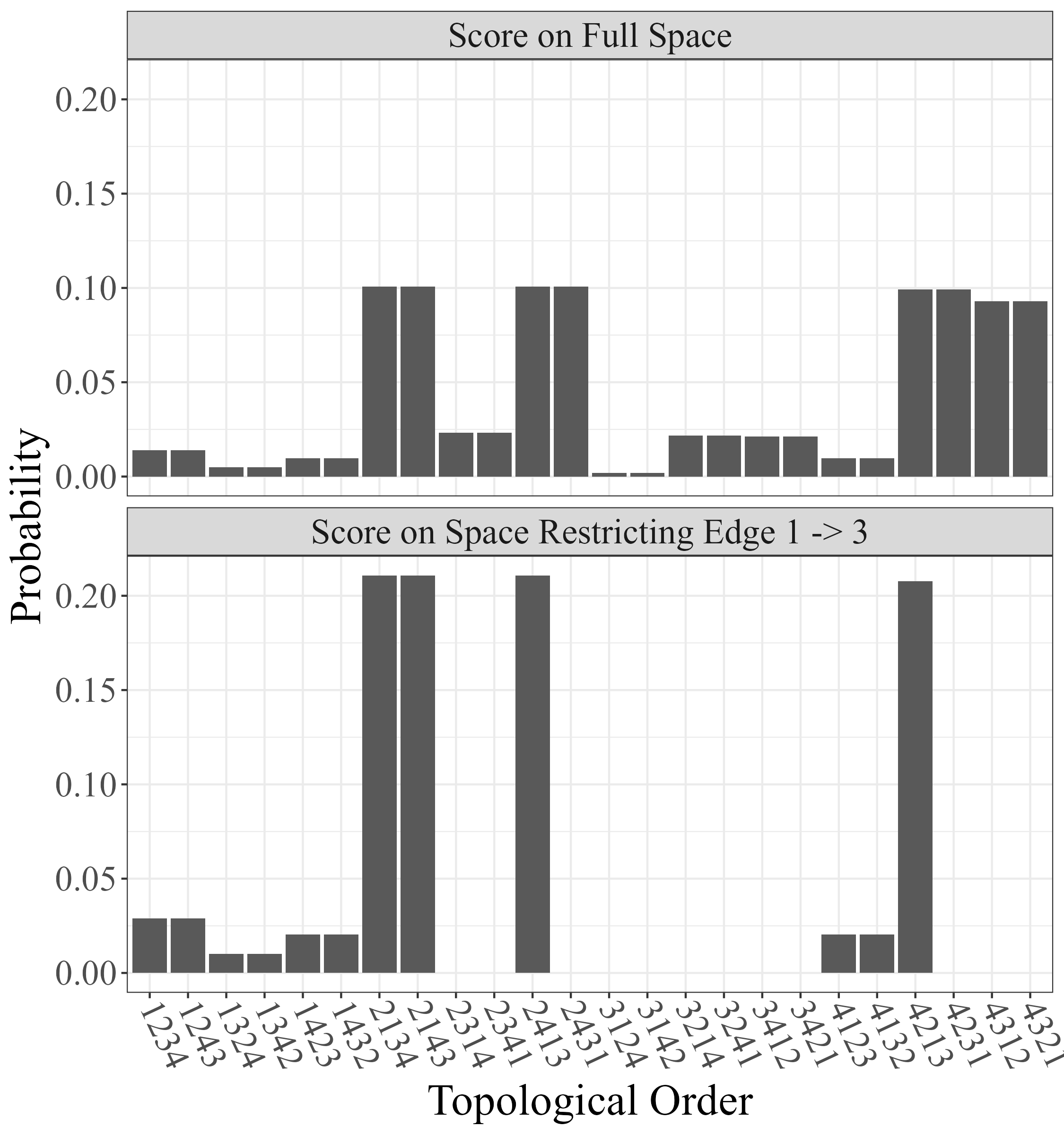}
            \caption{}
            \label{fig:example_res_space_c}
        \end{subfigure}
    \end{minipage}

    \caption{4-node example of restricted search space: (a) Posterior probabilities of the DAGs without imposing any restrictions. (b) Posterior probabilities of the graphs where the edge $1 \to 3$ is not included in the restricted search space. Excluded DAGs are shaded in gray. (c) Resulting restricted order posterior probability mass functions for both search spaces.}
    \label{fig:example_res_space}
\end{figure}

Constraining scores to be on the restricted space reduces the cost of MCMC steps from $O(p^K)$ to $O(Kp)$ (see \ref{app:score_table_bkgrd} for implementation details). However, this comes at the cost of omitting low-probability regions of the graph posterior in the space of order scores, meaning that $R_{\mathcal{H}}$ will not correspond to the same target as $R$, as shown in figure \ref{fig:example_res_space_c}. We note that, to our knowledge, there is a major gap in the literature concerning the error incurred when considering a restricted search space. In this paper we close this gap by establishing upper and lower bounds on the error, as reported in Section \ref{sec:error_bound}.

\subsubsection{transdimensional MCMC Setups}\label{subsubsec:transdim_mcmc}
Standard MCMC methods and their hybrid extensions fix the support throughout the chain to ensure ergodicity (and in the latter, restrict this fixed support to a manageable subset at the cost of exactness). In contract, \textit{transdimensional MCMC} methods are algorithms which permit transitions between states of differing dimensionality.

The reversible jump MCMC (RJMCMC) method \citep{green_reversible_1995} generalizes the MH algorithm to allow for dimension switches. This is done by introducing auxiliary variables to dimension-match the proposal with the current state of the chain, and including a Jacobian term in the acceptance probability to account for the change in dimension. \cite{giudici_decomposable_1999} extended this approach to Gaussian Graphical Models (GGMs). The two large drawbacks of RJMCMC include (1) the slow mixing of the chain when the acceptance probabilities are low, and (2) the large computational cost of calculating the Jacobians \citep{stephens_bayesian_2000}.

An alternate approach is to use a birth-death MCMC (BDMCMC) process which leverages a continuous-time Markov process (\cite{preston_spatial_1977}) rather than a discrete-time Markov chain. In BDMCMC, waiting times between jumps to higher dimensions (births) and lower dimensions (deaths) are taken to be random variables based on specific birth and death rates. The birth and death rates also define the transition kernel between spaces of different dimensions and are chosen such that the stationary distribution of the point process is the desired target. Extensions of the BDMCMC samplers for GGMs and general graphical models were given by \cite{mohammadi_bayesian_2015} and \cite{dobra_loglinear_2018}, respectively.

We will take inspiration from BDMCMC to propose a flexible hybrid transdimensional sampler in Section \ref{subsec:mixture_kernel}.

\section{Target Error Bounds in Restricted Search Space Methods}\label{sec:error_bound}

We first quantify the error introduced from the hybrid order MCMC sampler (as highlighted in figure \ref{fig:example_res_space}) that restricts the search space. These results confirm our intuition that the error introduced from hybrid order MCMC techniques could be quite large in some settings, and we characterize in what ways this could occur. We will use these results to motivate the design of our novel MCMC method in Section \ref{sec:brood_intro}.

For the rest of this paper, we consider order MCMC algorithms defined on restricted search spaces. For a fixed directed graph $\mathcal{H} \in \mathbb{D}_p$, the state space is 
$(\prec, G) \in \{(\prec, G): \prec \in \mathbb{S}^p, G \in \mathcal{G}_{\mathcal{H}} \cap \mathcal{G}_{\prec}\}$, with corresponding approximate order score $R_{\mathcal{H}}$. When $\mathcal{H}$ itself is updated during sampling, we operate on $(\mathcal{H}, \prec, G) \in \{(\mathcal{H}, \prec, G): \mathcal{H} \in \mathbb{D}_p, \prec \in \mathbb{S}^p, G \in \mathcal{G}_{\mathcal{H}} \cap \mathcal{G}_{\prec}\}$. 

For a fixed restricted space $\mathcal{H} \in \mathbb{D}_p$, we define the restricted order posterior $\pi_{Ord}^{(\mathcal{H})}$ as the distribution over orders induced by $\mathcal{G}_{\mathcal{H}}$. Additionally, for notational convenience, we write $\pi_{DAG}(\mathcal{H} | D)$ to denote the posterior mass assigned to DAGs in $\mathcal{G}_{\mathcal{H}}$, that is,
\begin{align*}
    \pi_{DAG}(\mathcal{H} | D) := \sum_{G \in \mathcal{G}_{\mathcal{H}}} \pi_{DAG}(G | D),
\end{align*}
with equivalent analogues for conditional or joint distributions involving $\mathcal{H}$. 

To our knowledge, restricted search spaces have only been applied to the graph space and not to higher-level spaces such as equivalence classes or orders, in the sense that restrictions occur by omitting specific graphs (and not specific equivalence classes or orders) from the space based on a subset of edges. So, our results have been derived for order scores with restrictions to the set of admissible graphs. Note that since the empty graph is consistent with all possible orders, such restrictions do not change the support of order MCMC, but they do alter the \textit{effective} support by changing the corresponding score and induced posterior distribution.

\subsection{Lower and Upper Bound}

The following theorem gives us upper and lower bounds on the error introduced in the restricted search space MCMC methods:

\begin{theorem}\label{theorem:tv_bound}
Let $\mathcal{H} \in \mathbb{D}_p$ be a directed graph with corresponding edge set $E_{\mathcal{H}}$ and admissible DAG set $\mathcal{G}_\mathcal{H}$. Let $\pi_{DAG}, \pi_{Ord}$ denote the posterior distributions over DAGs and orders, respectively, using the full DAG space $\mathbb{G}_p$, and let $(R_{\mathcal{H}}, \pi_{Ord}^{(\mathcal{H})})$ denote the perturbed score-distribution pair, where $\pi_{Ord}^{(\mathcal{H})}$ is the posterior over orders restricted to $\mathcal{G}_\mathcal{H}$. Define $\varepsilon_{\mathcal{H}} := 1-\pi_{DAG}(\mathcal{H} | D)$ as the posterior mass assigned to DAGs outside of $\mathcal{H}$. 

Then, for some constant $c \in [0,1]$,
\begin{align}
    1-\sqrt{1-c\varepsilon_{\mathcal{H}}} &\leq D_{TV}(\pi_{Ord}^{(\mathcal{H})}, \pi_{Ord}) \leq \min\left(\sqrt{2-2\sqrt{1-c\varepsilon_{\mathcal{H}}}}, 1\right)\label{eq:lower_upper_bound}
\end{align}

where $D_{\mathrm{TV}}$ denotes the total variation distance between distributions.
\end{theorem}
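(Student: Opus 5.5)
The bounds have the shape of the Hellinger/total-variation comparison inequalities. With $H^2(P,Q)=1-\mathrm{BC}(P,Q)$, where $\mathrm{BC}(P,Q)=\sum_x\sqrt{P(x)Q(x)}$ is the Bhattacharyya coefficient, one has
\begin{align*}
H^2(P,Q)\le D_{TV}(P,Q)\le \sqrt{2}\,H(P,Q)\sqrt{1-H^2(P,Q)/2}\le\sqrt{2H^2(P,Q)}.
\end{align*}
So the plan is to show that $\mathrm{BC}(\pi_{Ord}^{(\mathcal{H})},\pi_{Ord})=\sqrt{1-c\varepsilon_{\mathcal{H}}}$ for some $c\in[0,1]$. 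The lower bound then reads $1-\sqrt{1-c\varepsilon_{\mathcal{H}}}\le D_{TV}$. The upper bound reads $D_{TV}\le\sqrt{2(1-\sqrt{1-c\varepsilon_{\mathcal{H}}})}$, and it is capped at $1$ trivially because total variation never exceeds $1$. Both inequalities are standard (Le Cam), so I would simply cite them.

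\textbf{Step 1: express the restricted order posterior through the DAG posterior.} Write $\pi_{Ord}(\prec|D)=\sum_{G\in\mathcal{G}_\prec}w(G)$, where $w$ is the posterior with the order-counting weighting implicit in $R$; this is the joint $(\prec,G)$ posterior marginalised over $G$. Similarly, $\pi_{Ord}^{(\mathcal{H})}(\prec|D)$ is proportional to $\sum_{G\in\mathcal{G}_\prec\cap\mathcal{G}_{\mathcal{H}}}w(G)$. Hence $\pi_{Ord}^{(\mathcal{H})}(\prec)=a(\prec)/(1-\varepsilon)$, where $a(\prec)\le\pi_{Ord}(\prec)$ is the retained mass and $\varepsilon=\sum_\prec(\pi_{Ord}(\prec)-a(\prec))$ is the total excluded joint mass. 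Here I interpret $\pi_{DAG}(\mathcal{H}|D)$ as the joint (order, DAG) posterior mass on $\mathcal{G}_{\mathcal{H}}$, which is the natural reading given the paper's convention for "equivalent analogues for joint distributions".

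\textbf{Step 2: compute the Bhattacharyya coefficient.} Set $r(\prec)=a(\prec)/\pi_{Ord}(\prec)\in[0,1]$. Then
\begin{align*}
\mathrm{BC}=\frac{1}{\sqrt{1-\varepsilon}}\sum_\prec \pi_{Ord}(\prec)\sqrt{r(\prec)} .
\end{align*}
Two bounds follow:
\begin{itemize}
\item Since $\sqrt r\ge r$, $\mathrm{BC}\ge \sum_\prec a(\prec)/\sqrt{1-\varepsilon}=\sqrt{1-\varepsilon}$.
\item Since the coefficient between probability measures is at most $1$, $\mathrm{BC}\le 1$.
\end{itemize}
Therefore $\mathrm{BC}^2\in[1-\varepsilon,1]$. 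Define $c:=(1-\mathrm{BC}^2)/\varepsilon\in[0,1]$ when $\varepsilon>0$, and take $c$ arbitrary when $\varepsilon=0$, in which case the two distributions coincide and both bounds equal $0$. Substituting $\mathrm{BC}=\sqrt{1-c\varepsilon}$ into the Le Cam inequalities gives exactly \eqref{eq:lower_upper_bound}.

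\textbf{Main obstacle.} The delicate step is Step 1: making the identification between the order marginal and $\varepsilon_{\mathcal{H}}$ precise. Order MCMC weights each DAG by the number of orders it is compatible with, so the mass removed from the order posterior is the $\pi_{DAG}$-mass outside $\mathcal{H}$ only under the joint $(\prec,G)$ posterior. If $\varepsilon_{\mathcal{H}}$ instead denotes the pure DAG-posterior mass, I would argue the $c\in[0,1]$ formulation still absorbs the discrepancy, since $c$ is only asserted to exist. To do that I would need a crude comparison showing that the joint excluded mass times $(1-\mathrm{BC}^2)/(\text{joint excluded mass})$ can be rewritten against $\varepsilon_{\mathcal{H}}$ while staying in $[0,1]$. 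This holds if the excluded joint mass is at most $\varepsilon_{\mathcal{H}}$, and that is the point I would check first. The remaining steps are routine.
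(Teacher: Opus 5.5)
Your argument is essentially the paper's. The paper also writes $\pi_{Ord}=(1-\varepsilon_{\mathcal H})\pi_{Ord}^{(\mathcal H)}+\varepsilon_{\mathcal H}\pi_{Ord}^{(\neg\mathcal H)}$, sandwiches $\sum_{\prec}\sqrt{\pi_{Ord}^{(\mathcal H)}(\prec|D)\,\pi_{Ord}(\prec|D)}$ between $\sqrt{1-\varepsilon_{\mathcal H}}$ and $1$ to define $c$, and finishes with $d_H^2/2\le D_{TV}\le d_H$, where $d_H^2=2-2\,\mathrm{BC}$; this is your Le Cam step, and your intermediate bound $\sqrt{1-\mathrm{BC}^2}$ even yields the sharper $D_{TV}\le\sqrt{c\varepsilon_{\mathcal H}}$.

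Your joint $(\prec,G)$ reading of $\varepsilon_{\mathcal H}$ is exactly what the paper's mixture identity silently requires, so commit to it and drop the fallback. Because of the linear-extension weighting, the excluded joint mass can far exceed the pure DAG-posterior mass. For example, take $E_{\mathcal H}=\{i\to j: i<j\}$, put mass $1-\varepsilon$ on the complete DAG and mass $\varepsilon$ on the single edge $p\to 1$, which has $p!/2$ compatible orders. For $p=5$ and $\varepsilon=0.01$ the total variation is about $0.38$, while $\sqrt{2-2\sqrt{1-\varepsilon}}\approx 0.10$, so no $c\in[0,1]$ satisfies the stated bound under that reading.
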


\begin{proof}
    See Section \ref{app:tv_bound}.
\end{proof}

We plot the lower and upper bounds from equation \eqref{eq:lower_upper_bound} for when $c=0.1, 0.5, 0.9$ in Figure \ref{fig:tv_bound}, which illustrates that for large $c$, the target will be far away from the posterior when omitting even a small portion of it (we describe the intuition of $c$ in \ref{subsubsec:understand_c}). A consequence is that if the bounds are large for a restricted search space, no Markov chain operating on it, no matter how efficient, could satisfy \textit{uniform ergodicity} \citep{roberts_general_2004}, a consequential property of an MCMC chain's convergence to a stationary distribution, $\pi$. 

We show how theorem \ref{theorem:tv_bound} connects to convergence analysis in the following examples:

\begin{remark}
    If the prior over graphs $\pi_{DAG}(G)$ is supported entirely over $\mathcal{H}$, i.e., $\pi_{DAG}(G) = 0$ if $G \notin \mathcal{G}_\mathcal{H}$, then the posterior over orders restricted to $\mathcal{H}$ coincides with the full posterior over orders: $\pi_{Ord}^{(\mathcal{H})}(\prec | D) = \pi_{Ord}(\prec | D)$ for all $\prec \in \mathbb{S}^p$, and hence $D_{TV}(\pi_{Ord}^{(\mathcal{H})}(\cdot | D), \pi_{Ord}(\cdot | D)) = 0$ in Theorem \ref{theorem:tv_bound}.
\end{remark}

\begin{remark}\label{rem:strong_consistency}
    $D_{TV}(\pi_{Ord}^{(\mathcal{H})}(\cdot | D), \pi_{Ord}(\cdot | D))$ can converge to 0 in Theorem \ref{theorem:tv_bound} under strongly consistent posteriors. A graph posterior distribution $\pi_{DAG}$ is strongly consistent when
    \begin{align*}
        \lim_{n \to \infty} Pr(|\pi_{DAG}(G_{\text{true}} | D) - 1 | \geq \epsilon) &= 0 \quad \text{ for all } \epsilon > 0,
    \end{align*}
    where $G_{\text{true}}$ is the true underlying graph and $n$ is the number of observations in $D$. If $\pi_{DAG}$ is strongly consistent and if $\mathcal{H}$ is estimated such that $\lim_{n \to \infty} Pr(\hat{G}_{MAP_n} \in \mathcal{G}_{\hat{\mathcal{H}}_n}) = 1$, then $D_{TV}(\pi_{Ord}^{(\mathcal{H})}, \pi_{Ord})$ converges to 0. However, posteriors that have been shown to be strongly consistent in previous research assume a known topological ordering (\cite{cao_posterior_2019}, \cite{lee_minimax_2019}), so results on strongly consistent graph posteriors need to be extended to order scores.
\end{remark}

\begin{remark}\label{rem:noisy_mcmc}
    There is a growing body of literature on \textit{noisy MCMC} \citep{alquier_noisy_2016} in which the transition kernel, $P$, of a Markov chain that has the desired stationary distribution $\pi_P$ but is too computationally expensive to run is replaced with a ``perturbed'' kernel $\tilde{P}$ that is computationally cheaper to run. The modified transition kernel defines a chain that has a different stationary distribution $\pi_{\tilde {P}}$. The noisy MCMC methodology is needed to characterize conditions under which the difference between $\pi_P$ and $\pi_{\tilde{P}}$  is not too large. A common setup is when $P$ satisfies a \textit{contraction} property, that is
    \begin{align*}
        D_{TV}(P(x, A), P(y, A)) \leq (1-\alpha)\mathbbm{1}[x \neq y] \text{     } \forall x,y \in \Omega,\forall A \in \sigma(\Omega), \text{ and } 0 < \alpha \leq 1
    \end{align*}
    If the perturbed kernel $\tilde{P}$ satisfies
    \begin{align*}
        D_{TV}(P(x, A), \tilde{P}(x, A)) \leq \epsilon \text{     } \forall x \in \Omega, \forall A \in \sigma(\Omega) \text{ for some } 0 \leq \epsilon < \infty
    \end{align*}
    then $D_{TV}(\pi_P, \pi_{\tilde{P}}) \leq \frac{\epsilon}{\alpha}$ \citep{rudolf_perturbations_2024}. 
    
    In contrast, Theorem \ref{theorem:tv_bound} directly bounds the total variation distance between the targets themselves (rather than between the stationary distributions of (perturbed) Markov kernels). When Markov chains targeting $\pi_{Ord}$ and $\pi_{Ord}^{(\mathcal{H})}$ exist and are ergodic, it suggests that for any pair of Markov chains $(P, \tilde{P})$ where $P$ and $\tilde{P}$ explore the full and restricted space, respectively, the inequality
    
    \begin{align*}
        1-\sqrt{1-c\varepsilon_{\mathcal{H}}} \leq \frac{\epsilon}{\alpha}
    \end{align*} 
    
provides a practical lower bound on the gap between $\pi_P$ and $\pi_{\tilde{P}}$ (even when we cannot neatly evaluate $\epsilon$). This highlights how the quality of $\mathcal{H}$ directly influences the accuracy of any approximate kernel (as expressed through $\epsilon$) for order MCMC, and formalizes our intuition on the practical limitations of using a restricted support set.
\end{remark}

Even when there are theoretical guarantees for convergence, issues may arise in practice. \cite{viinikka_towards_2020} analyzed the restricted search space problem empirically and found that, even when including the most probable 60\% of possible parents for each node, the mean probability of including the true parents per node was only 90\% for a data set of size $n=200$ for a GDM with $p=20$ nodes. $1-\varepsilon_{\mathcal{H}}$ is approximately (though not identical) to $90\%^{20} = 12\%$, suggesting that even including a large portion of possible nodes in a restricted search space when $p$ is small could lead to large $\varepsilon_{\mathcal{H}}$. Perhaps even more importantly, it suggests a curse of dimensionality: even a slight error introduced per node yields large implications when increasing $p$.

\begin{figure}
    \centering
    \includegraphics[width=.95\linewidth]{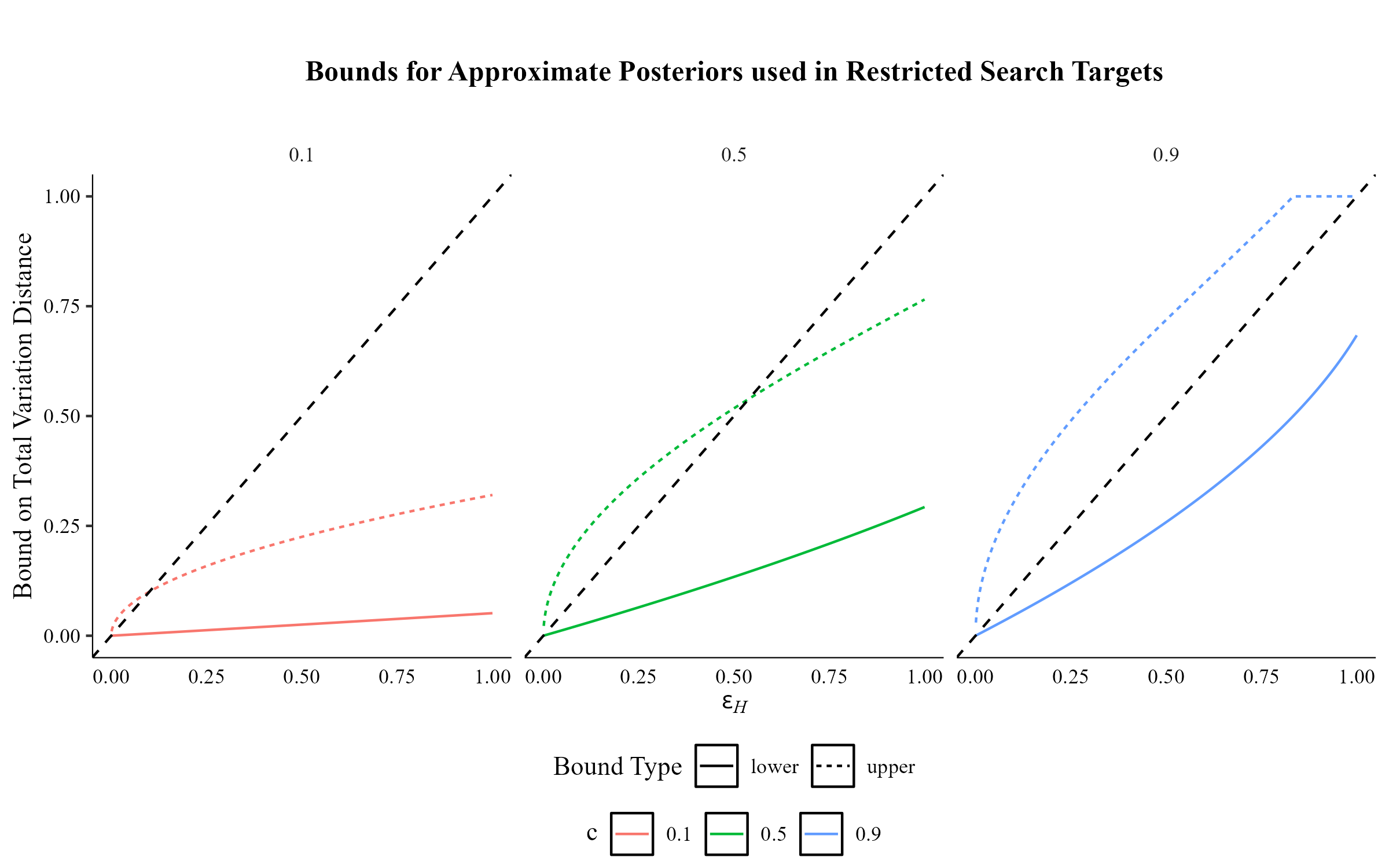}
    \caption{
    Upper and Lower bound of the Total Variation Distance for different thresholds across 3 different constants. Dashed line denotes $y=x$.}
    \label{fig:tv_bound}
\end{figure}

\subsubsection{Understanding \texorpdfstring{$c$}{TEXT}}\label{subsubsec:understand_c}
In Theorem \ref{theorem:tv_bound}, the lower and upper bounds rely on 2 constants $\varepsilon_{\mathcal{H}}$ and $c$. $\varepsilon_{\mathcal{H}}$ is intuitive to understand and is a potential link between consistency and Theorem \ref{theorem:tv_bound}. Lemmas \ref{theorem:min_c}-\ref{theorem:max_c} provide some insight into $c$. For these, for a restricted representation directed graph $\mathcal{H}\in\mathbb{D}_p$ with corresponding DAG set $\mathcal{G}_{\mathcal{H}} \subseteq \mathbb{G}_p$, we define the complementary DAG set as 
\begin{align*}
    \mathcal{G}_{\neg\mathcal{H}} := \mathbb{G}_p \setminus \mathcal{G}_{\mathcal{H}}.
\end{align*}
Here $\neg\mathcal{H}$ serves only as a symbolic index for this complementary DAG set (so $\neg\mathcal{H}$ does not correspond to a single element of $\mathbb{D}_p$). 
We denote the posterior distribution over orders restricted to this omitted set as $\pi_{Ord}^{(\neg\mathcal{H})}$.

\begin{lemma}\label{theorem:min_c}
   Let $\mathcal{H}\in\mathbb{D}_p$ be a directed-graph representation with associated DAG set $\mathcal{G}_{\mathcal{H}}\subseteq\mathbb{G}_p$, and let $\mathcal{G}_{\neg\mathcal{H}}=\mathbb{G}_p\setminus\mathcal{G}_{\mathcal{H}}$. Define $\pi_{Ord}^{(\mathcal{H})}, \pi_{Ord}^{(\neg\mathcal{H})}$ as the posterior distributions over orders by restricting the full posterior to $\mathcal{G}_{\mathcal{H}}$ and $\mathcal{G}_{\neg\mathcal{H}}$, respectively. Then $c$ assumes the minimal value $c=0$ if and only if $\pi_{Ord}^{(\mathcal{H})}(\prec | D)= \pi_{Ord}^{(\neg \mathcal{H})}(\prec | D)$ $\forall \prec \in \mathbb{S}^p$.
\end{lemma}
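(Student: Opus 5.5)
The plan is to make the definition of $c$ that comes out of the proof of Theorem~\ref{theorem:tv_bound} explicit, and then reduce the statement to an identity between distributions. The form of the bounds in \eqref{eq:lower_upper_bound} suggests that proof goes through the Hellinger distance. Writing $\mathrm{BC}(\mu,\nu)=\sum_{\prec}\sqrt{\mu(\prec)\nu(\prec)}$ for the Bhattacharyya coefficient, we have $H^2 = 1-\mathrm{BC}$. The standard inequalities $H^2 \le D_{TV} \le \sqrt{2}\,H$ then yield exactly \eqref{eq:lower_upper_bound} once we set
\begin{align*}
\mathrm{BC}\bigl(\pi_{Ord}^{(\mathcal{H})},\pi_{Ord}\bigr)=\sqrt{1-c\,\varepsilon_{\mathcal{H}}}, \qquad\text{i.e.}\qquad c=\frac{1-\mathrm{BC}\bigl(\pi_{Ord}^{(\mathcal{H})},\pi_{Ord}\bigr)^2}{\varepsilon_{\mathcal{H}}}.
\end{align*}
I will take this as the definition of $c$, matching Section~\ref{app:tv_bound}, and assume $\varepsilon_{\mathcal{H}}>0$. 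If $\varepsilon_{\mathcal{H}}=0$, then $\mathcal{G}_{\neg\mathcal{H}}$ carries no mass and $\pi_{Ord}^{(\neg\mathcal{H})}$ is undefined, so this degenerate case is set aside or treated by convention. Because $\mathrm{BC}\in[0,1]$, we get $c\ge 0$, so $c=0$ is indeed the minimal value.

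\textbf{Step 1 (mixture decomposition).} The key structural fact is that the full order posterior is a mixture of the two restricted order posteriors. Since $\mathbb{G}_p=\mathcal{G}_{\mathcal{H}}\sqcup\mathcal{G}_{\neg\mathcal{H}}$,
\begin{align*}
\pi_{Ord}(\prec\mid D)\propto\sum_{G\in\mathcal{G}_\prec}\pi_{DAG}(G\mid D)=\sum_{G\in\mathcal{G}_\prec\cap\mathcal{G}_{\mathcal{H}}}\pi_{DAG}(G\mid D)+\sum_{G\in\mathcal{G}_\prec\cap\mathcal{G}_{\neg\mathcal{H}}}\pi_{DAG}(G\mid D).
\end{align*}
Normalizing each piece gives $\pi_{Ord}=w\,\pi_{Ord}^{(\mathcal{H})}+(1-w)\,\pi_{Ord}^{(\neg\mathcal{H})}$. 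The weight $w$ is the mass of the $\mathcal{H}$ part under the order-level normalization. Care is needed because orders over-count DAGs compatible with several orders, so $w$ need not equal $1-\varepsilon_{\mathcal{H}}$ exactly. The only property I need is $w\in(0,1)$, which holds whenever both DAG sets carry positive posterior mass; this is guaranteed by $\varepsilon_{\mathcal{H}}>0$ together with $\varepsilon_{\mathcal{H}}<1$.

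\textbf{Step 2 (the equivalence).} By the definition above, $c=0$ iff $\mathrm{BC}(\pi_{Ord}^{(\mathcal{H})},\pi_{Ord})=1$. By the equality case of Cauchy--Schwarz, this holds iff $\pi_{Ord}^{(\mathcal{H})}=\pi_{Ord}$ pointwise on $\mathbb{S}^p$. Substituting the mixture from Step 1, $\pi_{Ord}^{(\mathcal{H})}=\pi_{Ord}$ is equivalent to
\begin{align*}
(1-w)\bigl(\pi_{Ord}^{(\mathcal{H})}-\pi_{Ord}^{(\neg\mathcal{H})}\bigr)=0.
\end{align*}
Since $w<1$, this holds iff $\pi_{Ord}^{(\mathcal{H})}=\pi_{Ord}^{(\neg\mathcal{H})}$ for all orders. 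Both directions are immediate from this chain of equivalences.

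\textbf{Main obstacle.} The hard part is pinning down exactly how $c$ is defined in Section~\ref{app:tv_bound}, and checking that Step 1's weight $w$ is strictly less than $1$ under the stated hypotheses. If the appendix instead defines $c$ through $D_{TV}$, or through an expression like $\varepsilon_{\mathcal{H}}\,\mathrm{something}(\pi_{Ord}^{(\mathcal{H})},\pi_{Ord}^{(\neg\mathcal{H})})$, the argument is the same: that quantity vanishes iff the two restricted order posteriors coincide, again via the mixture identity.
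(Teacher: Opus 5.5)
Your proof is correct and follows essentially the same route as the paper: identify the sum defining $c$ with the Bhattacharyya coefficient $\sum_{\prec}\sqrt{\pi_{Ord}^{(\mathcal{H})}(\prec\mid D)\,\pi_{Ord}(\prec\mid D)}$, use the equality case of Cauchy--Schwarz to get $\pi_{Ord}^{(\mathcal{H})}=\pi_{Ord}$, and then use the mixture decomposition to conclude $\pi_{Ord}^{(\mathcal{H})}=\pi_{Ord}^{(\neg\mathcal{H})}$. The only difference is that the paper takes the mixture weight to be exactly $1-\varepsilon_{\mathcal{H}}$, whereas you allow a general weight $w<1$ and set aside the case $\varepsilon_{\mathcal{H}}=0$; this is a harmless and slightly more careful variant, since the argument only uses $1-w\neq 0$.
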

\begin{proof}
    See Section \ref{app:min_c}.
\end{proof}

\begin{lemma}\label{theorem:max_c}
    Let $\mathcal{H}$, $\neg\mathcal{H}$, $\mathcal{G}_{\mathcal{H}}$,
    $\mathcal{G}_{\neg \mathcal{H}}$, $\pi_{Ord}^{(\mathcal{H})}, \pi_{Ord}^{(\neg\mathcal{H})}$ be as in lemma \ref{theorem:min_c}. Then $c$ is maximal (that is, $c=1$) if and only if the probability mass functions $\pi_{Ord}^{(\mathcal{H})}(\cdot | D)$ and $\pi_{Ord}^{(\neg \mathcal{H})}(\cdot | D)$ have disjoint supports.
\end{lemma}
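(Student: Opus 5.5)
The plan is to make the constant $c$ of Theorem \ref{theorem:tv_bound} explicit and then read off when it equals $1$. The form of the bounds \eqref{eq:lower_upper_bound} indicates that $c$ is defined through the Bhattacharyya coefficient. Write $\mathrm{BC}(\mu,\nu)=\sum_{\prec}\sqrt{\mu(\prec)\nu(\prec)}$. The bounds are exactly the standard Hellinger–TV inequalities $1-\mathrm{BC}\le D_{TV}\le\sqrt{2-2\,\mathrm{BC}}$ with $\mathrm{BC}(\pi_{Ord}^{(\mathcal{H})},\pi_{Ord})=\sqrt{1-c\varepsilon_{\mathcal{H}}}$. I will therefore take
\[
c \;=\; \frac{1-\mathrm{BC}(\pi_{Ord}^{(\mathcal{H})},\pi_{Ord})^2}{\varepsilon_{\mathcal{H}}},
\]
which is the definition I expect in Section \ref{app:tv_bound}. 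I assume $\varepsilon_{\mathcal{H}}>0$, since otherwise $c$ is immaterial.

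The first step is the mixture decomposition. Split the sum over $\mathcal{G}_\prec$ into DAGs in $\mathcal{G}_{\mathcal{H}}$ and DAGs in $\mathcal{G}_{\neg\mathcal{H}}$, and normalize each piece by its total mass, $1-\varepsilon_{\mathcal{H}}$ and $\varepsilon_{\mathcal{H}}$ respectively. This gives
\[
\pi_{Ord}(\prec|D)=(1-\varepsilon_{\mathcal{H}})\,a(\prec)+\varepsilon_{\mathcal{H}}\,b(\prec),
\]
where $a=\pi_{Ord}^{(\mathcal{H})}(\cdot|D)$ and $b=\pi_{Ord}^{(\neg\mathcal{H})}(\cdot|D)$.

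Next I compute the coefficient:
\[
\mathrm{BC}(a,\pi_{Ord})=\sum_\prec\sqrt{(1-\varepsilon_{\mathcal{H}})\,a(\prec)^2+\varepsilon_{\mathcal{H}}\,a(\prec)b(\prec)}.
\]
Each summand is at least $\sqrt{1-\varepsilon_{\mathcal{H}}}\,a(\prec)$, because $a(\prec)b(\prec)\ge 0$. Summing and using $\sum_\prec a(\prec)=1$ gives $\mathrm{BC}\ge\sqrt{1-\varepsilon_{\mathcal{H}}}$, hence $\mathrm{BC}^2\ge 1-\varepsilon_{\mathcal{H}}$ and so $c\le 1$. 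This also confirms that $c\in[0,1]$, which the theorem asserts, since $\mathrm{BC}\le 1$ gives $c\ge 0$.

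For the equivalence, note that $c=1$ holds if and only if $\mathrm{BC}=\sqrt{1-\varepsilon_{\mathcal{H}}}$, that is, if and only if equality holds in every summand. The terms are nonnegative and the square root is strictly increasing, so summand-wise equality is equivalent to $\varepsilon_{\mathcal{H}}\,a(\prec)b(\prec)=0$ for every $\prec$. With $\varepsilon_{\mathcal{H}}>0$ this means $a(\prec)b(\prec)=0$ for all $\prec$, which is exactly disjointness of the supports of $\pi_{Ord}^{(\mathcal{H})}$ and $\pi_{Ord}^{(\neg\mathcal{H})}$. Conversely, disjoint supports make every cross term vanish, so the lower bound is attained and $c=1$.

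The main obstacle is pinning down that $c$ is defined as above in the appendix proof. If the appendix instead defines $c$ through the Hellinger distance between $a$ and $b$, I would translate that definition via the mixture identity; the same strict-monotonicity equality argument then applies.
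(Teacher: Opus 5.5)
Your proof is correct and follows essentially the same route as the paper. You identify $c$ through $\sqrt{1-c\varepsilon_{\mathcal{H}}}=\sum_\prec\sqrt{\pi_{Ord}^{(\mathcal{H})}(\prec\mid D)\,\pi_{Ord}(\prec\mid D)}$, expand it with the mixture identity, bound each summand below by $\sqrt{1-\varepsilon_{\mathcal{H}}}\,\pi_{Ord}^{(\mathcal{H})}(\prec\mid D)$, and observe that $c=1$ forces termwise equality, i.e.\ $\varepsilon_{\mathcal{H}}\,\pi_{Ord}^{(\mathcal{H})}(\prec\mid D)\,\pi_{Ord}^{(\neg\mathcal{H})}(\prec\mid D)=0$ for every order. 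Your explicit assumption $\varepsilon_{\mathcal{H}}>0$ is a small point the paper leaves implicit; it is needed for $c$ to be uniquely defined, for $\pi_{Ord}^{(\neg\mathcal{H})}$ to exist, and to cancel $\varepsilon_{\mathcal{H}}$ in the last step.
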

\begin{proof}
    See Section \ref{app:max_c}.
\end{proof}

The above lemmas provide intuition that $c$ is related to the degree of disagreement between the restricted posterior and the posterior restricted to the region in the sample space that is being omitted. In practice, it is difficult to assess the degree of disagreement between $\pi_{Ord}^{(\mathcal{H})}$ and $\pi_{Ord}^{(\neg\mathcal{H})}$, as the construction omits $\pi_{Ord}^{(\neg\mathcal{H})}$. This implies that if it is not guaranteed or highly probable that $\lim_{n \to \infty}\varepsilon_{\mathcal{H}_n} = 0$ when constructing $\mathcal{H}_n$, the lower bound for noisy MCMC can be hard to quantify and can be large. This insight motivates the development of approaches that do not use a fixed restricted search space.

\subsection{Adaptive MCMC Approaches for Reducing Target Error and Prior Considerations in a Hierarchical Setup}
The error produced by hybrid algorithms for DAG sampling arises because the restricted search space $\mathcal{H}$ is fixed, and the chain samples from a restricted target $\pi_{Ord}^{(\mathcal{H})}$. An alternate strategy is to adapt $\mathcal{H}$ as the Markov chain progresses, so that the chain does not exclusively sample from a single restricted target with a fixed error but a mixture that could collectively lower the error.

To formalize this, we can treat the restricted search space as a random variable $\bm{H}$ taking values in $\mathbb{D}_p$. The following hierarchical model then characterizes hybrid algorithms with adaptation of the search space:
\begin{align}
    \bm{H} &\sim P_{\bm{H}}(\cdot)\label{eq:nested_model_h}\\
    \bm{G} &\sim P_{\bm{G} \mid \bm{H}}(\cdot \mid \mathcal{H})\label{eq:nested_model_graph}\\
    \bm{X}_1,...,\bm{X}_n &\overset{\text{iid}}{\sim} P_{\bm{X}\mid\Theta_{G}}(\cdot \mid \theta_{G})\label{eq:nested_model_data}
\end{align}
In this framework, standard hybrid order MCMC corresponds to choosing a point mass prior $P_{\bm{H}} = \delta_{\mathcal{H}^*}$, where $\mathcal{H}^*$ is the chosen restricted search space; that is, $P_{\bm{H}}(\mathcal{H}^*) = 1$ and $P_{\bm{H}}(\mathcal{H}') = 0$ for all $\mathcal{H}' \neq \mathcal{H}^*$. Sampling from the full space likewise corresponds to $P_{\bm{H}} = \delta_{K_p}$, where $K_p$ is the complete graph (i.e., its adjacency matrix $A$ satisfies $A_{ij}=1$ when $j \neq i$, and $A_{ij}=0$ when $j=i$). 

The original order MCMC paper \citep{friedman_being_2003} effectively uses another degenerate prior, but of a different kind: instead of restricting the allowable edges through a graph $\mathcal{H}$, it restricts the allowable \emph{parent sets} by imposing a fixed maximal in-degree constraint. This constraint defines a subset of $\mathbb{G}_p$ that cannot in general be represented by a single directed graph $\mathcal{H}$. However, we can extend the hierarchical formulation to encapsulate this case as well by including a parameter $\bm{d}$ that represents the degree. Then, the priors on $\bm{d}$ and $\bm{H}$ are also degenerate, where $P_{\bm{H}}(\mathcal{H}) = \delta_{K_p}$, and $P_{\bm{d}}(d) = \delta_{d^*}$, where $d^* \leq p$ is the prespecified degree constraint.

We instead propose a new class of priors that move beyond this fixed support, allowing for the potential to reduce fixed search space error. To accommodate flexible priors that allow for the restricted search space and the corresponding scores to change during sampling, we turn to transdimensional MCMC methods. Specifically, we propose to extend BDMCMC to $(\mathcal{H}, \prec, G)$ to reduce the error incurred when using a restricted order space. This proves to be difficult given that orders can map to multiple graphs and that graphs can map to multiple orders. To avoid this issue, in Section \ref{sec:brood_intro}, we propose building a Markov chain that uses a mixture transition kernel: one component samples $\mathcal{H} \mid \prec$ via a birth-death transition kernel and one component updates $\prec \mid \mathcal{H}$ according to the restricted search order transition kernel. The graphs can then be sampled directly from $G \mid \prec, \mathcal{H}$. This approach is validated using the methodology developed by \cite{geyer_simulation_1994}.

\section{Birth-Death Restricted Search MCMC}\label{sec:brood_intro}

We aim to enable transitions between restricted search spaces of different sizes, thereby limiting the fixed-support error. This requires a class of priors on $\bm{H}$ that (1) are non-degenerate and (2) allow for computationally feasible sampling of the posterior. We draw on transdimensional MCMC literature to inform this class, specifically the balance conditions underlying birth-death MCMC, and a related mixture kernel method.

\subsection{Birth-Death Detailed Balance and transdimensional Target Recovery}\label{subsec:bdmcmc}
To create a BDMCMC algorithm for restricted search space MCMC, we must define the birth and death rates. The standard BDMCMC literature defines birth and death rates based on satisfying detailed balance conditions on a transdimensional posterior measure $\mu$ (\cite{preston_spatial_1977}, \cite{dobra_loglinear_2018}). For a birth-death process to have stationary probability $\mu$ on measurable space $(\Omega, \mathcal{F})$,
\begin{align}
    \int_U \beta(z)d\mu_m(z) &= \int \delta(y) K_{\delta}^{(m+1)}(y, U) d\mu_{m+1}(y) \text{ for } m \geq 0, U \in \mathcal{F}_m\label{eq:det_bal_1} \\
    \int_V \delta(z) d\mu_{m+1}(z) &= \int \beta(x) K_{\beta}^{(m)}(x, V) d \mu_m(x) \text{ for } m \geq 0, V \in \mathcal{F}_{m+1}\label{eq:det_bal_2}
\end{align}
where $\Omega_m$, $\mathcal{F}_m$ denote the support and possible states of the process with dimension $m$, respectively, $\mu_m$ is $\mu$ restricted to $(\Omega_m, \mathcal{F}_m)$, $\beta, \delta$ are the birth and death rates and $K_\beta^m:\Omega_{m}\times \mathcal{F}_{m+1} \to \mathbb{R}^{+}, K_\delta^m:\Omega_{m} \times \mathcal{F}_{m-1} \to \mathbb{R}^{+}$ are transition kernels describing the jump processes from dimension $m$. To sample from $\mu$ by BDMCMC, a transition is proposed by adding or removing a point from the space according to the transition kernels defined by $\beta, \delta$ that satisfy \eqref{eq:det_bal_1} and \eqref{eq:det_bal_2}. All proposals are accepted, but each sample is weighted by the \textit{waiting time}, $\frac{1}{\beta(x) + \delta(x)}$.

Alternatively, the problem can be flipped: choose birth and death rates $\beta, \delta$ that define a valid birth-death process with a unique solution. Then, the limit of this jump process is $\mu$, the probability measure that satisfies \eqref{eq:det_bal_1} and \eqref{eq:det_bal_2}. \cite{preston_spatial_1977} found conditions on $\beta, \delta$ that allow $\mu$ to be recovered through a recursive formula. We use this approach to define a class of transdimensional distributions on $(\bm{H}, \bm{O})$ that can be sampled by a birth-death process for restricted search MCMC:

\begin{theorem}\label{theorem:distribution_class}
    Let $\mathcal{H} \in \mathbb{D}_p$, and $\prec \in \mathbb{S}^p$. Define $B_e(\mathcal{H}, \prec)$ and $D_e(\mathcal{H}, \prec)$ as follows:
    
    \begin{align}
        B_e(\mathcal{H}, \prec) &\propto \frac{\frac{1}{|\mathcal{G}_{\mathcal{H}^{+e}} |}\pi_{DAG}(\mathcal{H}^{+e} \mid \prec, D)}{\frac{1}{|\mathcal{G}_{\mathcal{H}}|} \pi_{DAG}(\mathcal{H} \mid \prec, D)} = \frac{1}{2}\frac{\pi_{DAG}(\mathcal{H}^{+e} \mid \prec, D)}{\pi_{DAG}(\mathcal{H} \mid \prec,D)}\label{eq:birth_rate}  \\
        D_e(\mathcal{H}, \prec) &\propto \frac{\frac{c^*}{|\mathcal{G}_{\mathcal{H}^{-e}}|}\pi_{DAG}(\mathcal{H}^{-e}\mid \prec, D)}{\frac{1}{|\mathcal{G}_{\mathcal{H}}|} \pi_{DAG}(\mathcal{H} \mid \prec, D)}=2c^*\frac{ \pi_{DAG}(\mathcal{H}^{-e} \mid \prec, D)}{\pi_{DAG}(\mathcal{H} \mid \prec, D)}\label{eq:death_rate}
    \end{align}
    where $c^* \in (0, 1]$. Let $\beta(\mathcal{H}, \prec) = \sum_{e \notin E_\mathcal{H}} B_e(\mathcal{H}, \prec)$, and $\delta(\mathcal{H}, \prec) = \sum_{e \in E_\mathcal{H}} D_e(\mathcal{H}, \prec)$. Then, a birth-death process defined by $\beta, \delta$ produces a unique solution whose limit is a tractable probability distribution.
\end{theorem}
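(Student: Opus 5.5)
Fix $\prec$ throughout and regard the birth-death process as a continuous-time Markov jump process on the finite set $\mathbb{D}_p$. This set is in bijection with subsets of the $p(p-1)$ possible directed edges. The dimension of a state $\mathcal{H}$ is $m=|E_{\mathcal{H}}|$. A birth of $e$ moves $\mathcal{H}\to\mathcal{H}^{+e}$ at rate $B_e(\mathcal{H},\prec)$, and a death moves $\mathcal{H}\to\mathcal{H}^{-e}$ at rate $D_e(\mathcal{H},\prec)$.

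The plan is to use the recursive characterization of \cite{preston_spatial_1977}. We guess a candidate invariant measure $\mu$ and verify the detailed balance conditions \eqref{eq:det_bal_1}--\eqref{eq:det_bal_2} edge by edge. Because the state space is discrete, these reduce to the pairwise identity
\begin{align*}
\mu(\mathcal{H})\,B_e(\mathcal{H},\prec)=\mu(\mathcal{H}^{+e})\,D_e(\mathcal{H}^{+e},\prec)\qquad\text{for all } \mathcal{H},\ e\notin E_{\mathcal{H}}.
\end{align*}
The form of the rates suggests the candidate
\begin{align*}
\mu(\mathcal{H}\mid\prec)\propto (c^*)^{-|E_{\mathcal{H}}|}\,\frac{\pi_{DAG}(\mathcal{H}\mid\prec,D)}{|\mathcal{G}_{\mathcal{H}}|},
\end{align*}
or equivalently the reciprocal weighting in $c^*$ depending on convention. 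We check this by direct substitution. The ratio $B_e(\mathcal{H})/D_e(\mathcal{H}^{+e})$ equals
\begin{align*}
\frac{\pi(\mathcal{H}^{+e})/|\mathcal{G}_{\mathcal{H}^{+e}}|}{c^*\,\pi(\mathcal{H})/|\mathcal{G}_{\mathcal{H}}|},
\end{align*}
which is exactly $\mu(\mathcal{H}^{+e})/\mu(\mathcal{H})$ under this candidate. The proportionality constants hidden in ``$\propto$'' must be common to all births and to all deaths. We will also record the simplification $|\mathcal{G}_{\mathcal{H}^{+e}}|/|\mathcal{G}_{\mathcal{H}}|=2$ that the statement uses. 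This should be read with $|\mathcal{G}_\mathcal{H}|$ counting admissible parent configurations given $\prec$, which double when an order-compatible edge is added. We note that for an edge incompatible with $\prec$, $\pi_{DAG}(\mathcal{H}^{+e}\mid\prec,D)=\pi_{DAG}(\mathcal{H}\mid\prec,D)$, so the identity still holds with the factor absorbed consistently.

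Next we establish that the process is well defined with a unique solution. We need three facts. First, all rates are finite, since the state space is finite and every $\pi_{DAG}(\mathcal{H}\mid\prec,D)>0$; the empty graph is in every $\mathcal{G}_\mathcal{H}$ and has positive posterior mass. Second, the total exit rate $\beta+\delta$ is bounded, so there is no explosion. Third, the generator is irreducible: from any $\mathcal{H}$ we reach the empty graph by successive deaths with positive rates (using $c^*>0$), and we reach any other $\mathcal{H}'$ by births. A finite irreducible continuous-time chain has a unique stationary distribution, and its transition probabilities converge to it. Together with detailed balance, this identifies the limit as $\mu$.

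Finally, we argue tractability. Preston's recursion expresses $\mu_{m+1}$ in terms of $\mu_m$ through the ratio $\beta/\delta$, and the normalizing constant is a finite sum over $\mathbb{D}_p$. More usefully, $\mu$ is known up to a constant in closed form via the decomposable restricted order score $R_{\mathcal{H}}(\prec\mid D)$, which is proportional to $\pi_{DAG}(\mathcal{H}\mid\prec,D)$ and factorizes over nodes. Coupling this with $\pi_{Ord}^{(\mathcal{H})}$ then gives a joint target on $(\mathcal{H},\prec)$.

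I expect the main obstacle to be the consistency check. The detailed balance identity must hold for every pair $(\mathcal{H},\mathcal{H}^{+e})$, so the unspecified proportionality constants in \eqref{eq:birth_rate}--\eqref{eq:death_rate} and the counting factor $|\mathcal{G}_{\mathcal{H}^{\pm e}}|/|\mathcal{G}_{\mathcal{H}}|$ must be pinned down carefully, including for order-incompatible edges. Otherwise the Kolmogorov cycle criterion could fail. My first attempt is to verify Kolmogorov's criterion on the elementary squares $\mathcal{H}\to\mathcal{H}^{+e}\to\mathcal{H}^{+e+f}\to\mathcal{H}^{+f}\to\mathcal{H}$. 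This follows once the rate ratios are written as ratios of a single potential.
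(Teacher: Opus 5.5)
Your overall architecture is sound, and it is the discrete-state version of the paper's argument. The paper verifies Preston's consistency condition \eqref{eq:balanced_rates} for two edges $e_1,e_2$, which is exactly your Kolmogorov criterion on elementary squares, only carried out on a parameter-augmented space and then integrated out. Your explicit finiteness and irreducibility argument actually gives more than the paper, which just cites Preston's Proposition 5.1.

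The problem is the one computation that carries the proof. Write $\phi(\mathcal{H})=\pi_{DAG}(\mathcal{H}\mid\prec,D)/|\mathcal{G}_{\mathcal{H}}|$, and let $\kappa_B,\kappa_D$ be the global constants hidden in $\propto$. As stated, both rates are normalised by the potential of the current state: $B_e(\mathcal{H})=\kappa_B\,\phi(\mathcal{H}^{+e})/\phi(\mathcal{H})$ and $D_e(\mathcal{H}^{+e})=\kappa_D\,c^*\phi(\mathcal{H})/\phi(\mathcal{H}^{+e})$. Hence
\[
\frac{B_e(\mathcal{H})}{D_e(\mathcal{H}^{+e})}=\frac{\kappa_B}{\kappa_D\,c^*}\Bigl(\frac{\phi(\mathcal{H}^{+e})}{\phi(\mathcal{H})}\Bigr)^{2}.
\]
Your expression $\phi(\mathcal{H}^{+e})/(c^*\phi(\mathcal{H}))$ is the ratio of the two numerators only; it ignores that the two rates carry different normalisations, $\phi(\mathcal{H})$ and $\phi(\mathcal{H}^{+e})$. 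So your candidate $\mu\propto(c^*)^{-|E_{\mathcal{H}}|}\phi$ does not satisfy detailed balance. Already on the two-state chain $\{\mathcal{H},\mathcal{H}^{+e}\}$, the stationary odds are the squared ratio above.

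The reversible measure for the stated rates is $\mu(\mathcal{H}\mid\prec)\propto\bigl(\kappa_B/(\kappa_D c^*)\bigr)^{|E_{\mathcal{H}}|}\phi(\mathcal{H})^{2}$. Taking the simplified right-hand sides of \eqref{eq:birth_rate}--\eqref{eq:death_rate} literally, this is $(4c^*)^{-|E_{\mathcal{H}}|}\pi_{DAG}(\mathcal{H}\mid\prec,D)^{2}$. It needs no separate treatment of $\prec$-incompatible edges, since the formula is uniform in $e$.

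With this correction the rest of your argument goes through and proves the theorem as stated, because the theorem only asserts a unique limit known in closed form. What it does not give is the law $(2c^*)^{-|E_{\mathcal{H}}|}\pi_{DAG}(\mathcal{H}\mid\prec,D)$. That law is what your candidate reduces to under the doubling convention, and it is what Corollary~\ref{theorem:posterior_form} states.

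That law is the stationary distribution of the un-normalised rates $B_e(\mathcal{H})\propto\phi(\mathcal{H}^{+e})$ and $D_e(\mathcal{H})\propto c^*\phi(\mathcal{H}^{-e})$. These are the rates the paper actually uses when it forms $\gamma(\mathcal{H},e)=\frac{1}{2c^*}\pi_{DAG}(\mathcal{H}^{+e}\mid\prec,D)/\pi_{DAG}(\mathcal{H}\mid\prec,D)$. The paper then divides both rates by $\phi(\mathcal{H})$ and asserts that this changes nothing.

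For a continuous-time process it does change the answer. Dividing every exit rate of a state $\mathcal{H}$ by $h(\mathcal{H})$ leaves the jump chain unchanged but multiplies the holding time, and hence the stationary mass, by $h(\mathcal{H})$. That is exactly the extra factor of $\phi$ you are missing. So fix which rates you are analysing and match the potential to them; as written, your verification step is false for the rates in the statement.
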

\begin{proof}
    See Section \ref{app:distribution_class}.
\end{proof}

These rates are similar to those of \cite{mohammadi_bayesian_2015} and \cite{dobra_loglinear_2018}. Since those BDMCMC samplers operate directly on the graph space, the rates are defined by ratios of graph probabilities; the rates we define in Theorem \ref{theorem:distribution_class} are ratios of mean graph probability within the search spaces (conditional on the topological order). This elicits the following class of search space posteriors:

\begin{corollary}\label{theorem:posterior_form}
    Construct a birth-death Markov process $(\bm{H}, \bm{O}) \in \mathbb{D}_p \times \mathbb{S}^p$ with the birth and death rates specified in Theorem \ref{theorem:distribution_class}. Then, the marginal of $\mathcal{H} \in \mathbb{D}_p$ of the stationary distribution admits the following form:
    \begin{align}
        \mu_{Space}^{(c^*)}(\mathcal{H}) \propto (2c^{*})^{-|E_{\mathcal{H}}|}\sum_{\prec \in \mathbb{S}^p} \pi_{DAG}(\mathcal{H} | \prec, D).\label{eq:posterior_form}
    \end{align}
\end{corollary}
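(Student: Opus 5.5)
The plan is to exhibit the stationary measure explicitly through the detailed balance equations \eqref{eq:det_bal_1}--\eqref{eq:det_bal_2}, and then marginalize over orders. The state space is finite and the dimension of $\mathcal{H}$ is $m=|E_{\mathcal{H}}|$. A birth adds one edge $e\notin E_{\mathcal{H}}$ with probability $B_e/\beta$, and a death removes one edge $e\in E_{\mathcal{H}}$ with probability $D_e/\delta$. For fixed $\prec$, the balance conditions therefore reduce to the pairwise condition
\begin{align*}
\mu(\mathcal{H},\prec)\,B_e(\mathcal{H},\prec) = \mu(\mathcal{H}^{+e},\prec)\,D_e(\mathcal{H}^{+e},\prec)
\end{align*}
for every $\mathcal{H}$ and every $e\notin E_{\mathcal{H}}$. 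This is Preston's recursive characterization, as used in Theorem \ref{theorem:distribution_class}.

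First, I will use the explicit forms \eqref{eq:birth_rate}--\eqref{eq:death_rate}, with a common proportionality constant (say $\kappa$) shared by the birth and death rates. Writing $\mathcal{H}'=\mathcal{H}^{+e}$, this gives
\begin{align*}
\frac{\mu(\mathcal{H}',\prec)}{\mu(\mathcal{H},\prec)} = \frac{B_e(\mathcal{H},\prec)}{D_e(\mathcal{H}',\prec)} = \frac{\tfrac12\,\pi_{DAG}(\mathcal{H}'\mid\prec,D)/\pi_{DAG}(\mathcal{H}\mid\prec,D)}{2c^*\,\pi_{DAG}(\mathcal{H}\mid\prec,D)/\pi_{DAG}(\mathcal{H}'\mid\prec,D)}.
\end{align*}
Second, I will check that the ansatz $\mu(\mathcal{H},\prec)\propto a^{|E_{\mathcal{H}}|}\,\pi_{DAG}(\mathcal{H}\mid\prec,D)$, with $a$ a constant to be determined, is consistent with this recursion. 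I then solve for $a$ and show that it reduces to $(2c^*)^{-1}$, possibly after absorbing the factor $|\mathcal{G}_{\mathcal{H}}|$ that appears in the unsimplified form of \eqref{eq:birth_rate}. This step may require rechecking the direction of the ratio. Because the recursion only ever multiplies by one edge factor at a time, path independence is automatic: every path from the empty graph to $\mathcal{H}$ has exactly $|E_{\mathcal{H}}|$ births, and the ratios of $\pi_{DAG}$ terms telescope. Uniqueness will then follow from irreducibility on the finite state space, since every $\mathcal{H}$ is reachable from the empty graph and all rates are positive, given $c^*>0$ and positive scores.

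Third, I marginalize over $\prec\in\mathbb{S}^p$, which yields \eqref{eq:posterior_form}. The $(2c^*)^{-|E_{\mathcal{H}}|}$ factor does not depend on $\prec$, so it pulls out of the sum. Mixing over orders does not alter this, since the order-update component of the kernel is a separate kernel in the mixture.

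The main obstacle is the bookkeeping in the second step. The rates are only given up to proportionality, and the paper simplifies $|\mathcal{G}_{\mathcal{H}^{+e}}|/|\mathcal{G}_{\mathcal{H}}|$ to $2$. I have to make sure that the birth and death proportionality constants agree and that the telescoping ratios collapse exactly to one factor of $(2c^*)^{-1}$ per edge. If they do not, I would fall back to defining $\mu$ directly through the unsimplified ratios $\frac{1}{|\mathcal{G}_{\mathcal{H}}|}\pi_{DAG}(\mathcal{H}\mid\prec,D)$, weighted by $c^{*}$ per edge, and then simplify at the end.
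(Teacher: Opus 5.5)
Your second step cannot succeed, and the problem is not the direction of the ratio. The ratio you display equals
\[
\frac{B_e(\mathcal{H},\prec)}{D_e(\mathcal{H}^{+e},\prec)}=\frac{1}{4c^*}\left(\frac{\pi_{DAG}(\mathcal{H}^{+e}\mid\prec,D)}{\pi_{DAG}(\mathcal{H}\mid\prec,D)}\right)^{2}.
\]
The ansatz $a^{|E_{\mathcal{H}}|}\pi_{DAG}(\mathcal{H}\mid\prec,D)$ instead gives the ratio $a\,\pi_{DAG}(\mathcal{H}^{+e}\mid\prec,D)/\pi_{DAG}(\mathcal{H}\mid\prec,D)$. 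No constant $a$ matches unless that posterior ratio is the same for every $\mathcal{H}$ and $e$. Telescoping your own recursion gives $\mu(\mathcal{H},\prec)\propto(4c^*)^{-|E_{\mathcal{H}}|}\pi_{DAG}(\mathcal{H}\mid\prec,D)^{2}$, not the claimed form. The cause is that the displayed rates in Theorem~\ref{theorem:distribution_class} are the un-normalised rates \eqref{eq:birth_rate_edge}--\eqref{eq:death_rate_edge} divided by the state-dependent factor $h(\mathcal{H})=\pi_{DAG}(\mathcal{H}\mid\prec,D)/|\mathcal{G}_{\mathcal{H}}|$. Dividing every exit rate of a state by $h$ leaves the jump chain unchanged but multiplies holding times by $h$. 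It therefore multiplies the stationary density by $h$; it is not a harmless change of proportionality constant.

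The missing idea is to balance against the un-normalised rates, $B_e(\mathcal{H},\prec)\propto\pi_{DAG}(\mathcal{H}^{+e}\mid\prec,D)/|\mathcal{G}_{\mathcal{H}^{+e}}|$ and $D_e(\mathcal{H},\prec)\propto c^*\pi_{DAG}(\mathcal{H}^{-e}\mid\prec,D)/|\mathcal{G}_{\mathcal{H}^{-e}}|$. This is what the paper's proof does. With these rates, $\gamma=B_e(\mathcal{H},\prec)/D_e(\mathcal{H}^{+e},\prec)=\frac{|\mathcal{G}_{\mathcal{H}}|}{c^*|\mathcal{G}_{\mathcal{H}^{+e}}|}\frac{\pi_{DAG}(\mathcal{H}^{+e}\mid\prec,D)}{\pi_{DAG}(\mathcal{H}\mid\prec,D)}$. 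Using the paper's identification $|\mathcal{G}_{\mathcal{H}^{+e}}|=2|\mathcal{G}_{\mathcal{H}}|$, this is $\frac{1}{2c^*}\frac{\pi_{DAG}(\mathcal{H}^{+e}\mid\prec,D)}{\pi_{DAG}(\mathcal{H}\mid\prec,D)}$, which is linear rather than squared. The recursion \eqref{eq:recursive_definition} along $\mathcal{H}_0\subset\cdots\subset\mathcal{H}_k=\mathcal{H}$ then telescopes to $(2c^*)^{-k}\pi_{DAG}(\mathcal{H}\mid\prec,D)$. Your fallback gestures toward this, but ``defining $\mu$ directly'' proves nothing. Stationarity must be checked against the rates that define the process, so what has to change is the rates you balance, not the candidate $\mu$.

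Separately, your uniqueness claim is too strong. The birth-death moves never change $\prec$, so irreducibility holds only on each fibre $\mathbb{D}_p\times\{\prec\}$, and the $\mathcal{H}$-marginal depends on how mass is split across orders. The unweighted sum in \eqref{eq:posterior_form} corresponds to the specific normalisation $f(\mathcal{H}_0,\prec)=\pi_{DAG}(\mathcal{H}_0\mid\prec,D)$ chosen in the paper's proof. Your remark that the order-update kernel ``does not alter this'' does not supply that choice.
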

\begin{proof}
    See Section \ref{app:posterior_form}.
\end{proof}

The class of posteriors defined by the rates in Theorem \ref{theorem:distribution_class} induces a class of priors that assign mass to more than one value in the set of search spaces. When combined with the restricted order MCMC, this can allow for sampling on $(\bm{H}, \bm{O})$ without assuming a fixed $\mathcal{H}$.

\subsection{Mixture Kernels in Spatial Point Processes}\label{subsec:mixture_kernel}

\cite{geyer_simulation_1994} developed a framework that is a generalization of the BDMCMC sampler. They consider a Poisson process $\nu$ on $(\Omega, \mathcal{F})$, with the intensity measure $\lambda$ on $(\Lambda, \mathcal{B})$, and want to sample realizations of a point process with distribution $\pi$ on $(\Omega, \mathcal{F})$, and density $g = d\pi/d\nu$ with respect to the Poisson process. To this end, they define the following MH algorithm that is a mixture of 2 transition kernels:
\begin{align}
    Q_\ell(F \mid x) &= (1-\ell)Q_0(F \mid x) + \ell Q_1(F \mid x) \quad 0 \leq \ell \leq 1\label{eq:mixture_kern}
\end{align}
where $Q_0$ is a standard MH procedure concentrated on $H_m = \Omega_m \cap H$, where $H=\{g > 0\}$, and $Q_1$ is concentrated on $H_{m-1} \cup H_m \cup H_{m+1}$. $Q_1$ is constructed such that with probability $q(x)$, a new point $\xi$ is generated from some random density $\mathsf{b}(x, \xi)$ with respect to $\lambda(d\xi)$, and with probability $1-q(x)$, a random point $\eta \in x$ is removed with some probability $\mathsf{D}(x/\eta; \eta)$ (or if $m=0$, we do nothing). For $x \in H_m$, $Q_1$ is defined as:
\begin{align*}
    Q_1(F_{m+1} \mid x) &= q(x) \int_{x \cup \xi \in F_{m+1}} \mathsf{b}(x, \xi) A_1(x \cup \xi \mid x)\lambda(d\xi)\\
    Q_1(F_m \mid x) &= \mathbbm{1}[x \in F_m]\Big\{q(x)\int \mathsf{b}(x, \xi) [1 - A_1(x \cup \xi \mid x)]\lambda(d\xi)\\
    &\quad + (1-q(x))\sum_{\eta \in x} \mathsf{D}(x / \eta; \eta)[1-A_1(x / \eta \mid x)]\Big\}\\
    Q_1(F_{m-1} \mid x) &= (1-q(x))\sum_{\eta \in x} \mathbbm{1}[x / \eta \in F_{m-1}]\mathsf{D}(x / \eta; \eta)A_1(x / \eta \mid x)
\end{align*}
where $A_1$ is the \textit{acceptance rate} of the transition kernel. Time reversibility is ensured with the following acceptance rates.
\begin{align*}
    A_1(x \mid x \cup \xi) &= \begin{cases}
        \min\{1, 1/r(x, \xi)\} & \text{if } x \cup \xi \in H,\\
        0 & \text{otherwise}
    \end{cases}\\
    A_1(x \cup \xi \mid x) &= \begin{cases}
        \min\{1, r(x, \xi)\} & \text{if } x \cup \xi \in H,\\
        0 & \text{otherwise}
    \end{cases}
\end{align*}
where
\begin{align}
    r(x, \xi) &= \frac{g(x \cup \xi)}{g(x)}\frac{1-q(x \cup \xi)}{q(x)}\frac{\mathsf{D}(x; \xi)}{\mathsf{b}(x, \xi)} \quad \text{if } x \cup \xi \in H \label{eq:acceptance_ratio}
\end{align}

While the continuous-time birth-death process and the discrete-time mixture kernel framework of \cite{geyer_simulation_1994} are often treated as distinct approaches to transdimensional sampling, they can be unified. We formalize this connection in Lemma \ref{theorem:mixture_kernel_bdmcmc}, which demonstrates that the Metropolis-Hastings acceptance ratio for a Metropolized jump process is exactly determined by the ratio of the BDMCMC waiting times. This result ensures that a posterior that is compatible with a birth-death process can be sampled using a mixture kernel process (and thus the birth-death rates defined in Theorem \ref{theorem:distribution_class} can be seamlessly integrated into a mixture kernel MCMC architecture).

\begin{lemma}\label{theorem:mixture_kernel_bdmcmc}
    Let $\mu$ be a transdimensional probability measure that is compatible with a birth-death process defined by $\beta, \delta$. Then $r(x, \xi)$, the acceptance ratio in the mixture kernel sampler of \cite{geyer_simulation_1994}, is given by the ratio of waiting times $w(x \cup \xi), w(x)$ of the jump process:
    \begin{align*}
        r(x, \xi) &= \frac{\frac{1}{\beta(x \cup \xi) + \delta(x \cup \xi)}}{\frac{1}{\beta(x) + \delta(x)}} = \frac{w(x \cup \xi)}{w(x)}
    \end{align*}
\end{lemma}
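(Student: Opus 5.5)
The plan is to make the statement precise by fixing the proposal ingredients of the mixture kernel $Q_1$ so that they are the natural Metropolized versions of the jump process. Given birth and death rates $\beta,\delta$ with per-point components, I choose:
\begin{align*}
q(x) &= \frac{\beta(x)}{\beta(x)+\delta(x)}, & \mathsf{b}(x,\xi) &= \frac{B_\xi(x)}{\beta(x)}, & \mathsf{D}(x;\xi) &= \frac{D_\xi(x\cup\xi)}{\delta(x\cup\xi)}.
\end{align*}
Here $B_\xi(x)$ is the birth-rate density of adding $\xi$ at $x$, with $\beta(x)=\int B_\xi(x)\lambda(d\xi)$, or a sum in the discrete edge setting of Theorem \ref{theorem:distribution_class}. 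Likewise $D_\xi(x\cup\xi)$ is the rate of killing $\xi$ from $x\cup\xi$, with $\delta(y)=\sum_{\eta\in y}D_\eta(y)$. In words, the sampler first chooses birth or death with the probabilities the jump process would use, and then chooses which point, again proportionally to the rates. These choices are valid probability distributions by the definitions of $\beta$ and $\delta$.

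The second step is to extract a pointwise balance identity from the integral detailed balance conditions \eqref{eq:det_bal_1}--\eqref{eq:det_bal_2}. This is what ``$\mu$ compatible with the birth-death process defined by $\beta,\delta$'' should mean. In the discrete setting relevant here ($\mathcal{H}\in\mathbb{D}_p$, edges as points), I take $U$ and $V$ to be singletons. Then \eqref{eq:det_bal_2}, with the kernel $K_\beta^{(m)}(x,\cdot)$ placing mass $B_\xi(x)/\beta(x)$ on $x\cup\xi$, gives
\[
g(x)\,B_\xi(x) = g(x\cup\xi)\,D_\xi(x\cup\xi).
\]
Here $g=d\mu/d\nu$. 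In the general point-process case, the same identity holds $\lambda\otimes\nu$-a.e. It follows by writing both sides of \eqref{eq:det_bal_2} as integrals against $\nu_{m+1}$ and identifying Radon--Nikodym densities.

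The third step is a direct substitution into \eqref{eq:acceptance_ratio}:
\[
r(x,\xi)=\frac{g(x\cup\xi)}{g(x)}\cdot\frac{\delta(x\cup\xi)/(\beta+\delta)(x\cup\xi)}{\beta(x)/(\beta+\delta)(x)}\cdot\frac{D_\xi(x\cup\xi)/\delta(x\cup\xi)}{B_\xi(x)/\beta(x)}.
\]
The $\delta(x\cup\xi)$ and $\beta(x)$ factors cancel. The pointwise balance identity then turns $g(x\cup\xi)D_\xi(x\cup\xi)/\bigl(g(x)B_\xi(x)\bigr)$ into $1$. What remains is $(\beta+\delta)(x)/(\beta+\delta)(x\cup\xi)=w(x\cup\xi)/w(x)$, which is the claim.

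I expect the main obstacle to be the second step. One difficulty is justifying the passage from the integrated balance equations to the pointwise identity in the non-discrete case. The other is making sure the kernels $K_\beta,K_\delta$ are the rate-proportional ones, which is implicit in ``compatible''. Edge cases also need care: states with $x\cup\xi\notin H$, where $A_1=0$ and $r$ is undefined, are excluded by the definition of $r$. States with $\beta(x)=0$ or $\delta(x\cup\xi)=0$ must be handled by the convention that the corresponding move is never proposed.
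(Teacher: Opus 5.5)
\textbf{There is a genuine gap in your step 2.} Your steps 1 and 3 coincide with the paper's proof: you use the same $q$, $\mathsf{b}$, $\mathsf{D}$, and your normalization $\mathsf{D}(x;\xi)=D_\xi(x\cup\xi)/\delta(x\cup\xi)$ is the one the paper's computation actually uses. Step 2, however, fails for more than measure-theoretic reasons.

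Taking $V=\{y\}$ in \eqref{eq:det_bal_2} does not isolate a single transition, because every $y\setminus\eta$ with $\eta\in y$ feeds into $y$. So you only obtain
\[
\sum_{\eta\in y}\bigl(g(y)D_\eta(y)-g(y\setminus\eta)B_\eta(y\setminus\eta)\bigr)=0 .
\]
Likewise, $U=\{x\}$ in \eqref{eq:det_bal_1} gives only $\sum_{\xi\notin x}\bigl(g(x)B_\xi(x)-g(x\cup\xi)D_\xi(x\cup\xi)\bigr)=0$. Together these say that the net flux on the edges between levels $m$ and $m+1$ sums to zero at every vertex. That does not force the flux to vanish edge by edge once the bipartite level graph has a cycle. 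The continuous case behaves the same way: identifying densities on unordered configurations in $\Omega_{m+1}$ only yields the identity symmetrized over which point is the newborn.

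Here is a concrete counterexample. Take three points $a,b,c$ and let $\mu$ be uniform on the eight configurations. Set all rates equal to $1$, except $B_b(\{a\})=B_c(\{b\})=B_a(\{c\})=1+t$ and $B_c(\{a\})=B_a(\{b\})=B_b(\{c\})=1-t$, with $0<|t|<1$. Then \eqref{eq:det_bal_1} and \eqref{eq:det_bal_2} hold at every level. Yet with your choices, using $\beta(\{a\})=2$, $\delta(\{a\})=1$, $\beta(\{a,b\})=1$, $\delta(\{a,b\})=2$, you get $r(\{a\},b)=1/(1+t)$, while $w(\{a,b\})/w(\{a\})=3/3=1$. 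So under your reading of ``compatible'' the statement is false, and no argument can close step 2 as posed.

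The paper instead takes $\mu$ to be \emph{defined} by Preston's recursion \eqref{eq:recursive_definition}--\eqref{eq:recursive_distribution}. That is, $f(x\cup\xi)=\gamma(x,\xi)f(x)$ with $\gamma(x,\xi)=b(x,\xi)/D(x;\xi)$, which is well defined exactly under the consistency condition \eqref{eq:balanced_rates}. The rates above violate this condition: $\gamma(\{a\},b)\gamma(\emptyset,a)=1+t\neq 1-t=\gamma(\{b\},a)\gamma(\emptyset,b)$.

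Under the paper's hypothesis, the pointwise identity $g(x)\,b(x,\xi)=g(x\cup\xi)\,D(x;\xi)$ holds by construction. It is a sufficient condition for \eqref{eq:det_bal_1}--\eqref{eq:det_bal_2}, not a consequence of them. Your step 3 cancellation then goes through verbatim. You should replace step 2 by this assumption (or by assuming edge-wise detailed balance directly) rather than trying to derive it from the integrated balance equations.
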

\begin{proof}
    See Section \ref{app:mixture_kernel_bdmcmc}.
\end{proof}

\subsection{BROOD Algorithm}

The lemma \ref{theorem:mixture_kernel_bdmcmc} signifies that the limiting distribution of a standard birth-death sampler can equivalently be obtained from a mixture of MH transition kernels, where the acceptance ratio of $Q_1$ is a waiting time ratio. This equivalence is quite useful for sampling on $(\bm{H}, \bm{O})$. Specifically, the computationally efficient hybrid order MCMC sampler (targeting $\prec \mid \mathcal{H}$) serves as the kernel $Q_0$, while $Q_1$ facilitates the birth-death transitions between search spaces (targeting $\mathcal{H}$). Since Theorem \ref{theorem:distribution_class} defines a class of distributions that can be sampled by birth-death processes, consequentially, Theorem \ref{theorem:distribution_class} and Lemma \ref{theorem:mixture_kernel_bdmcmc} together demonstrate that hybrid order MCMC can be extended into a rigorous framework for sampling from a flexible, non-degenerate class of transdimensional distributions on $(\bm{H}, \bm{O})$.

We implement our novel dynamic restricted order MCMC sampler, \textbf{B}irth-death processes \textbf{R}estricted \textbf{O}ver \textbf{O}rder \textbf{D}istributions (BROOD), using the mixture MH method. Compared to the previous hybrid order MCMC samplers, BROOD's two methodological innovations are to (1) allow for not only expansions to the search space, but contractions too, and (2) dynamically change the search space at a rate $\ell \in [0, 1]$ throughout the simulation, instead of doing it for a finite number of iterations. We outline BROOD in pseudocode form in algorithm \ref{algo:brood}, and illustrate it visually in Figure \ref{fig:brood}.

\begin{algorithm}[htbp]
    \SetAlgoLined
    \KwIn{Current order $\prec_t$; current restricted search space $\mathcal{H}_t$; restricted target score and probability distribution $(R_{\mathcal{H}_t}(\cdot | D), \pi_{Ord}^{(\mathcal{H}_t)}(\cdot | D))$;}
    1. Sample whether to use $Q_0$ or $Q_1$ with probability $(1-\ell, \ell)$\;
    2. \eIf{$Q_0$}{
    a. $\mathcal{H}_{t+1} \gets \mathcal{H}_t$\;
    b. Update $\prec_{t+1}$ via $\pi_{Ord}^{(\mathcal{H}_{t+1})}(\prec | D)$, with M-H proposals from $\prec_t$\;
    }{
    a. For $e \in E_\mathcal{H}$, $e' \notin E_\mathcal{H}$, calculate the birth and death rates, $B_{e'}(\mathcal{H}_t, \prec_t)$, $D_e(\mathcal{H}_t, \prec_t)$\;
    b. Calculate the waiting time $w_t = \frac{1}{\sum B_{e'}(\mathcal{H}_t, \prec_t)+ \sum D_{e}(\mathcal{H}_t, \prec_t)}$\;
    c. Sample whether to propose expansion or contraction of the search space: \\
    $\text{adaptation type} \sim \begin{cases}
        \text{expansion} & \text{ with probability } w_t\sum B_{e'}(\mathcal{H}_t, \prec_t)\\
        \text{contraction} & \text{ with probability } w_t \sum D_{e}(\mathcal{H}_t, \prec_t)
    \end{cases}$\;
    d. \eIf{\text{adaptation type }==\text{expansion}}{
        $\mathcal{H}'$ gets updated proportionally to $B_{e'}(\mathcal{H}_t, \prec_t)$ for $e' \notin E_\mathcal{H}$\;
    }{
        $\mathcal{H}'$ gets updated proportionally to $D_{e}(\mathcal{H}_t, \prec_t)$ for $e \in E_\mathcal{H}$\;
    }
    e. Calculate birth and death rates of $(\mathcal{H}', \prec_t)$, and corresponding waiting time $w'$\;
    f. $A_1((\mathcal{H}', \prec_t)) \gets \frac{w'}{w_t}$\;
    g. Sample $u \sim \text{Uniform(0, 1)}$\;
    h. \eIf{$u < A_1((\mathcal{H}', \prec_t))$}{
    $\mathcal{H}_{t+1} \gets \mathcal{H}'$\;
    }{
    $\mathcal{H}_{t+1} \gets \mathcal{H}_t$\;
    }
    i. $\prec_{t+1} \gets \prec_t$
    }
    3. If DAG sampling as well, sample $G_{t+1} \sim \pi_{DAG}^{(\mathcal{H}_{t+1})}(G | \prec_{t+1}, D)$\;
    \KwOut{$\prec_{t+1}$; $\mathcal{H}_{t+1}$; $(R_{\mathcal{H}_{t+1}}(\cdot | D), \pi_{Ord}^{\mathcal{H}_{t+1}}(\cdot | D))$;}
\caption{BROOD Algorithm at step $t$}\label{algo:brood}
\end{algorithm}

\begin{figure}
    \centering
    \includegraphics[width=0.99\linewidth]{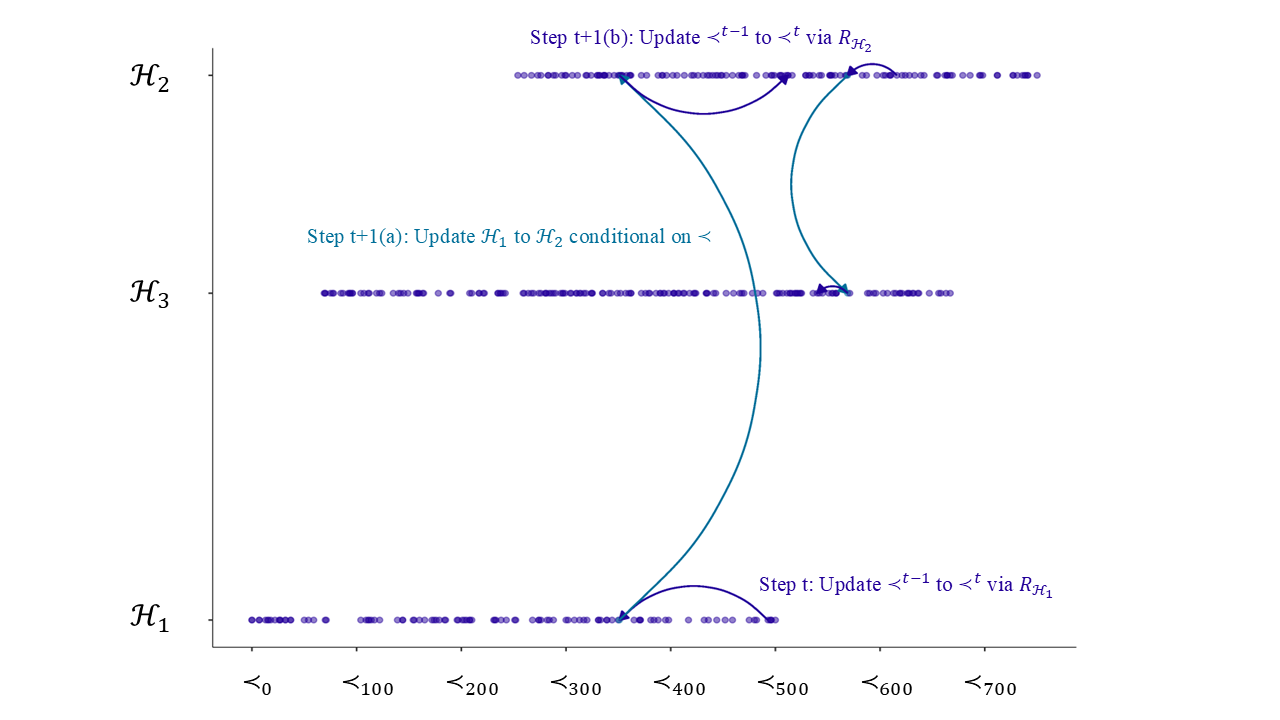}
    \caption{Illustration at step $t$ of combined BROOD Algorithm. Purple dots and arrows correspond to the $Q_0$ kernel in equation \eqref{eq:mixture_kern} (moves across the orders space $\prec$), and blue dots and arrows correspond to the $Q_1$ kernel (moves across the restricted search spaces $\mathcal{H}$)}
    \label{fig:brood}
\end{figure}

\subsection{Gaussian DAG Models: Intuition Behind Operating on \texorpdfstring{$(\mathcal{H}, \prec)$}{(H, ≺)} Pairs}

Although it may seem contrived to operate on $(\mathcal{H}, \prec)$, the Gaussian DAG model (GDM) provides the intuition for doing so. 

For analysis of continuous random variables, it is standard to assume the data are drawn from a \textit{Gaussian graphical model} (GGM). A GGM is any model in the set $\mathcal{M}_{G} = \{\mathcal{N}_p(0, \Sigma), K=\Sigma^{-1} \in \mathbb{P}_{G}\}$, where $\mathcal{N}_p$ is the $p$-dimensional multivariate Gaussian distribution, $\mathbb{P}_{G} = \{Y \in \mathbb{P}_p: Y_{ij} =0 \text{ if } i-j \notin E_{G}, \forall i,j \in V\}$, and $\mathbb{P}_p$ is the cone of $p \times p$ positive-definite matrices \citep{dempster_covariance_1972}. Similarly, a Gaussian DAG model (GDM) stipulates that if $i \rightarrow j \notin E_{G}$ then $L_{ij}=0$, where $K=LD^{-1}L^T$ is the modified Cholesky decomposition \citep{ben-david_high_2015}. Here, the order $\prec$ is assumed to be known and fixed, such that the rows and columns of $L$ are arranged by $\prec$.

In our setup, a DAG $G$ can only be used to define $L$ if $G \in \mathcal{G}_\mathcal{H}$, where $\mathcal{H} \in \mathbb{D}_p$ is the restricted search space. In addition, the ordering $\prec$ dictates the arrangement the rows and columns of $K, L$, and $D$ (i.e., $L$, a lower triangular matrix, is arranged such that $L_{ij}$ can be nonzero if $\prec_{[i]} < \prec_{[j]}$).  As a result, $\theta_{(\mathcal{H}, \prec)}$ represents a realization of the nonzero entries in the L matrix based on both $\mathcal{H}$ and $\prec$. Figure \ref{fig:matrix_h_prec_example} illustrates this concept in a 4-node graph problem.

The mixture kernel sampler is defined by two move types: (1) $Q_0$, the transition kernel for the standard restricted order MCMC sampler (displacements within the current search space) and (2) $Q_1$, the transition kernel for the birth-death MH algorithm (expansions or contractions of the current search space). Figure \ref{fig:gaussian_dag_moves_mixture} illustrates how these two move types allow an MCMC sampler to change which graphs can be considered; each gives access to unique sets of $L$ matrices in Gaussian DAG setups.

\begin{figure}[ht]
    \centering
    \begin{subfigure}[b]{\textwidth}
        \centering
        \[
        \begin{array}{c}
        H = \begin{bNiceMatrix}[first-row,first-col]
        & 1 & 2 & 3 & 4\\
    1   & 0 & 0 & 1 & 1\\
    2   & 1 & 0 & 0 & 0\\
    3   & 1 & 1 & 0 & 1\\
    4   & 0 & 1 & 1 & 0
    \end{bNiceMatrix}
    \\[2.5em]
    \prec = (1,3,4,2)
    \end{array}
    \quad
    \tikz[baseline]{\node[anchor=base, scale=1.5] at (0,0) {$\Rightarrow$};}
    \quad
        L \in \begin{bNiceMatrix}[first-row,first-col]
            & 1 & 3 & 4 & 2\\
            1   & \Block[fill=red!15,rounded-corners]{1-1}{} 1 & 
            \Block[fill=violet!15, rounded-corners]{1-2}{} 0 & 0 & 
            \Block[fill=red!15,rounded-corners]{1-1}{} 0\\
            3   &\Block[fill=cyan!15,rounded-corners]{1-1}{}  \ell_{31} & 
            \Block[fill=red!15,rounded-corners]{1-1}{} 1 &
            \Block[fill=violet!15,rounded-corners]{1-2}{} 0 & 0\\
            4   &\Block[fill=green!15,rounded-corners]{1-1}{} 0 & 
            \Block[fill=cyan!15,rounded-corners]{1-1}{} \ell_{43} & 
            \Block[fill=red!15,rounded-corners]{1-1}{} 1 & 
            \Block[fill=violet!15,rounded-corners]{1-1}{} 0\\
            2   &\Block[fill=cyan!15,rounded-corners]{1-1}{} \ell_{21} &
            \Block[fill=green!15,rounded-corners]{1-2}{} 0 & 0 & 
            \Block[fill=red!15,rounded-corners]{1-1}{} 1
        \end{bNiceMatrix}
        \]
        \caption{}
        \label{fig:matrix_h_prec_example}
    \end{subfigure}

    \vspace{0.5em} 

    \begin{subfigure}[b]{\textwidth}
        \centering
        \begin{tikzpicture}[every node/.style={inner sep=0}, node distance=1.5cm and 2.5cm]
            \node (base) at (0, 0) {
            \begin{minipage}{0.3\textwidth}
                \centering
                \[
                \begin{bNiceMatrix}[first-row,first-col]
                    & 1 & 3 & 4 & 2\\
                1   & \Block[fill=red!15,rounded-corners]{1-1}{} 1 & 
                \Block[fill=violet!15, rounded-corners]{1-2}{} 0 & 0 & 
                \Block[fill=red!15,rounded-corners]{1-1}{} 0\\
                3   &\Block[fill=cyan!15,rounded-corners]{1-1}{}  \ell_{31} & \Block[fill=red!15,rounded-corners]{1-1}{} 1 &
                \Block[fill=violet!15,rounded-corners]{1-2}{} 0 & 0\\
                4   &\Block[fill=green!15,rounded-corners]{1-1}{} 0 & 
                \Block[fill=cyan!15,rounded-corners]{1-1}{} \ell_{43} & \Block[fill=red!15,rounded-corners]{1-1}{} 1 & 
                \Block[fill=violet!15,rounded-corners]{1-1}{} 0\\
                2   &\Block[fill=cyan!15,rounded-corners]{1-2}{}
                \ell_{21} &\Block[fill=green!15,rounded-corners]{1-2}{} 0 & 0 & \Block[fill=red!15,rounded-corners]{1-1}{} 1
                \end{bNiceMatrix}\]
            \end{minipage}
            };

            \node (q0) at (7, 1.35) {
            \begin{minipage}{0.3\textwidth}
                \centering
                \[
                \begin{bNiceMatrix}[first-row,first-col]
                    & 1 & 2 & 3 & 4\\
                1   & \Block[fill=red!15,rounded-corners]{1-1}{} 1 & 
                \Block[fill=violet!15, rounded-corners]{1-2}{} 0 & 0 & 
                \Block[fill=red!15,rounded-corners]{1-1}{} 0\\
                2   &\Block[fill=cyan!15,rounded-corners]{1-1}{}
                \ell_{21}  & \Block[fill=red!15,rounded-corners]{1-1}{} 1 &\Block[fill=red!15,rounded-corners]{1-2}{} 0 & 0\\
                3   &\Block[fill=cyan!15,rounded-corners]{1-2}{}  \ell_{31} & \ell_{32} & \Block[fill=red!15,rounded-corners]{1-1}{} 1 &
                \Block[fill=violet!15,rounded-corners]{1-1}{} 0\\
                4   &\Block[fill=green!15,rounded-corners]{1-1}{} 0 & 
                \Block[fill=cyan!15,rounded-corners]{1-2}{} \ell_{42} & \ell_{43} &
                \Block[fill=red!15,rounded-corners]{1-1}{} 1
                \end{bNiceMatrix}\]
            \end{minipage}
            };

            \node (q1) at (7, -1.35) {
            \begin{minipage}{0.3\textwidth}
                \centering
                \[
                \begin{bNiceMatrix}[first-row,first-col]
                    & 1 & 3 & 4 & 2\\
                1   & \Block[fill=red!15,rounded-corners]{1-1}{} 1 & 
                \Block[fill=violet!15, rounded-corners]{1-2}{} 0 & 0 & 
                \Block[fill=red!15,rounded-corners]{1-1}{} 0\\
                3   &\Block[fill=cyan!15,rounded-corners]{1-1}{}  \ell_{31} & \Block[fill=red!15,rounded-corners]{1-1}{} 1 &
                \Block[fill=violet!15,rounded-corners]{1-2}{} 0 & 0\\
                4   &\Block[fill=green!15,rounded-corners]{1-1}{} 0 & 
                \Block[fill=cyan!15,rounded-corners]{1-1}{} \ell_{43} & \Block[fill=red!15,rounded-corners]{1-1}{} 1 & 
                \Block[fill=violet!15,rounded-corners]{1-1}{} 0\\
                2   &\Block[fill=cyan!15,rounded-corners]{1-2}{}
                \ell_{21} & \ell_{23}& \Block[fill=green!15,rounded-corners]{1-1}{} 0 & \Block[fill=red!15,rounded-corners]{1-1}{} 1
                \end{bNiceMatrix}\]
            \end{minipage}
            };

            \draw[->, thick] (base.east) -- node[above left] {\(Q_0\)} (q0.west);
            \draw[->, thick] (base.east) -- node[below left] {\(Q_1\)} (q1.west);
        \end{tikzpicture}
        \caption{}
        \label{fig:gaussian_dag_moves_mixture}
    \end{subfigure}
    
     \caption{(a) Construction of a restricted Cholesky space given restricted search space $\mathcal{H}$ and variable order $\prec$. Light-blue entries are parameters in $\Theta_{(\mathcal{H}, \prec)}$. Purple entries are in $\Theta_{\mathcal{H}}$ but not $\Theta_{\prec}$. Green entries are in $\Theta_{\prec}$ but not $\Theta_{\mathcal{H}}$. Red entries are in neither.  
    (b) Two local moves on the space of Gaussian DAG models: $Q_0$ changes the ordering $\prec$ by relocating node 2 from position 4, and $Q_1$ adds edge $3 \to 2$ to $\mathcal{H}$. Both modify the Cholesky factor $L$.}
    \label{fig:combined_matrix_and_moves}
\end{figure}

\subsection{Choice of parameters}\label{subsec:param_choices}
In Sections \ref{subsec:bdmcmc} and \ref{subsec:mixture_kernel}, we introduce 2 user-specified parameters, $c^*$ and $\ell$, as shown in \eqref{eq:death_rate} and \eqref{eq:mixture_kern}, respectively. We briefly discuss their implications.

\subsubsection{Controlling Mixing of the Chain via \texorpdfstring{$\ell$}{TEXT}}

The parameter $\ell \in [0,1]$ is the mixture weight placed on $Q_1$. For any value $\ell \in (0, 1]$, the target distribution of the Markov chain is invariant and converges to the desired transdimensional $\mu$ (the choice $\ell=0$ also leads to a valid sampler but only for the conditional $\mu_m$). However, $\ell$ affects the mixing time of the chain. Since the $Q_1$ steps are more computationally expensive than the $Q_0$ steps in BROOD, it is practically valuable to have $\ell$ be small, and we suggest the empirical choice $\ell=0.1$, consistent with recommendations from \cite{geyer_simulation_1994}.

\subsubsection{Controlling Fidelity to the True Posterior via \texorpdfstring{$c^*$}{TEXT}}

Unlike $\ell$, the parameter $c^* \in (0,1]$ determines the limiting distribution of the Markov chain (and thus the approximate target), as specified in Corollary \ref{theorem:posterior_form}. We note that while setting $c^*=0$ does not satisfy the necessary conditions for recovering a posterior using the techniques in \cite{preston_spatial_1977}, $\lim_{t \to \infty}\mathcal{H}_t = K_p$ where $\mathcal{H}_t$ is the $t^{\text{th}}$ sample of this process; this amounts to sampling from a chain that converges to the posterior supported by the entire space.  Making $c^*$ nonzero loses the fidelity of the full posterior, but allows for computationally tractable sampling for even high-dimensional problems. Corollary \ref{theorem:posterior_form} directly shows the impact of $c^*$ on the fidelity of the posterior by highlighting a smooth mapping between the full space posterior which occurs when $\lim_{c^* \to 0} \mu_{Space}^{(c^*)}$ to increasingly coarser approximate posteriors as $c^*$ increases.

This relationship allows us to re-characterize the $\varepsilon_{\mathcal{H}}$ term in \eqref{eq:lower_upper_bound}. As $c^* \to 0$, the algorithm places exponentially more weight on larger search spaces (those with larger $|E_{\mathcal{H}}|$). This drives the expected omitted mass $\mathbb{E}[\varepsilon_{\mathcal{H}}]$ toward zero, which in turn collapses the TV bound:

\begin{equation}\lim_{c^* \to 0} D_{TV}(\pi_{Ord}^{(c^*)}, \pi_{Ord}) = 0
\end{equation}
Conversely, choosing $c^*=1$ (our suggested default) calibrates the death rates with the birth rates, resulting in a posterior that has a penalty term of 1/2 scaling for each extra edge, coinciding with the notion that adding an edge to the search space doubles the number of graphs introduced. This results in a leaner sampler that stays within high-probability, lower-dimension search spaces. This choice effectively accepts a higher $\varepsilon_{\mathcal{H}}$ in exchange for the computational efficiency of the hybrid $Q_0$ kernel.
 
If a user has access to large computational resources, it could be advantageous to lower $\varepsilon_\mathcal{H}$ by making $c^* < 1$. Values $c^* \in (0,1)$ act as a compromise between the full space posterior and the edge-penalty posterior. This directly allows a user to control the fidelity-efficiency tradeoff. In contrast, while we may expect for existing procedures to initialize the search space to have small errors in specific scenarios, such as those described in Remark \ref{rem:strong_consistency}, it is unclear how a researcher can directly control the tradeoff between fidelity and efficiency in these frameworks. Future work could involve optimizing the choice of $c^*$ by balancing the size of the bounds in Theorem \ref{theorem:tv_bound} with the available resources. One special case is when $c^*=0.5$, which amounts to assigning individual search space a relative weight proportional to the scores of graphs compatible with the search space (with no direct edge-penalty for the size of the search space).

\subsection{Computational Considerations}\label{subsec:comp_considerations}
Recent work on the restricted search space order MCMC (\cite{kuipers_efficient_2021}, \cite{viinikka_towards_2020}) shows that by reducing the number of parents considered to those contained in a restricted search space, decomposable graph scores can be (1)  precomputed prior to sampling and (2) stored in cleverly designed data structures that streamline scoring. Specifically, \cite{kuipers_efficient_2021} constructs banned score tables that allow for order MCMC scoring to have $O(Kp)$ operations, where $p$ is the number of nodes and $K$ is the sparsity between nodes (see \ref{app:score_table_bkgrd} for more details). \cite{viinikka_towards_2020} details an efficient method to build the banned score table in $O(p2^K)$. Finally, creating an initial score table for the BGe score \citep{heckerman_learning_1995} that is used to build the banned score table requires $O(K^3p2^K)$ operations.

Updating the restricted search space adds further computational overhead, with each expansion requiring $O(K^3p2^K(p-K+1))$ in previous work, or rounded up to $O(K^3p^22^K)$. However, this can be viewed as a fixed cost \citep{kuipers_efficient_2021} for finitely many expansions (and running a chain thereafter does not add computational effort). Similarly, building finitely many score tables and banned score tables can be viewed as fixed costs.

BROOD loses the ability to treat the expensive operations as finite fixed costs (because the search space changes dynamically at a rate $\ell$ instead of finitely), so we further streamline the sampler in order to lessen the computational cost.

By construction, one computational advantage of BROOD is that it only updates 1 node for any space transition, whereas the expansion procedure in \cite{kuipers_efficient_2021} updates up to all nodes for any space transition. As a result, BROOD removes a factor of $p$ operations from all table constructions at expansion steps. We acknowledge that this design choice to have smaller update steps could lead to less efficient exploration in a vacuum, though the choice is a result of having both expansion and contraction steps that could offset each other in the BROOD sampler as opposed to only expansion steps previously. $p$-sized updates can introduce a different form of inefficiency: if the current search space under-represents the parent sets of certain nodes while over-representing others, a $p$-node update forces a transition on ``settled'' nodes to accommodate ``unsettled'' ones. By allowing for node-specific births and deaths, BROOD enables the sampler to surgically refine only the regions of the graph that require adaptation, avoiding the computational and mixing overhead of unnecessary global reconfigurations.

In practice, BROOD expands the search space more efficiently than in existent implemented algorithms by using recursive block matrix formulas for determinants and Cholesky decompositions \citep{osborne_bayesian_2010}, the two key contributors to the computational burden of the BGe score \citep{kuipers_addendum_2014}. By vectorizing these recursive operations across each potential parent outside of the search space (see \ref{app:plone_bge_score} for details), BROOD removes an additional $(p-K+1)$ operations compared to previous work that processes each potential ``plus-one parent'' set one at a time. This reduces expansion scoring to typical BGe scoring within a search space applied to 1 node, which requires $O(K^32^K)$ operations.

Finally, contracting the search space is a low-cost operation, as for any contraction, all calculations have previously been performed in the larger space (and all that is required is to eliminate entries which contain parents outside the updated search space). BROOD also recycles previous calculations when creating the banned score table for the plus-one sets. We outline a procedure to do so in \ref{app:plone_memoize} that requires $O(K)$ operations for 1 node.

\section{Simulation Study}\label{sec:results_sim}
\subsection{Simulation Setup}\label{subsec:sim_setup}
We compared BROOD to two existing algorithms: the restricted order MCMC in \cite{kuipers_efficient_2021} implemented in the \texttt{BiDAG} R package (which we call BiDAG), and the graph birth-death MCMC sampler in \cite{mohammadi_bayesian_2015} implemented in the \texttt{BDgraph} R package (which we call BDgraph). As $p$ grows, BDgraph can take orders of magnitude longer to run than BROOD and BiDAG due to not restricting the search space, so we only run BDgraph for graphs with 80 or fewer nodes. We generate synthetic data from linear acyclic structural equation models (SEM) where the graph and error models vary across experiments.

To evaluate relative performance of different algorithms, we compare their estimated edge probabilities with where edges actually exist, and aggregate these results by calculating the Receiver Operator Characteristics Area Under the Curve (ROC AUC), Precision-Recall Area Under the Curve (PR AUC), the mean true-edge probability ($Pr^+$), and the mean non-edge probability ($Pr^-$). These metrics have been used to evaluate model performance in other recent work on Bayesian inference of graph models (e.g., \cite{kuipers_efficient_2021}, \cite{vogels_bayesian_2024}). Full details for calculating these can be found in \ref{app:sim_setup}. Results in Section \ref{subsec:sim_results} are displayed for ROC AUC and PR AUC, while the results for the others are shown in \ref{app:further_sims}.

We employ the BGe score \citep{heckerman_learning_1995} for BROOD and BiDAG to score and search these models, while for BDgraph, we use the G-Wishart score \citep{letac_wishart_2007} applied to centered data, due to the respective R packages for BiDAG and BDgraph providing direct native support for them. These scores are similar (though not identical, see section \ref{app:sim_setup} for details) in that they both come from a Gaussian model and use a Wishart prior on the precision for DAGs and undirected graphs, respectively. To analyze performance, we compare the estimated edge probability for each ordered pair $(i, j)$ (i.e., the prevalence of $i \to j$ in the graph samples drawn) with the true graph $G$.

By default, BiDAG does not use $R_{\mathcal{H}}$ throughout, but a relaxed restricted-score of $R_{\mathcal{H}}^+$, which allows up to 1 additional parent outside of the search space per node:
\begin{align}
    R_{\mathcal{H}}^+(\prec | D) \propto \prod_{i=1}^n \sum_{\substack{\mathbf{Pa}_i \in \prec\\ \mathbf{Pa}_i \subseteq h^i}} \left[S(X_i, \mathbf{Pa}_i | D) + \sum_{\substack{X_j \notin h^i\\ \pi_{\prec}[i] < \pi_{\prec}[j]}} S(X_i, \{\mathbf{Pa}_i, X_j \}| D)\right]\label{eq:plus_score_funct}
\end{align}
Within a BROOD chain, we sample graphs on the graph score analogous to $R_\mathcal{H}$ and $R^{+}_\mathcal{H}$ in \eqref{eq:plus_score_funct} to allow more direct comparison with BiDAG. We label these in \ref{subsec:sim_results} as BROOD and BROOD$+$, respectively. This is not identical to BiDAG's design, which uses $R^{+}_\mathcal{H}$ on the order sampling as well. BDgraph is implemented for GGMs, but not GDMs. As a result, we compare the BDgraph graph samples with undirected versions of the BROOD and BiDAG graph samples.

For each experiment, we run BROOD and BiDAG with two different starting search spaces: the skeleton from a popular constraint-based method, the PC algorithm \citep{spirtes_causation_nodate}, and the search space returned from the previous iterative expansion procedure in \cite{kuipers_efficient_2021}.

We consider two error distributions: a standardized $\mathcal{N}(0, 1)$, and an even mixture between $\mathcal{N}(0, 1)$ and $\mathcal{N}(0, 2)$. The latter allows for identifiability of edge direction \citep{hoyer_nonlinear_2008}. The former is not identifiable within the equivalence class under score-equivalent scoring, but is identifiable if the topological order is known (\cite{chen_causal_2019}). Thus, BROOD and BiDAG could both correctly identify the true edge direction as long as the prior on $\mathcal{H}$ does not skew the order sampling too much. The results in Section \ref{subsec:sim_results} are for the Gaussian error model, the results for the mixture error model can be found in \ref{app:further_sims}.

We consider three unique graph generation procedures: Erd\H{o}s-R\'{e}nyi models (ER), Stochastic Block models (SBM) and Hierarchical Stochastic Block models (hSBM). Setup and parameter choices for these models are detailed in \ref{app:sim_setup}. At a high level, block models (and hierarchical block models) produce community structures that more frequently induce multi-modal posterior graph scores than non-block models (where individual peaks correspond to the effects from unique communities). Based on Remark \ref{rem:strong_consistency}, we expect multi-modal problems to be more challenging for fixed restricted search space methods to infer. In addition, the \cite{kuipers_efficient_2021} expansion procedure continually adds the highest probability graph from outside the search space until graphs immediately outside of the search space have low probability, which suggests that it could also struggle with multi-modal problems.

Finally, we vary $K$, the maximal allowed parent set size, a required parameter for both BROOD and BiDAG and the primary determinant of computational cost, as discussed in \ref{subsec:comp_considerations}. We test two unique maximal parent set sizes for BROOD: (1) ``plus-one sparsity'', which allows up to one additional parent beyond the maximal parent set size of the initial search space, and (2) ``fixed sparsity'', which enforces a fixed maximal parent set size throughout. The former illustrates the extent to which the transdimensional move proposals can refine the search space under a limited budget, while the latter tests BROOD's ability to escape local optima when the score is multi-modal and the initialization is poor. Since fixed sparsity is more computationally expensive, we limit these experiments to graphs with at most 140 nodes. In addition, for the fixed sparsity experiments, we include a third initialization of the search space (to more comprehensively investigate the impact of initialization) which is the skeleton returned from Greedy Equivalence Search \citep{chickering_optimal_structure_2002}.

Further details about the simulation setup can be found in Appendix \ref{app:sim_setup}.

\subsection{Simulation Results}\label{subsec:sim_results}
\subsubsection{Performance}
Figures \ref{fig:auc_roc_gauss_dag} and \ref{fig:auc_pr_gauss_dag} show the ROC AUC and PR AUC, respectively, of the estimated edge probabilities obtained from DAGs sampled by BiDAG and BROOD, under plus-one sparsity. These experiments align with the intuition provided by Remark \ref{rem:strong_consistency} on strong consistency: BiDAG (using the search space returned from the iterative expansion procedure) is strongest when the number of samples is $10$ times larger than the number of nodes. In these settings, BiDAG outperforms BROOD, which is unsurprising; settings where a smartly-chosen restricted search space introduces minimal expected error should yield better empirical performance than a flexible model that puts a prior on the set of search spaces. 

In contrast, when $p$ is large or when there are very few samples relative to $p$, BROOD outperforms BiDAG within search spaces, especially in PR AUC. This suggests that the added contraction steps of BROOD increase the precision because it can prune poorly-chosen edges in the initial search space (although this sometimes also reduces the true positive rate by pruning true edges). Since it is often unrealistic in practical applications to obtain many more samples than the number of nodes in graphical data, it highlights the flexibility of an adaptive model like BROOD and its ability to robustly handle real-world data.

\begin{figure}
    \centering
    \includegraphics[width=0.99\linewidth]{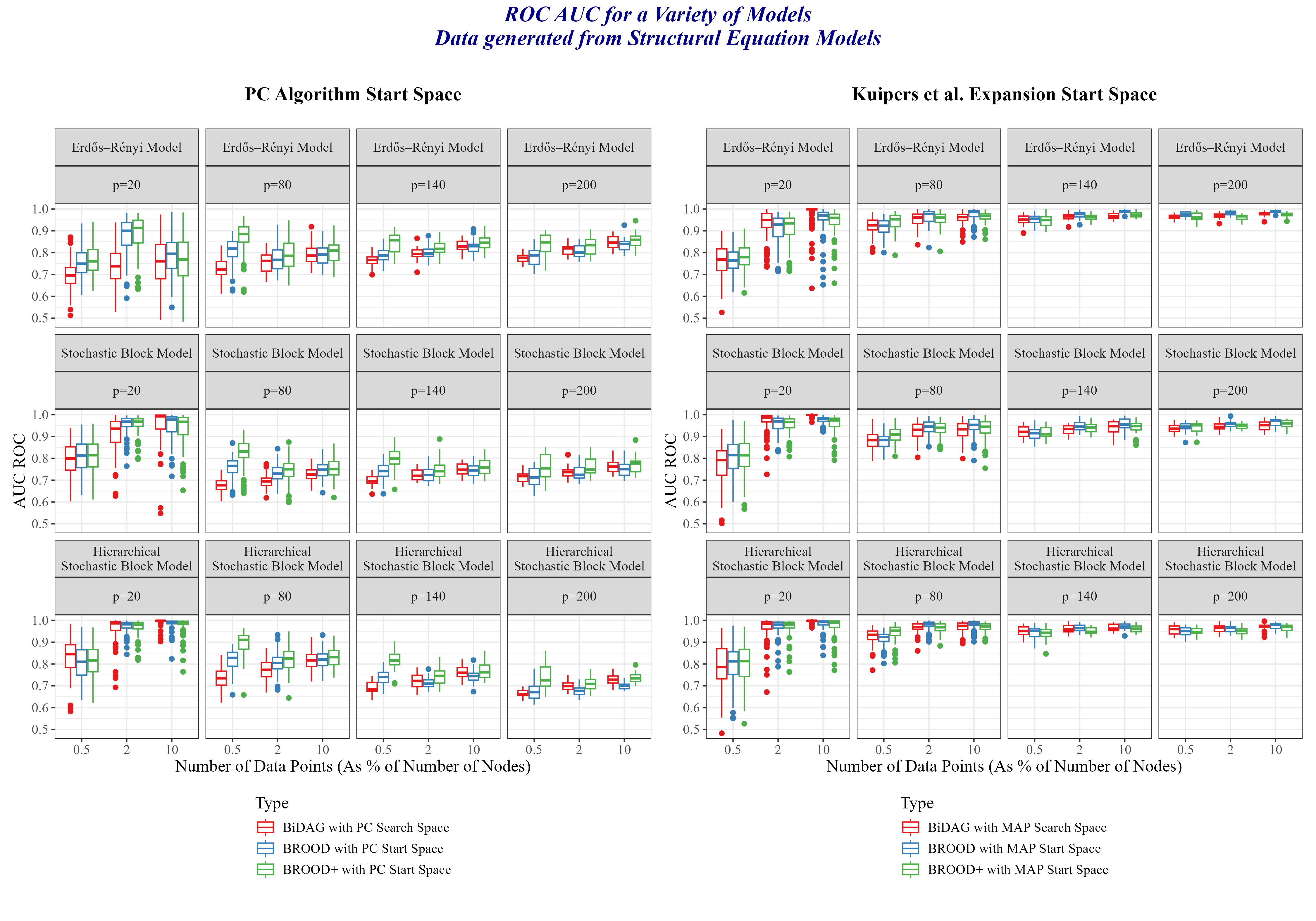}
    \caption{ROC AUC results with Gaussian SEM data, using plus-one sparsity.}
    \label{fig:auc_roc_gauss_dag}
\end{figure}

\begin{figure}
    \centering
    \includegraphics[width=0.99\linewidth]{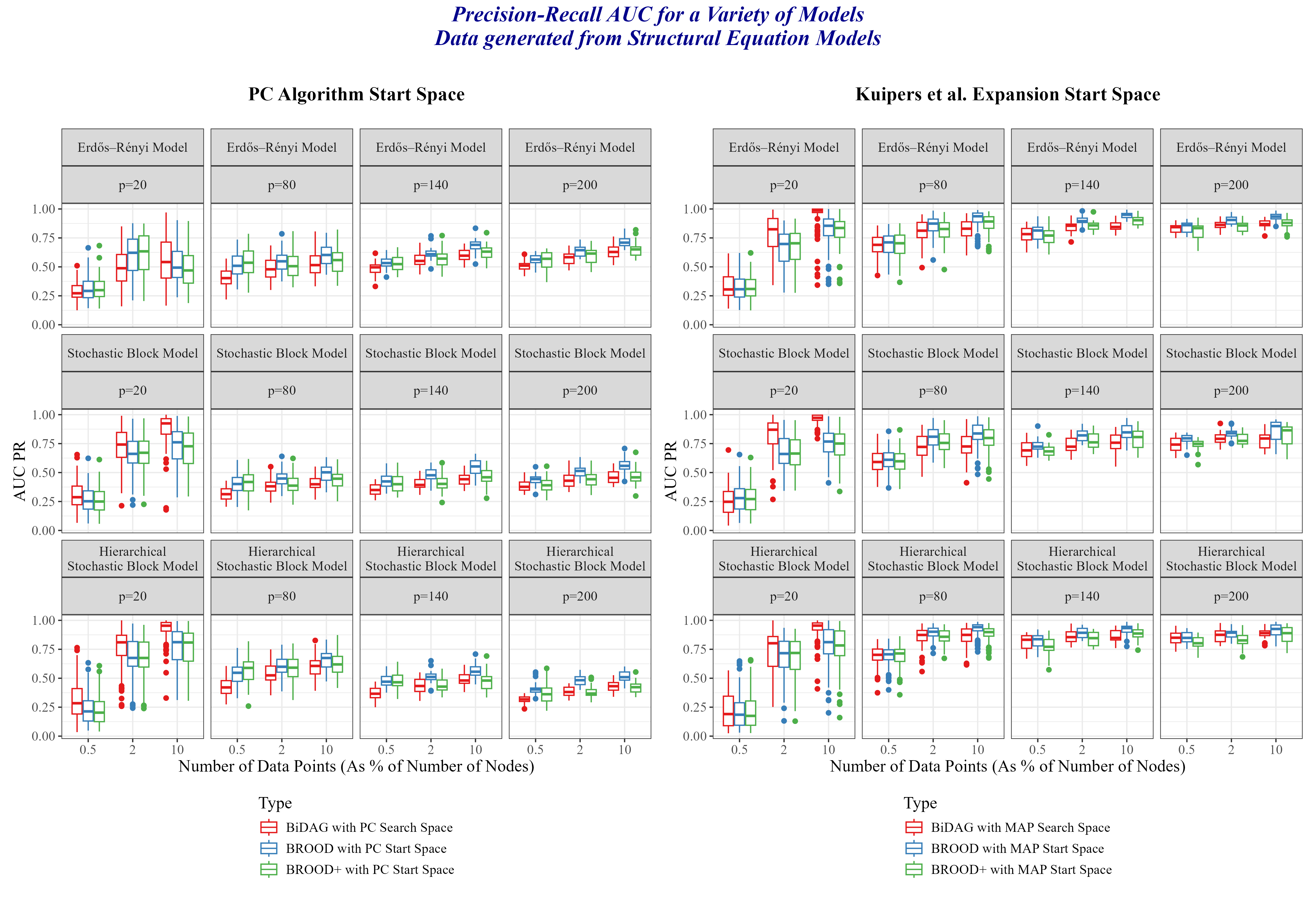}
    \caption{ROC PR results with Gaussian SEM data, using plus-one sparsity.}
    \label{fig:auc_pr_gauss_dag}
\end{figure}

Figure \ref{fig:auc_roc_gauss_dag2} compares the ROC AUC of estimated edge probabilities of DAGs sampled from BiDAG and BROOD, on the three starting search spaces mentioned in \ref{subsec:sim_setup}, where we impose fixed sparsity across all initializations. The within-initialization trends are similar to what we observed in the plus-one sparsity experiments: BROOD significantly outperforms BiDAG in harder regimes for inference, and BiDAG thrives in settings that align with Remark \ref{rem:strong_consistency}. 

The fixed sparsity experiments shed light on between-initialization trends. For small datasets, initializing BROOD using only the PC or GES search space performs similarly to the fixed-space sampler after applying the greedy iterative expansion algorithm. We suspect that this result is present for our small dataset experiments but not for our large ones because the posteriors on small datasets are more diffuse than those on larger datasets on average. Under diffuser posteriors, the birth-death process accepts more space change proposals than when the posterior is more concentrated around local maxima. This implies that the space-changing procedure in BROOD can approach the performance fixed search space sampler when the posterior is diffuse enough, even when the initialization is poor for BROOD. This trend holds true across all models, including for data generated from mixture SEMs, which can be seen in Appendix \ref{app:further_sims}.

\begin{figure}
    \centering
    \includegraphics[width=0.99\linewidth]{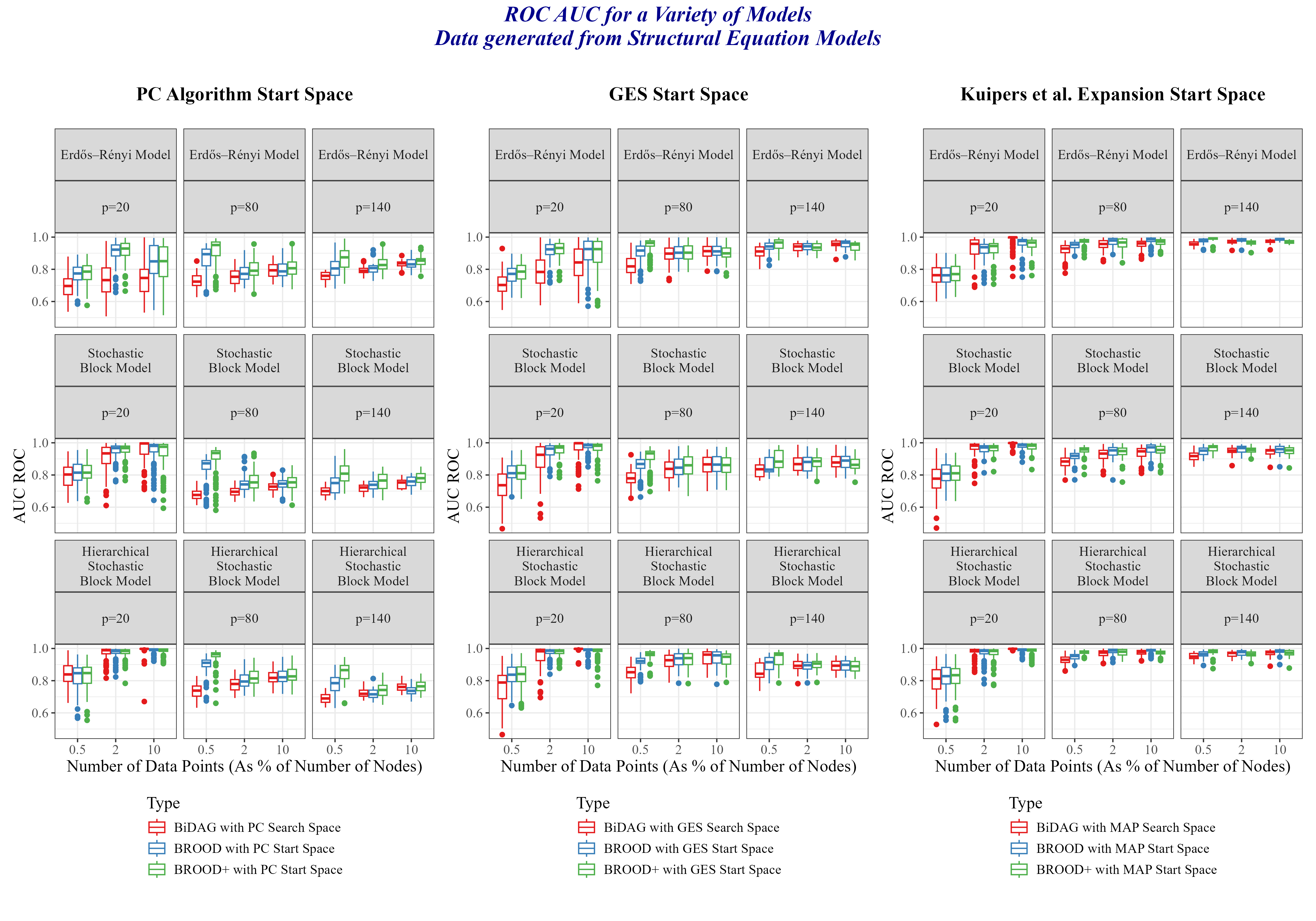}
    \caption{ROC AUC results with Gaussian SEM data, using fixed sparsity.}
    \label{fig:auc_roc_gauss_dag2}
\end{figure}

Figure \ref{fig:auc_roc_gauss_skel} compares the skeletons of all 3 methods, which tells a similar story -- BROOD excels relative to the other methods in complex learning environments. In addition, the order-based samplers outperform BDgraph on average, suggesting that the mixing time of procedures on the space of topological orders can have large practical advantages over those on the space of graphs.

\begin{figure}
    \centering
    \includegraphics[width=0.99\linewidth]{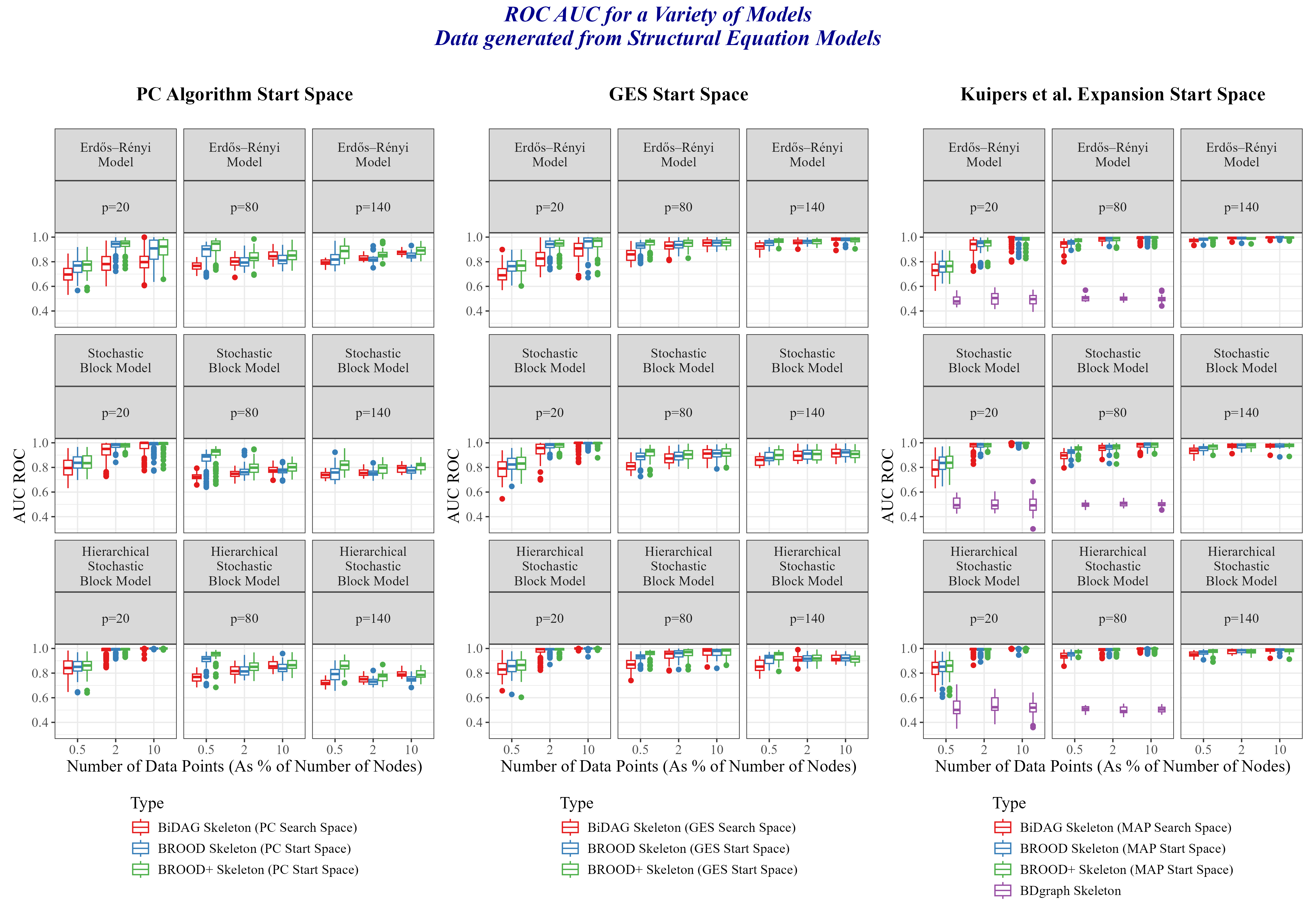}
    \caption{Skeleton version of ROC AUC results, using fixed sparsity.}
    \label{fig:auc_roc_gauss_skel}
\end{figure}

\subsubsection{Computation Time}

We plot the run time of the models in figure \ref{fig:log_time_gauss}. Using the mixture kernel sampler (which much of the time uses the fast restricted order sampling) allows for BROOD to be significantly faster than BDgraph, the previous birth-death sampler. However, BROOD's added flexibility comes at the cost of compute time compared to BiDAG. While BROOD is significantly slower than BiDAG on sparse data problems, the more complex the problem, the closer the run times are between BROOD and BiDAG. This suggests that in scenarios where the stopping criterion in the BiDAG expansion procedure is difficult to meet, BROOD may be an especially good choice for performing structure inference. Moreover, when BROOD is significantly slower than BiDAG, for $p=200$, it still runs $200^2\log(200) = 211,913$ MCMC steps within hours, which is quite reasonable in the structure inference literature at large \citep{vogels_bayesian_2024}.

\begin{figure}
    \centering
    \includegraphics[width=0.99\linewidth]{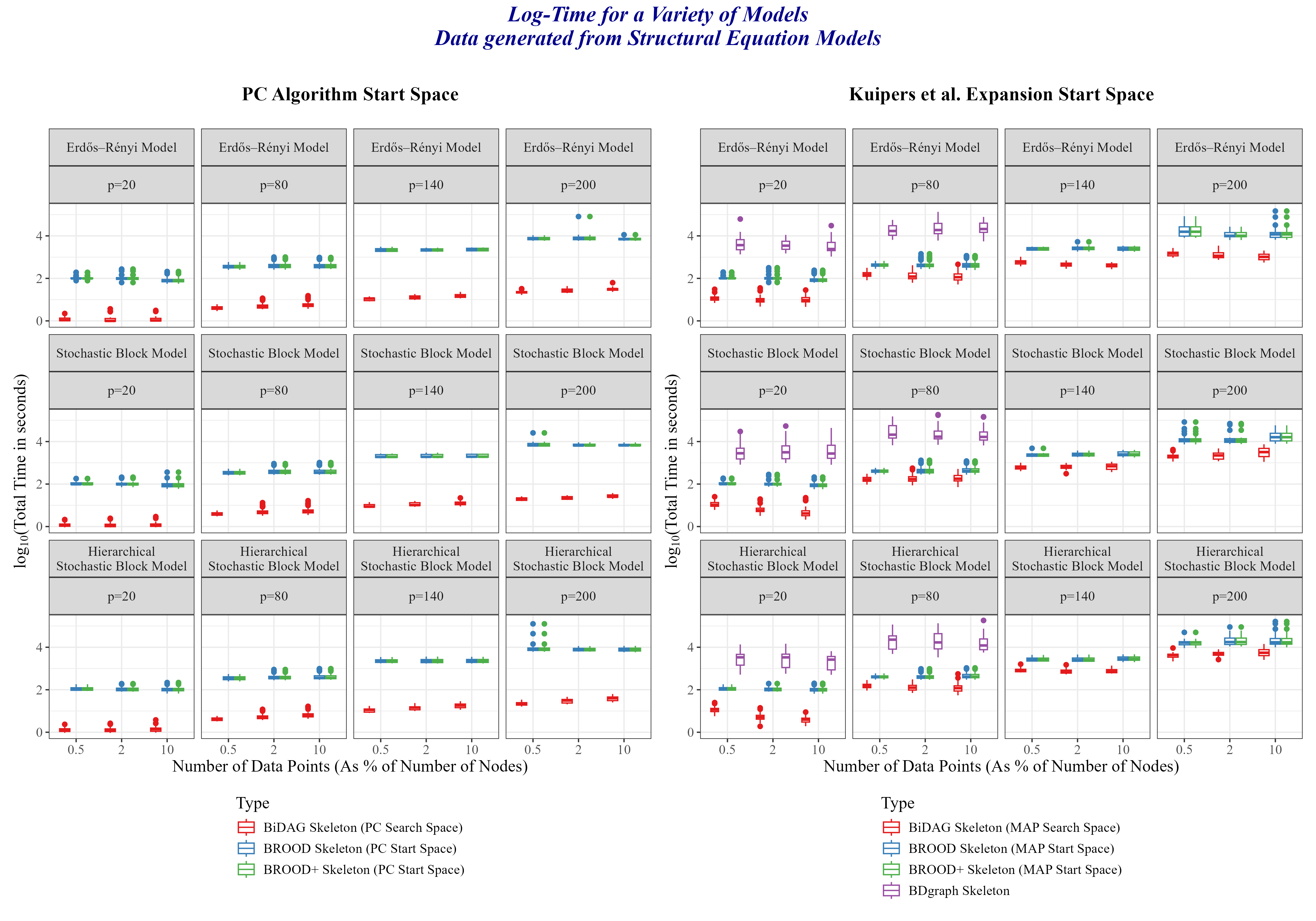}
    \caption{$\log_{10}(\text{Run Time})$ results for Gaussian SEM data, using plus-one sparsity}
    \label{fig:log_time_gauss}
\end{figure}

\section{Discussion}\label{sec:discussion}

In this work, we quantify the error introduced by fixed-space hybrid order MCMC samplers, and develop a transdimensional alternative, BROOD, which samples from a richer class of posteriors than previous hybrid methods. This class of posteriors gives the user direct control of the compromise between fidelity to the full posterior with no restrictions and computational efficiency, a quality that is lacking in previous hybrid MCMC samplers for DAGs. In addition, by allowing transitions between restricted search spaces, BROOD avoids the approximation error induced by conditioning on a fixed restricted search space, at the cost of computational expedience. This flexibility proves especially valuable empirically when the posterior is likely to be multi-modal, aligning with our intuition based on theorem \ref{theorem:tv_bound}.

While BROOD is a promising starting point towards understanding the extent of error in hybrid MCMC graph inference techniques, we do not directly show any theoretical guarantees where BROOD outperforms other methods given a finite budget of computational resources. In addition, empirically, BROOD can only sometimes recover from a poorly-initialized starting search space compared to using existing faster hybrid procedures to search for a high-quality fixed restricted search space.

Two natural directions for future work both involve linking our theoretical results to user-controlled choices.

First, one could optimize the user-specified parameter $c^*$, which regulates the size of the death rates relative to the birth rates. Given a fixed computational budget (perhaps in terms of the maximal allowed sparsity $K$ and/or the number of nodes $p$), there may be an optimal $c^*$ that can be recovered from constrained optimization that minimizes the expectation of the total variation distance between the hybrid sampler and the full sampler. Characterizing such an optimum may allow for a direct proof that transdimensional algorithms not only \textit{can} outperform fixed-space algorithms, but \textit{will} do so in sufficiently challenging problems. It may also be possible to tune $c^*$ for specific problems, though empirically evaluating relative performance across different $c^*$ may be computationally demanding. 

Second, it may be possible to refine the choice of starting space. While we showed simulation results for three useful starting spaces (what is returned from the PC algorithm, what is returned from the GES algorithm, and what is returned from the existing expansion procedure), there is opportunity for further exploration. For the expansion procedure in \cite{kuipers_efficient_2021}, it is not recommended to use algorithms like Greedy Equivalent Search \citep{chickering_optimal_structure_2002} to form the initial search space, due to their tendency to include many false positive edges (which would increase the computational cost of expanding the search space). BROOD, on the other hand, can afford to consider starting search spaces with many non-edges since it can both expand and contract the search space at any transdimensional step. As a result, relating properties of the starting search space (such as the false positive rate, false negative rate, true positive rate, and true negative rate) to the expected search space error could reveal principled guidelines for initializing BROOD.

Another potentially valuable future extension is to add parallel tempering to BROOD. Given that empirically, BROOD is most robust to poorly initialized search spaces when the search space posterior is flatter (which allows for more acceptances of transdimensional proposals), adding a temperature component could allow BROOD to explore the space better when the posterior is jagged or concentrated. This approach has been employed in other structure learning MCMC procedures to efficiently search concentrated posteriors \citep{barker_mc4tempering_2010}.

Even without these potential future optimizations, our transdimensional procedure adds flexibility compared to previous methods, and provides a rich opportunity for more accurate inference on complicated structures.

More broadly, there is scope to build flexible, computationally-efficient samplers based on transdimensional samplers beyond the particular choice used by BROOD. BROOD is defined by rates corresponding to a specific prior on the set of the search spaces that (1) meets the conditions required for valid birth-death sampling, and (2) favors search spaces whose associated graphs have high posterior mass on average. Other principled transdimensional priors could also satisfy these criteria, or could be advantageous in other ways.

Finally, while theorem \ref{theorem:tv_bound} clarifies certain sources of error, we have not addressed the broader convergence of order MCMC methods. This remains a gap in the current literature that is difficult to solve because order MCMC can get stuck within an equivalence class for arbitrarily long periods \citep{zhou_complexity_2023}. Nonetheless, numerical experiments suggest that order MCMC methods mix faster than graph-space samplers \citep{friedman_being_2003}, but we would like to see new theory that can formally check convergence for order MCMC approaches.

\clearpage
\glsresetall{}

\clearpage

\bibliographystyle{apalike}
\bibliography{bib}

\clearpage
\appendix
\section{Proof Work}
\subsection{Proof of Theorem \ref{theorem:tv_bound}}\label{app:tv_bound}
Let $\mathbb{G}_p, \mathbb{D}_p$ be the space of DAGs with $p$ nodes and the space of directed graphs with $p$ nodes, respectively, and let $\mathcal{H}=(V, E_{\mathcal{H}}) \subseteq \mathbb{D}_p$ be the restricted search space with $\mathcal{G}_\mathcal{H}$ representing the corresponding set of admissible DAGs based on $\mathcal{H}$. Define $\mathcal{G}_{\neg \mathcal{H}} := \mathbb{G}_p \setminus \mathcal{G}_{\mathcal{H}}$ and let $\neg \mathcal{H}$ serve as the symbolic index for this complementary DAG set. Define $\varepsilon_{\mathcal{H}} := 1 - \pi_{DAG}(\mathcal{H} | D)$.

We will use the following equality that relates $\pi_{Ord}(\cdot|D)$ and $\pi_{Ord}^{(\mathcal{H})}(\cdot|D)$ in bounding the total variation:
\begin{align}
    \pi_{Ord}(\cdot | D) &= (1-\varepsilon_{\mathcal{H}})\pi_{Ord}^{(\mathcal{H})}(\cdot | D) + \varepsilon_{\mathcal{H}} \pi_{Ord}^{(\neg \mathcal{H})}(\cdot | D)\label{eq:mixture}
\end{align}
where $\pi_{Ord}^{(\neg \mathcal{H})}$ is the order posterior restricted to the graphs in $\mathcal{G}_{\neg \mathcal{H}}$. Equation \ref{eq:mixture} can be thought of as rewriting the order posterior as a mixture between the restricted target and an omitted target.

We first consider the Hellinger distance ($d_H$) on the set of orders $\prec_1,...,\prec_{p!} \in \mathbb{S}^p$:

\begin{align*}
    d_H(\pi_{Ord}^{(\mathcal{H})}, \pi_{Ord}) &= \sqrt{\sum_{i=1}^{p!} \left(\sqrt{\pi_{Ord}^{(\mathcal{H})}(\prec_i | D)}- \sqrt{\pi_{Ord}(\prec_i)}\right)^2}\\
    &= \sqrt{\sum_{i=1}^{p!} \left(\sqrt{\pi_{Ord}^{(\mathcal{H})}(\prec_i | D)}- \sqrt{(1-\varepsilon_\mathcal{H})\pi_{Ord}^{(\mathcal{H})}(\prec_i | D) + \varepsilon_\mathcal{H} \pi_{Ord}^{(\neg\mathcal{H})}(\prec_i | D)}\right)^2}\\
    &= \sqrt{
    \sum_{i=1}^{p!} (2-\varepsilon_\mathcal{H})a + \varepsilon_\mathcal{H} b -2 \sqrt{(1-\varepsilon_\mathcal{H})a^2 + \varepsilon_\mathcal{H} ab}
    } \\
    &\qquad \text{  where } a=\pi_{Ord}^{(\mathcal{H})}(\prec_i | D), b=\pi_{Ord}^{(\neg\mathcal{H})}(\prec_i | D)\\
    &= \sqrt{(2-\varepsilon_\mathcal{H}) + \varepsilon_\mathcal{H} - 2 \sum_{i=1}^{p!} \sqrt{(1-\varepsilon_\mathcal{H})\pi_{Ord}^{(\mathcal{H})}(\prec_i | D)^2 + \varepsilon_\mathcal{H} \pi_{Ord}^{(\mathcal{H})}(\prec_i | D)\pi_{Ord}^{(\neg\mathcal{H})}(\prec_i | D)}} \\
    &= \sqrt{2 - 2 \sum_{i=1}^{p!} \sqrt{(1-\varepsilon_\mathcal{H})\pi_{Ord}^{(\mathcal{H})}(\prec_i | D)^2 + \varepsilon_\mathcal{H} \pi_{Ord}^{(\mathcal{H})}(\prec_i | D)\pi_{Ord}^{(\neg\mathcal{H})}(\prec_i | D)}}
\end{align*}

Since the Hellinger distance can take on values between $[0, \sqrt{2}]$ \citep{gibbs_choosing_2002}, the summation above must be between $[0, 1]$.

We now consider the summation above, $\sum_{i=1}^{p!} \sqrt{(1-\varepsilon_\mathcal{H})\pi_{Ord}^{(\mathcal{H})}(\prec_i | D)^2 + \varepsilon_\mathcal{H}\pi_{Ord}^{(\mathcal{H})}(\prec_i | D)\pi_{Ord}^{(\neg\mathcal{H})}(\prec_i | D)}$, and tighten the lower bound of 0:

\begin{align*}
    \sum_{i=1}^{p!} \sqrt{(1-\varepsilon_\mathcal{H})\pi_{Ord}^{(\mathcal{H})}(\prec_i | D)^2 + \varepsilon_\mathcal{H} \pi_{Ord}^{(\mathcal{H})}(\prec_i | D)\pi_{Ord}^{(\neg\mathcal{H})}(\prec_i | D)} &\geq \sum_{i=1}^{p!} \sqrt{(1-\varepsilon_\mathcal{H})\pi_{Ord}^{(\mathcal{H})}(\prec_i | D)^2}\\
    &= \sqrt{1-\varepsilon_\mathcal{H}}\sum_{i=1}^{p!}\pi_{Ord}^{(\mathcal{H})}(\prec_i | D) \\
    &= \sqrt{1-\varepsilon_\mathcal{H}} = \sqrt{1-\varepsilon_\mathcal{H} \times 1}\\
    \sum_{i=1}^{p!} \sqrt{(1-\varepsilon_\mathcal{H})\pi_{Ord}^{(\mathcal{H})}(\prec_i | D)^2 + \varepsilon_\mathcal{H} \pi_{Ord}^{(\mathcal{H})}(\prec_i | D)\pi_{Ord}^{(\neg\mathcal{H})}(\prec_i | D)} &\leq 1 = \sqrt{1-\varepsilon_\mathcal{H} \times 0}
\end{align*}

Since the sum above lies between [$\sqrt{1-\varepsilon_\mathcal{H} \times 1}, \sqrt{1-\varepsilon_\mathcal{H} \times 0}$], there exists a constant $c$, $0 \leq c \leq 1$ such that $\sqrt{1-\varepsilon_\mathcal{H} \times c}$ precisely equals the sum. As a result, $d_H(\pi_{Ord}^{(\mathcal{H})}, \pi_{Ord}) = \sqrt{2 - 2\sqrt{1-c\varepsilon_\mathcal{H}}}$.

Using the relationship that $\frac{d_H^2}{2} \leq D_{{TV}} \leq d_H$ \citep{gibbs_choosing_2002},

\begin{align*}
    D_{{TV}}(\pi_{Ord}^{(\mathcal{H})}, \pi_{Ord}) &\geq \frac{d_H^2}{2}(\pi_{Ord}^{(\mathcal{H})}, \pi_{Ord})\\
    &\geq \frac{\sqrt{2 - 2 \sqrt{1-c\varepsilon_\mathcal{H}}}^2}{2}\\
    &= 1 - \sqrt{1-c\varepsilon_\mathcal{H}}\\
    D_{{TV}}(\pi_{Ord}^{(\mathcal{H})}, \pi_{Ord}) &\leq d_H(\pi_{Ord}^{(\mathcal{H})}, \pi_{Ord}) = \sqrt{2 - 2\sqrt{1-c\varepsilon_\mathcal{H}}}
\end{align*}

Finally, since $D_{{TV}}$ is bounded by 1, it follows that $D_{{TV}}(\pi_{Ord}^{(\mathcal{H})}, \pi_{Ord}) \leq \min\left(\sqrt{2-2\sqrt{1-c\varepsilon_\mathcal{H}}}, 1\right)$.

\subsection{Proof of Lemmas \ref{theorem:min_c}, \ref{theorem:max_c}}\label{app:c_constant}

\subsubsection{Proof of Lemma \ref{theorem:min_c}}\label{app:min_c}
Assume $\pi_{Ord}^{(\mathcal{H})}(\prec | D) = \pi_{Ord}^{(\neg\mathcal{H})}(\prec | D)$ for all $\prec$. Then,
\begin{align*}
    \sum_{i=1}^{p!} \sqrt{(1-\varepsilon_\mathcal{H})\pi_{Ord}^{(\mathcal{H})}(\prec_i | D)^2 + \varepsilon_\mathcal{H} \pi_{Ord}^{(\mathcal{H})}(\prec_i | D)\pi_{Ord}^{(\neg\mathcal{H})}(\prec_i | D)} &= \sum_{i=1}^{p!} \pi_{Ord}^{(\mathcal{H})}(\prec_i | D) = 1 = \sqrt{1 - \varepsilon_\mathcal{H} \times 0}
\end{align*}

Thus, $c = 0$.

Now, assume $c=0$. Then,
\begin{align*}
    \sum_{i=1}^{p!} \sqrt{(1-\varepsilon_\mathcal{H})\pi_{Ord}^{(\mathcal{H})}(\prec_i | D)^2 + \varepsilon_\mathcal{H} \pi_{Ord}^{(\mathcal{H})}(\prec_i | D)\pi_{Ord}^{(\neg\mathcal{H})}(\prec_i | D)} &=1
\end{align*} 

Additionally, using equation \ref{eq:mixture}:

\begin{align*}
    &\sum_{i=1}^{p!} \sqrt{(1-\varepsilon_\mathcal{H})\pi_{Ord}^{(\mathcal{H})}(\prec_i | D)^2 + \varepsilon_\mathcal{H} \pi_{Ord}^{(\mathcal{H})}(\prec_i | D)\pi_{Ord}^{(\neg\mathcal{H})}(\prec_i | D)} \nonumber \\
    &\quad= \sum_{i=1}^{p!} \sqrt{(1-\varepsilon_\mathcal{H})\pi_{Ord}^{(\mathcal{H})}(\prec_i | D)^2 + \pi_{Ord}^{(\mathcal{H})}(\prec_i | D)\left(\pi_{Ord}(\prec_i | D) - (1-\varepsilon_\mathcal{H})\pi_{Ord}^{(\mathcal{H})}(\prec_i | D)\right)} \\
    &\quad= \sum_{i=1}^{p!} \sqrt{\pi_{Ord}^{(\mathcal{H})}(\prec_i | D)\pi_{Ord}(\prec_i | D)} \\
    &\quad\leq \sqrt{\sum_{i=1}^{p!} \pi_{Ord}^{(\mathcal{H})}(\prec_i | D)}\sqrt{\sum_{i=1}^{p!} \pi_{Ord}(\prec_i | D)} = 1
\end{align*}

The inequality on the last line above is a result of the Cauchy-Schwarz inequality. Equality occurs precisely when the vector square-root probabilities, i.e.,
\begin{align*}
    \begin{bmatrix}\sqrt{\pi_{Ord}^{(\mathcal{H})}(\prec_1 | D)} \\ \vdots \\ \sqrt{\pi_{Ord}^{(\mathcal{H})}(\prec_{p!} | D)}\end{bmatrix}, \begin{bmatrix}\sqrt{\pi_{Ord}(\prec_1 | D)} \\ \vdots \\ \sqrt{\pi_{Ord}(\prec_{p!} | D)}\end{bmatrix}
\end{align*}
are linearly dependent.

Since both $\pi_{Ord}^{(\mathcal{H})}$ and $\pi_{Ord}$ are probability distributions (whose components sum to 1), linear dependence implies equality of the distributions:
\begin{align*}
    \pi_{Ord}^{(\mathcal{H})}(\prec_i \mid D) &= \pi_{Ord}(\prec_i \mid D) \quad \forall i.
\end{align*}

By equation \ref{eq:mixture}, this implies $\pi_{Ord}^{(\mathcal{H})} = \pi_{Ord}^{(\neg\mathcal{H})}$.

\subsubsection{Proof of Lemma \ref{theorem:max_c}}\label{app:max_c}

Assume the PMFs $\pi_{Ord}^{(\mathcal{H})}(\cdot | D)$ and $\pi_{Ord}^{(\neg \mathcal{H})}(\cdot | D)$ have disjoint supports. This implies that for every order $\prec_i$, the pointwise product of the two distributions is zero:
\begin{align*}
    \pi_{Ord}^{(\mathcal{H})}(\prec_i|D) \cdot \pi_{Ord}^{(\neg\mathcal{H})}(\prec_i|D) &= 0 \quad \text{for all } i = 1, \dots, p!.
\end{align*}

We now evaluate the expression for $c$. Under the assumption of disjoint supports, each term in the summation simplifies as follows:
\begin{align*}
    \sum_{i=1}^{p!}\sqrt{(1-\varepsilon_\mathcal{H}) \pi_{Ord}^{(\mathcal{H})}(\prec_i|D)^2 + \varepsilon_\mathcal{H} \pi_{Ord}^{(\mathcal{H})}(\prec_i|D)\pi_{Ord}^{(\neg\mathcal{H})}(\prec_i|D)} &= \sum_{i=1}^{p!}\sqrt{(1-\varepsilon_\mathcal{H}) \pi_{Ord}^{(\mathcal{H})}(\prec_i|D)^2 + 0} \\
    &= \sqrt{1-\varepsilon_\mathcal{H}} \sum_{i=1}^{p!} \pi_{Ord}^{(\mathcal{H})}(\prec_i|D) \\
    &= \sqrt{1-\varepsilon_\mathcal{H}}.
\end{align*}
Since the sum equals $\sqrt{1-\varepsilon_\mathcal{H}}$, it follows from the definition of $c$ that $c=1$.

Conversely, assume $c=1$. By the definition of $c$, this requires:
\begin{align*}
    \sum_{i=1}^{p!}\sqrt{(1-\varepsilon_\mathcal{H}) \pi_{Ord}^{(\mathcal{H})}(\prec_i|D)^2 + \varepsilon_\mathcal{H} \pi_{Ord}^{(\mathcal{H})}(\prec_i|D)\pi_{Ord}^{(\neg\mathcal{H})}(\prec_i|D)} &= \sqrt{1-\varepsilon_\mathcal{H}}.
\end{align*}

Note that for all $i$, since probabilities are non-negative:
\begin{align*}
    \sqrt{(1-\varepsilon_\mathcal{H}) \pi_{Ord}^{(\mathcal{H})}(\prec_i|D)^2 + \varepsilon_\mathcal{H} \pi_{Ord}^{(\mathcal{H})}(\prec_i|D)\pi_{Ord}^{(\neg\mathcal{H})}(\prec_i|D)} \geq \sqrt{(1-\varepsilon_\mathcal{H}) \pi_{Ord}^{(\mathcal{H})}(\prec_i|D)^2}.
\end{align*}
Summing over all $i$ yields:
\begin{align*}
    \sum_{i=1}^{p!}\sqrt{(1-\varepsilon_\mathcal{H}) \pi_{Ord}^{(\mathcal{H})}(\prec_i|D)^2 + \varepsilon_\mathcal{H} \pi_{Ord}^{(\mathcal{H})}(\prec_i|D)\pi_{Ord}^{(\neg\mathcal{H})}(\prec_i|D)} \geq \sqrt{1 - \varepsilon_\mathcal{H}}.
\end{align*}

For this inequality to hold as an equality (which is required for $c=1$), each individual term in the sum must achieve equality. This occurs if and only if:
\begin{align*}
    \varepsilon_\mathcal{H} \pi_{Ord}^{(\mathcal{H})}(\prec_i|D)\pi_{Ord}^{(\neg\mathcal{H})}(\prec_i|D) &= 0 \quad \text{for all } i,
\end{align*}
which implies that the supports of $\pi_{Ord}^{(\mathcal{H})}(\cdot | D)$ and $\pi_{Ord}^{(\neg \mathcal{H})}(\cdot | D)$ are disjoint.

\subsection{Proof of Theorem \ref{theorem:distribution_class}}\label{app:distribution_class}

We define the spatial birth-death process following the framework of \cite{preston_spatial_1977}.

Let $\Lambda$ be a space where an individual point lives, equipped with a $\sigma$-field $\mathcal{B}$. Let $\Lambda_n$ be $n$ copies of $\Lambda$, $\Omega_n$ be the subset of $\Lambda_n$ that considers a permutation of the $n$ observed points the same, and $\Omega$ be the disjoint union of all $\Omega_n$ with accompanying $\sigma$-field $\mathcal{F}$.

The birth-death process is a continuous time jump process that evolves by adding (births) or removing (deaths) elements of $\Lambda$ over time. The process is characterized by birth and death rates. Let $B(x; \cdot)$ be a finite measure on $(\Lambda, \mathcal{B})$ describing the birth rate conditional on the current state $x$ and $D(x/\xi; \xi)$ be a $\mathcal{B}$-measurable function describing the death rate of individual point $\xi \in x$. Define $\beta, \delta: \Omega \to \mathbb{R}^{+}$ by
\begin{align*}
    \beta(x) &= B(x; \Lambda), x \in \Omega\\
    \delta(x) &= \sum_{\xi \in x} D(x / \xi; \xi) \text{ if } x \neq \emptyset, \delta(\emptyset)=0
\end{align*}

For any set $F \in \mathcal{F}$, let $F_n = F \cap \Omega_n$ denote the restriction of $F$ to the space of exactly $n$ observed points. Then, for a current state $x \in \Omega_n$, a birth results in a transition to $F_{n+1} \subset \Omega_{n+1}$, while a death results in a transition to $F_{n-1} \subset \Omega_{n-1}$. Rates $\beta$ and $\delta$ define the jump process with intensity of $\beta + \delta$, and transition kernels of:
\begin{align}
    K_{\beta}^{(n)}(x, F) &= \frac{1}{\beta(x)}B(x; \{\xi \in \Lambda: x \cup \xi \in F\}),\nonumber\\
    K_{\delta}^{(n)}(x, G) &= \frac{1}{\delta(x)}\sum_{\xi \in x, x/\xi \in G} D(x/\xi; \xi)\nonumber\\
    K(x, F) &= \frac{\beta(x)}{\beta(x)+\delta(x)} K_{\beta}^{(n)}(x, F_{n+1}) + \frac{\delta(x)}{\beta(x)+\delta(x)} K_{\delta}^{(n)}(x, F_{n-1})\label{eq:trans_kernel_bd}
\end{align}

Under mild regularity conditions (uniqueness of the solution of the Kolmogorov backward equations as specified in proposition 5.1 in \cite{preston_spatial_1977}, finite birth and death rates), the birth-death process admits a stationary distribution $\mu$ on $(\Omega, \mathcal{F})$ if and only if
\begin{align*}
    \int_F &\beta(z)+\delta(z) d\mu_n(z) = \int \beta(x)K_{\beta}^{(n-1)}(x, F)d\mu_{n-1}(x)+\int\delta(y)K_{\delta}^{(n+1)}(y, F)d\mu_{n+1}(y), \\
    &\text{ for } n \geq 1, F \in \mathcal{F}_n,
\end{align*}
which can alternatively be written as equations \ref{eq:det_bal_1} and \ref{eq:det_bal_2}.

To construct the stationary distribution, let $\omega$ be a $\sigma$-finite measure on $(\Lambda, \mathcal{B})$ and suppose there exists an $\mathcal{F} \times \mathcal{B}$-measurable function $b: \Omega \times \Lambda \to \mathbb{R}^+$ such that for all $x \in \Omega$, $F \in \mathcal{B}$ we have
\begin{align*}
    B(x; F) &= \int_F b(x, \xi) d\omega(\xi)
\end{align*}
For $n \geq 1$, let $\omega_n$ be the product of $n$ copies of $\omega$, and let $\overset{\sim}{\omega}_n$ be the measure induced by $\omega_n$ on $(\Omega_n, \mathcal{F}_n)$, and let $\overset{\sim}{\omega}_0$ be the point mass at 0.

Let $\mu$ be a probability measure on $(\Omega, \mathcal{F})$ such that for $n \geq 0$, $\mu_n$ has density $f_n$ with respect to $\overset{\sim}{\omega}=\frac{1}{n!}\overset{\sim}{\omega}_n$.

Suppose from now on that $D(x; \xi) > 0$ for all $x \in \Omega$, $\xi \in \Lambda$ and define 
\begin{align*}
    \gamma(x, \xi) &= \frac{b(x, \xi)}{D(x;\xi)}
\end{align*}

Then, 
\begin{align}
    f_{n+1}(x \cup \xi) = \gamma(x, \xi)f_n(x) \text{ for all } n \geq 0, x \in \Omega_n, \xi \in \Lambda \label{eq:recursive_definition}
\end{align}

If the $f$ is well-defined above, in the sense that
\begin{align}
    \gamma(x \cup \xi, \eta) \gamma(x, \xi) = \gamma(x \cup \eta, \xi)\gamma(x, \eta) \text{ for all } x \in \Omega, \xi, \eta \in \Lambda,\label{eq:balanced_rates}
\end{align}

then $\mu$ is defined as follows:
\begin{align}
    \mu(F) &= Z^{-1}\int_F f d\overset{\sim}{\omega} \text{ , with } Z = \int f d \overset{\sim}{\omega}\label{eq:recursive_distribution}
\end{align}

Thus, the induced stationary probability measure from a birth-death process can be recursively defined by the birth and death rates, so long as their ratio is well-defined across different $\xi, \eta \in \Lambda$.

We now apply this theory to the parameter space associated with restricted search space-order pair $(\mathcal{H}, \prec)$, as an extension of the local parameter vector $\theta_G$ (and later we can sum across all parameter sets for a fixed $(\mathcal{H}, \prec)$ to find space-order-level birth and death rates). We formally define the parameter set as the union of parameters across all consistent DAGs:
\begin{align}
\theta_{(\mathcal{H}, \prec)} := \bigcup_{G \in \mathcal{G}_{\mathcal{H}} \cap \mathcal{G}_{\prec}} \theta_G
\end{align}
In this setup, $\theta_{(\mathcal{H}, \prec)}$ characterizes the local parameter vector (e.g., the Cholesky entries $\ell_{ij}$ in a GDM) that are ``active'' under the current restrictions. We use $\pi_{\bm{G}, \theta_{\bm{G}}}$ to denote the posterior density associated with the random variable $\bm{G}$ and its parameters. We then define $\pi_{\bm{G}, \theta_{(\bm{H}, \bm{O})}}$ as the aggregated density obtained by summing over the set of DAGs consistent with a realization $(\mathcal{H}, \prec)$.

We define the birth rate associated with adding edge $e$, over a parameter set $F \subseteq \Theta_e$ in a similar way to the previous graph-space versions, but since we operate over search spaces-order pairs (which induce sets of graphs) instead of over graphs, we take an average across the elements in the induced set:
{\begin{align}
    B_e(\theta_{(\mathcal{H}, \prec)}; F) 
    &\propto \frac{1}{|\mathcal{G}_{\mathcal{H}^{+e}}|}\int_F \pi_{\bm{G}, \theta_{(\bm{H}, \bm{O})}}(\mathcal{H}^{+e}, \theta_{(\mathcal{H}, \prec)} \cup \theta_e \mid \prec, D) \, d\omega(\theta_e) \\
    &= \frac{1}{|\mathcal{G}_{\mathcal{H}^{+e}}|}\sum_{G \in \mathcal{G}_{\mathcal{H}^{+e}} } \mathbbm{1}[G \in \mathcal{G}_{\prec}]\int_F \Bigl(
       \mathbbm{1}[e \in E_G] \pi_{\bm{G}, 
       \theta_{\bm{G}}}(G, \theta_G^{-e} \cup \theta_e \mid \prec, D) \\
    &\quad\quad + \mathbbm{1}[e \notin E_G] \pi_{\bm{G}, \theta_{\bm{G}}}(G, \theta_G \mid \prec, D)
       \Bigr) d\omega(\theta_e) \nonumber \\
    &= \frac{1}{|\mathcal{G}_{\mathcal{H}^{+e}}|}\sum_{G \in \mathcal{G}_\mathcal{H} } \int_F \Bigl(
       \mathbbm{1}[G^{+e} \in \mathcal{G}_{\prec}] \pi_{\bm{G}, \theta_{\bm{G}}}(G^{+e}, \theta_G \cup \theta_e \mid \prec, D) \\
    &\quad\quad + \mathbbm{1}[G \in \mathcal{G}_{\prec}] \pi_{\bm{G}, \theta_{\bm{G}}}(G, \theta_G \mid \prec, D)
       \Bigr) d\omega(\theta_e) \nonumber
\end{align}}

We can then naturally define $b_e$ by differentiating the birth rate:
\begin{align}
    b_e(\theta_{(\mathcal{H}, \prec)}, \theta_e)
    &\propto \frac{1}{|\mathcal{G}_{\mathcal{H}^{+e}}|}\sum_{G \in \mathcal{G}_{\mathcal{H}}} \, \Big(\mathbbm{1}[G^{+e} \in \mathcal{G}_\prec]\pi_{\bm{G}, \theta_{\bm{G}}}(G^{+e}, \theta_G \cup \theta_e \mid \prec, D)\label{eq:birth_rate_param}\\
    &\quad+\mathbbm{1}[G \in \mathcal{G}_{\prec}]\pi_{\bm{G}, \theta_{\bm{G}}}(G, \theta_G \mid \prec, D)\Big)\nonumber
\end{align}
Consistent with \cite{preston_spatial_1977}, we define $\beta(\theta_{(\mathcal{H}, \prec)}) = \sum_{e \notin E_\mathcal{H}} B_e(\theta_{(\mathcal{H}, \prec)}; \Theta_e)$.

We define the death rate in a similar way, but introduce a constant $c^* \in (0, 1]$.
\begin{align}
    D_e(\theta_{(\mathcal{H}^{-e}, \prec)}; \theta_e) &\propto\frac{c^*}{|\mathcal{G}_{\mathcal{H}^{-e}}|} \pi_{\bm{G}, \theta_{(\bm{H}, \bm{O})}}(\mathcal{H}^{-e}, \theta_{(\mathcal{H}^{-e}, \prec)} \mid \prec, D) \\
    &= \frac{c^*}{|\mathcal{G}_{\mathcal{H}^{-e}}|}\sum_{G \in \mathcal{G}_{\mathcal{H}^{-e}}} \mathbbm{1}[G \in \mathcal{G}_{\prec}] \, \mathbbm{1}[\theta_G \cup \theta_e \in \theta_{(\mathcal{H}, \prec)}] \, \pi_{\bm{G}, \theta_{\bm{G}}}(G, \theta_G \mid \prec, D)\label{eq:death_rate_param}
\end{align}
We can also define $\delta(\theta_{(\mathcal{H}, \prec)}) = \sum_{e \in E_\mathcal{H}} D_e(\theta_{(\mathcal{H}^{-e}, \prec)}; \theta_e)$.

We still have not shown that these rates allow for $f$ to be well defined as shown in condition \ref{eq:balanced_rates}. We will do so now, showing that $\gamma$ is well-defined for any 2 random edges outside of $E_\mathcal{H}$, $e_1, e_2$. We need to show that 

\begin{align}
    \frac{b(\theta_{(\mathcal{H}^{+e_1}, \prec)}, \theta_{e_2})}{D(\theta_{(\mathcal{H}^{+e_1}, \prec)}, \theta_{e_2})}\frac{b(\theta_{(\mathcal{H}, \prec)}, \theta_{e_1})}{D(\theta_{(\mathcal{H}, \prec)}, \theta_{e_1})} &= \frac{b(\theta_{(\mathcal{H}^{+e_2}, \prec)}, \theta_{e_1})}{D(\theta_{(\mathcal{H}^{+e_2}, \prec)}, \theta_{e_1})}\frac{b(\theta_{(\mathcal{H}, \prec)}, \theta_{e_2})}{D(\theta_{(\mathcal{H}, \prec)}, \theta_{e_2})}\label{eq:balance_rates_graphs}
\end{align}
To prove \ref{eq:balance_rates_graphs}, we define the following 4 disjoint sets of graphs:
\begin{align*}
    G_0 &:= \{G: G \in \mathcal{G}_{\mathcal{H}}\}\\
    G_1 &:= \{G: e_1 \in G, G \in \mathcal{G}_{\mathcal{H}^{+e_1}}\}\\
    G_2 &:= \{G: e_2 \in G, G \in \mathcal{G}_{\mathcal{H}^{+e_2}}\}\\
    G_3 &:= \{G: e_1, e_2 \in G, G \in \mathcal{G}_{\mathcal{H}^{+e_1+e_2}}\}
\end{align*}

We first simplify the left-hand side of \ref{eq:balance_rates_graphs}:
\begin{align*}
    \frac{b(\theta_{(\mathcal{H}^{+e_1}, \prec)}, \theta_{e_2})}{D(\theta_{(\mathcal{H}^{+e_1}, \prec)}, \theta_{e_2})}\frac{b(\theta_{(\mathcal{H}, \prec)}, \theta_{e_1})}{D(\theta_{(\mathcal{H}, \prec)}, \theta_{e_1})}  &= \frac{1}{4c^{*^2}}\frac{\sum_{G \in G_1 \cup G_2 \cup G_3}\pi_{\bm{G}, \theta_{\bm{G}}}(G, \theta_G \mid \prec, D)}{\sum_{G \in G_0 \cup G_1}\pi_{\bm{G}, \theta_{\bm{G}}}(G, \theta_G \mid \prec, D)}\frac{\sum_{G \in G_0 \cup G_1}\pi_{\bm{G}, \theta_{\bm{G}}}(G, \theta_G \mid \prec, D)}{\sum_{G \in G_0}\pi_{\bm{G}, \theta_{\bm{G}}}(G, \theta_G \mid \prec, D)}\\
    &=\frac{1}{4c^{*^2}}\frac{\sum_{G \in G_1 \cup G_2 \cup G_3}\pi_{\bm{G}, \theta_{\bm{G}}}(G, \theta_G \mid \prec, D)}{\sum_{G \in G_0}\pi_{\bm{G}, \theta_{\bm{G}}}(G, \theta_G \mid \prec, D)}
\end{align*}

Now turning to the right-hand side:
\begin{align*}
    \frac{b(\theta_{(\mathcal{H}^{+e_2}, \prec)}, \theta_{e_1})}{D(\theta_{(\mathcal{H}^{+e_2}, \prec)}, \theta_{e_1})}\frac{b(\theta_{(\mathcal{H}, \prec)}, \theta_{e_2})}{D(\theta_{(\mathcal{H}, \prec)}, \theta_{e_2})} &= \frac{1}{4c^{*^2}}\frac{\sum_{G \in G_1 \cup G_2 \cup G_3}\pi_{\bm{G}, \theta_{\bm{G}}}(G, \theta_G \mid \prec, D)}{\sum_{G \in G_0 \cup G_2}\pi_{\bm{G}, \theta_{\bm{G}}}(G, \theta_G \mid \prec, D)}\frac{\sum_{G \in G_0 \cup G_2}\pi_{\bm{G}, \theta_{\bm{G}}}(G, \theta_G \mid \prec, D)}{\sum_{G \in G_0}\pi_{\bm{G}, \theta_{\bm{G}}}(G, \theta_G \mid \prec, D)}\\
    &=\frac{1}{4c^{*^2}}\frac{\sum_{G \in G_1 \cup G_2 \cup G_3}\pi_{\bm{G}, \theta_{\bm{G}}}(G, \theta_G \mid \prec, D)}{\sum_{G \in G_0}\pi_{\bm{G}, \theta_{\bm{G}}}(G, \theta_G \mid \prec, D)}
\end{align*}
Thus, the chosen rates yield a well-defined posterior distribution on $\left((\mathcal{H},\prec), \theta_{(\mathcal{H},\prec)}\right)$. Integrating across all parameter values, we can define $B_e(\mathcal{H}, \prec)$, $D_e(\mathcal{H}, \prec)$ as follows:
\begin{align}
    B_e(\mathcal{H}, \prec) &\propto \frac{\pi_{DAG}(\mathcal{H}^{+e} \mid \prec, D)}{|\mathcal{G}_{\mathcal{H}^{+e}}|} = \frac{\sum_{G \in \mathcal{G}_{\mathcal{H}^{+e}}}\pi_{DAG}(G \mid \prec, D)}{|\mathcal{G}_{\mathcal{H}^{+e}}|}\label{eq:birth_rate_edge}\\
    D_e(\mathcal{H}, \prec) &\propto \frac{c^*\pi_{DAG}(\mathcal{H}^{-e} \mid \prec, D)}{|\mathcal{G}_{\mathcal{H}^{-e}}|} = \frac{c^*\sum_{G \in \mathcal{G}_{\mathcal{H}^{-e}}}\pi_{DAG}(G \mid \prec, D)}{|\mathcal{G}_{\mathcal{H}^{-e}}|}\label{eq:death_rate_edge}
\end{align}

Dividing \ref{eq:birth_rate_edge} and \ref{eq:death_rate_edge} by $\frac{\pi_{DAG}(\mathcal{H} | \prec, D)}{|\mathcal{G}_\mathcal{H}|}$ (which does not change the proportionality) yields the rates in theorem \ref{theorem:distribution_class}. In addition, since we consider graphs with finite dimension (i.e., $p < \infty$), these rates satisfy proposition 5.1 of \cite{preston_spatial_1977} and are finite, ensuring the validity of the detailed balance condition.

\subsection{Proof of Corollary \ref{theorem:posterior_form}}\label{app:posterior_form}

Without loss of generality, for a search space $\mathcal{H}$ with $|E_{\mathcal{H}}|=k$, define a sequence of search spaces $\{\mathcal{H}_0, \mathcal{H}_1, \dots, \mathcal{H}_k = \mathcal{H}\}$, where $\mathcal{H}_0 = (V, \emptyset)$ is the empty graph, and each $\mathcal{H}_i = \mathcal{H}_{i-1} \cup \{e_i\}$ with $e_i \in E_{\mathcal{H}}$. From \eqref{eq:recursive_definition}, the stationary density $f$ satisfies the recursive relation $f_{n+1}(x \cup \xi) = \gamma(x, \xi) f_n(x)$. 

Let $f_{c^*}((\mathcal{H}_0, \prec), \theta_{((\mathcal{H}_0, \prec))}) =\pi_{DAG}(\mathcal{H}_0 | \prec, D)$ (since $\mathcal{H}_0$ is the empty graph, $\pi_{DAG}(\mathcal{H}_0, \theta_{\mathcal{H}_0} | \prec, D)=\pi_{DAG}(\mathcal{H}_0 | \prec, D)$). By induction, extending the recursive relationship across all $k$ dimensions, $f_{c^*}(\theta_{(\mathcal{H}, \prec)})$ can then be written as the following:

\begin{align*}
    f_{c^*}((\mathcal{H}, \prec), \theta_{(\mathcal{H}, \prec)}) &= f_{c^*}((\mathcal{H}_0, \prec), \theta_{(\mathcal{H}_0, \prec)})\prod_{i=1}^{k} \gamma(((\mathcal{H}, \prec), \theta_{(\mathcal{H}_{i-1}, \prec)}), ((\mathcal{H}^{+e_i}, \prec), \theta_{e_i}))\\
    &= \pi_{DAG}(\mathcal{H}_0 | \prec, D)\prod_{i=1}^{k}\frac{b(\theta_{(\mathcal{H}_{i-1}, \prec)}, \theta_{e_i})}{D(\theta_{(\mathcal{H}_{i-1}, \prec)}, \theta_{e_i})}\\
    &= \pi_{DAG}(\mathcal{H}_0 | \prec, D)\prod_{i=1}^{k}\frac{1}{2c^*}\frac{\sum_{G \in \mathcal{G}_{\mathcal{H}_i}}\pi_{\mathbf{G}, \Theta_{\mathbf{G}}}(G, \theta_G | \prec, D)}{\sum_{G \in \mathcal{G}_{\mathcal{H}_{i-1}}}\pi_{\mathbf{G}, \Theta_{\mathbf{G}}}(G, \theta_G | \prec, D)}\\
    &= \pi_{DAG}(\mathcal{H}_0 | \prec, D)(2c^*)^{-k}\frac{\sum_{G \in \mathcal{G}_{\mathcal{H}_k}}\pi_{\mathbf{G}, \Theta_{\mathbf{G}}}(G, \theta_G | \prec, D)}{\pi_{DAG}(\mathcal{H}_0 | \prec, D)}\\
    &= (2c^*)^{-|E_{\mathcal{H}}|}\sum_{G \in \mathcal{G}_{\mathcal{H}}}\pi_{\mathbf{G}, \Theta_{\mathbf{G}}}(G, \theta_G | \prec, D)
\end{align*}

Integrating out $\theta_{(\mathcal{H}, \prec)}$ makes $f_{c^*}(\mathcal{H}, \prec) = (2c^*)^{-|E_{\mathcal{H}}|}\sum_{G \in \mathcal{G}_{\mathcal{H}_i}}\pi_{DAG}(G| \prec, D)$

Summing over all possible orders $\prec \in \mathbb{S}^p$ to obtain the marginal density for $\mathcal{H}$:

\begin{align}
    f_{c^*}(\mathcal{H}) = \sum_{\prec \in \mathbb{S}^p} f_{c^*}(\mathcal{H}, \prec) = \sum_{\prec \in \mathbb{S}^p} (2c^{*})^{-|E_{\mathcal{H}}|}\pi_{DAG}(\mathcal{H} | \prec, D)
\end{align}

Using \eqref{eq:recursive_definition},
\begin{align}
    \mu_{Space}^{(c^*)}(\mathcal{H}) \propto (2c^{*})^{-|E_{\mathcal{H}}|}\sum_{\prec \in \mathbb{S}^p} \pi_{DAG}(\mathcal{H} | \prec, D).
\end{align}

\subsection{Proof of Lemma \ref{theorem:mixture_kernel_bdmcmc}}\label{app:mixture_kernel_bdmcmc}
Consider $\mu$ defined according to \ref{eq:recursive_distribution}. Since it is recursively defined such that $\frac{\mu(x \cup \xi)}{\mu(x)} = \frac{f(x \cup \xi)}{f(x)} = \gamma(x, \xi) = \frac{b(x, \xi)}{D(x; \xi)}$, we can directly plug this in to the density ratio term in \ref{eq:acceptance_ratio}, $\frac{g(x \cup \xi)}{g(x)}$. 

Following the setup of sampling from the birth-death process, we define $\mathsf{b}(x, \xi) = \frac{b(x, \xi)}{\beta(x)}$ (in order to have $\mathsf{b}$ be a density), similarly define $\mathsf{D}(x; \xi) = \frac{D(x;\xi)}{\delta(x)}$, and set $q(x) = \frac{\beta(x)}{\beta(x) + \delta(x)}$, and $1-q(x) = \frac{\delta(x)}{\beta(x)+\delta(x)}$ (which is compatible with the birth-death process transition kernel, as shown in equation \ref{eq:trans_kernel_bd}). We get the following simplification of \ref{eq:acceptance_ratio}:

\begin{align*}
    r(x, \xi) &= \frac{g(x \cup \xi)}{g(x)}\frac{1-q(x \cup \xi)}{q(x)}\frac{\mathsf{D}(x; \xi)}{\mathsf{b}(x, \xi)}\\
    &= \frac{b(x, \xi)}{D(x;\xi)} \frac{\frac{\delta(x \cup \xi)}{\beta(x \cup \xi)+\delta(x \cup \xi)}}{\frac{\beta(x)}{\beta(x) + \delta(x)}}\frac{\frac{D(x; \xi)}{\delta(x \cup \xi)}}{\frac{b(x, \xi)}{\beta(x)}}\\
    &= \frac{\frac{1}{\beta(x \cup \xi) + \delta(x \cup \xi)}}{\frac{1}{\beta(x) + \delta(x)}} = \frac{w(x \cup \xi)}{w(x)}
\end{align*}

\section{Implementation}

\subsection{Updating the Plus-One Banned Matrix}\label{app:plone_update}
\subsubsection{Background}\label{app:score_table_bkgrd}
In the efficient scoring techniques proposed in \cite{kuipers_efficient_2021} and \cite{viinikka_towards_2020}, it is standard to create a ``banned'' score table tabulating all of the valid scores that are compatible with particular ``banned'' parent possibilities (to allow constant-time lookup of valid parent-set scores under order constraints). Consider a node $i$ with $pa_\mathcal{H}(i)$ its allowed parents under the search space $\mathcal{H}$. Then, the past work would first create a $2^{|pa_\mathcal{H}(i)|} \times 1$ table of scores, as well as a table of ``banned'' scores of the same dimensions. Table \ref{table:score_tables} illustrates these tables for a node $i$ with 3 possible parents in $\mathcal{H}$ (marked by $h_1^i, h_2^i, h_3^i$). Scoring orders in the restricted search scoring problem then amounts to identifying the banned parent configuration implied by the order for particular node $i$, which can be found by intersecting $pa_{\mathcal{H}}(i)$ with $\left(\cup_{j=\prec_{[i]}}^{p} \prec_j\right)$. Algorithmically, to evaluate an order's score in $O(Kp)$, one can encode each row in the table to a unique integer, where the binary mapping of the integer describes the allowed and banned parents. For instance, $4$ in binary is $100$, indicating that $S_4^i$ in the left table \ref{table:score_tables} corresponds to the score of just including the third parent $h_3^i$ whereas $5$ is $011$, corresponding to the score of including $h_1^i$ and $h_2^i$; the banned equivalent can be found by subtracting the integer from $2^{|pa_{\mathcal{H}}(i)|}-1$. 

\begin{table}[ht!]
\centering
\begin{tabular}{|c|c||c|c|}
\hline
\textbf{Parents} & \textbf{Parent score} & \textbf{Banned parents} & \textbf{Banned parent score} \\
\hline
$\emptyset$ & $S_0^i = S(X_i, \{\emptyset\} \mid D)$ & $\emptyset$ & $B_7^i = S_0^i + \dots + S_7^i$ \\
\hline
$h_1^i$ & $S_1^i = S(X_i, \{h_1^i\} \mid D)$ & $h_1^i$ & $B_6^i = S_0^i + S_2^i + S_4^i + S_6^i$ \\
\hline
$h_2^i$ & $S_2^i = S(X_i, \{h_2^i\} \mid D)$ & $h_2^i$ & $B_5^i = S_0^i + S_1^i + S_4^i + S_5^i$ \\
\hline
$h_3^i$ & $S_4^i = S(X_i, \{h_3^i\} \mid D)$ & $h_3^i$ & $B_3^i = S_0^i + S_1^i + S_2^i + S_3^i$ \\
\hline
$h_1^i, h_2^i$ & $S_3^i = S(X_i, \{h_1^i, h_2^i\} \mid D)$ & $h_1^i, h_2^i$ & $B_4^i = S_0^i + S_4^i$ \\
\hline
$h_1^i, h_3^i$ & $S_5^i = S(X_i, \{h_1^i, h_3^i\} \mid D)$ & $h_1^i, h_3^i$ & $B_2^i = S_0^i + S_2^i$ \\
\hline
$h_2^i, h_3^i$ & $S_6^i = S(X_i, \{h_2^i, h_3^i\} \mid D)$ & $h_2^i, h_3^i$ & $B_1^i = S_0^i + S_1^i$ \\
\hline
$h_1^i, h_2^i, h_3^i$ & $S_7^i = S(X_i, \{h_1^i, h_2^i, h_3^i\} \mid D)$ & $h_1^i, h_2^i, h_3^i$ & $B_0^i = S_0^i$ \\
\hline
\end{tabular}
\caption{An example score table (left) and ``banned'' score table (right) as shown in \cite{kuipers_efficient_2021} for a node $i$ with 3 possible parents in $\mathcal{H}$.}\label{table:score_tables}
\end{table}

To expand the space for node $i$ by one parent, the past work makes two-dimensional versions of the previous score tables by adding exactly one arbitrary parent outside of $\mathcal{H}$ in a unique column (resulting in $2^{|pa_\mathcal{H}(i)|} \times (p-|pa_\mathcal{H}(i)|-1)$ sized tables). Table \ref{table:plus_scores_table} shows an example of the ``plus-one'' parent node table, for the same node $i$ as table \ref{table:score_tables}. Each column corresponds to adding one possible parent not in $\mathcal{H}$, where in this case, the graphs are in $\mathbb{G}_7$ (so there are three ``plus-one'' candidates to consider). 

\begin{table}[ht!]
\centering
\begin{tabular}{|c|c|c|c|}
\hline
\textbf{Parents} & \textbf{$+h_4^i$} & \textbf{$+h_5^i$} & \textbf{$+h_6^i$} \\
\hline
$\emptyset$ & $S(X_i, \{\emptyset, h_4^i\} \mid D)$ & $S(X_i, \{\emptyset, h_5^i\} \mid D)$ & $S(X_i, \{\emptyset, h_6^i\} \mid D)$ \\
\hline
$h_1^i$ & $S(X_i, \{h_1^i, h_4^i\} \mid D)$ & $S(X_i, \{h_1^i, h_5^i\} \mid D)$ & $S(X_i, \{h_1^i, h_6^i\} \mid D)$ \\
\hline
$h_2^i$ & $S(X_i, \{h_2^i, h_4^i\} \mid D)$ & $S(X_i, \{h_2^i, h_5^i\} \mid D)$  & $S(X_i, \{h_2^i, h_6^i\} \mid D)$  \\
\hline
$h_3^i$ & $S(X_i, \{h_3^i, h_4^i\} \mid D)$ & $S(X_i, \{h_3^i, h_5^i\} \mid D)$ & $S(X_i, \{h_3^i, h_6^i\} \mid D)$ \\
\hline
$h_1^i, h_2^i$ & $S(X_i, \{h_1^i, h_2^i, h_4^i\} \mid D)$ & $S(X_i, \{h_1^i, h_2^i, h_5^i\} \mid D)$ & $S(X_i, \{h_1^i, h_2^i, h_6^i\} \mid D)$ \\
\hline
$h_1^i, h_3^i$ & $S(X_i, \{h_1^i, h_3^i, h_4^i\} \mid D)$ & $S(X_i, \{h_1^i, h_3^i, h_5^i\} \mid D)$ & $S(X_i, \{h_1^i, h_3^i, h_6^i\} \mid D)$ \\
\hline
$h_2^i, h_3^i$ & $S(X_i, \{h_2^i, h_3^i, h_4^i\} \mid D)$ & $S(X_i, \{h_2^i, h_3^i, h_5^i\} \mid D)$ & $S(X_i, \{h_2^i, h_3^i, h_6^i\} \mid D)$ \\
\hline
$h_1^i, h_2^i, h_3^i$ & $S(X_i, \{h_1^i, h_2^i, h_3^i, h_4^i\} \mid D)$ & $S(X_i, \{h_1^i, h_2^i, h_3^i, h_5^i\} \mid D)$ & $S(X_i, \{h_1^i, h_2^i, h_3^i, h_6^i\} \mid D)$ \\
\hline
\end{tabular}
\caption{An example ``plus'' score table in $\mathbb{G}_7$ for a node $i$ with 3 possible parents in $\mathcal{H}$. Rows correspond to base parent sets in $\mathcal{H}$; columns correspond to the added parent.}\label{table:plus_scores_table}
\end{table}

From there, a ``banned'' plus score table can be created by matching the appropriate rows of the plus score table to specific banned parent sets, in the same way displayed in table \ref{table:score_tables}. When scores are stored in log-space for numerical stability, computing the ``banned'' score requires the use of the LogSumExp (LSE) function:
\begin{align*}
    LSE(x_1,...,x_m) &= \log(\exp(x_1)+...+\exp(x_m))
\end{align*}

In previous implementations of restricted order MCMC, BGe \citep{heckerman_learning_1995} score tables take $O(K^3p2^K)$ to compute, BGe plus score tables take $O(K^3p2^K(p-K-1))$ to compute (by repeating the score table process for each of the $(p-|pa_\mathcal{H}(i)|-1)$ excluded nodes from $\mathcal{H}$, and the ``banned'' score table takes $O(2^Kp)$ to compute. Creating the plus ``banned'' score table is also $O(2^Kp)$, as the row-matching process to create the $B^i$ values illustrated in table \ref{table:score_tables} can be applied to each of the $(p-|pa_\mathcal{H}(i)|-1)$ columns of the plus score table in a vectorized implementation.

As mentioned in Section \ref{subsec:comp_considerations}, BROOD only performs a transdimensional update step on one node as opposed to all nodes like the previous work. As a result, all BROOD transdimensional updates eliminate a factor of $p$ by construction.

Below, we discuss two further ways to speed up the implementation for our birth-death version of the restricted order MCMC sampler.

\subsubsection{Efficient Expansions for BGe score via Block Matrix Operations}\label{app:plone_bge_score}

We aim to reduce the largest computational cost: constructing the plus score table for the BGe score. A naive approach yields a cost of $O(K^32^K(p-K-1))$, where the $K^3$ terms comes from the cubic computational cost of finding a determinant, as required in the BGe score. The total cost arises from filling all entries of a single node’s plus score table (using decomposable scores). There are $2^K$ rows per table (for the valid subsets of parent sets), and $p - K - 1$ columns (for each of the excluded possible parents). Previous literature calculates the BGe score directly for each of the $2^K(p-K-1))$ entries of the plus score table, resulting in the full cubic cost per cell.

We reduce the cost by applying intermediate matrix calculations to the plus-one scores via rank-one update formulas. This allows us to avoid re-calculating the BGe score for the plus-one parent sets by reusing the Cholesky factorization of the base parent set (reducing the number of BGe scores to directly calculate to the $2^K$ original scores instead of all $2^K \times (p-K-1)$ entries). Paired with vectorization across all $p-K-1$ candidate excluded parents, we generate all plus-one scores per row (as illustrated in table \ref{table:plus_scores_table}) in a batch. 

Specifically, the BGe score for a parent set $pa(i)$ on a node $i$ has two components: a normalizing constant that depends on prior parameters, and a data-dependent term that involves a ratio of two determinants raised to different powers. The two terms in this ratio are:
\begin{enumerate}
    \item $|R_{pa(i), pa(i)}|^{(\alpha^* - p + |pa(i)|)/2}$, and
    \item $|R_{pa(i) \cup \{i\}, pa(i) \cup \{i\}}|^{(\alpha^* - p + |pa(i)|+1)/2}$
\end{enumerate}
where $R$ is the scale matrix in the posterior BGe score, and $\alpha^*$ is the degrees of freedom parameter in the posterior BGe score. Previous work \citep{kuipers_addendum_2014} implemented the ratios efficiently by expressing the reciprocal of the ratio of determinants in terms of a Schur complement, and using a modified Cholesky decomposition to solve for the appropriate determinants. These computations can be performed on the log-scale by converting multiplicative components to sums and additive components to LSEs.

We calculate determinants on the plus-one parent sets by using the rank-one update formulas \citep{osborne_bayesian_2010} for Cholesky decompositions. This reduces the cost per calculation from $O(K^3)$ to $O(K^2)$. The rank-one updates can also be vectorized across all possible plus-one sets for a specific parent set, causing the total cost of a row of the plus score table to be $O(K^2)$ rather than $O(K^3(p-K-1))$ (and the overall cost to be $O(K^22^K)$ instead of $O(K^32^K(p-K-1))$). Lowering the computational burden of the plus score table makes the original score table (as illustrated on the left-hand side of table \ref{table:score_tables}) the main computational bottleneck. In our experiments for an 80-node problem with a maximal allowed sparsity of 18, this optimization yields a $\sim 30\times$ speedup over the naive implementation on creating the score table and plus score table.

\subsubsection{Efficient Contractions for Decomposable Scores via Memoization}\label{app:plone_memoize}
In contraction steps, an edge is removed from the search space, so the score tables must be updated to reflect this change. However, since contraction yields a subset of the previous parent configurations, no new score calculations are needed. In fact, all plus scores and banned plus scores on the contracted space can be taken directly from the existing score tables, or can be found from them with simple arithmetic operations.

Suppose we contract the space by 1 parent for $i$ in $\mathcal{H}$ where we remove a parent $j$ from $pa(i)$. We will have new 2-dimensional plus tables that are $2^{|pa_\mathcal{H}(i)|-1} \times (p-|pa_\mathcal{H}(i)|)$ instead of $2^{|pa_\mathcal{H}(i)|} \times (p-|pa_\mathcal{H}(i)|-1)$. To construct the plus score table, we reuse scores from the original table to populate the added column (corresponding to $+h^i_j$) by mapping the appropriate $2^{|pa_\mathcal{H}(i)|-1}$ removed entries to the appropriate row in the new table. This is straightforward to do using the same mapping system as previous literature \citep{kuipers_efficient_2021}.

It takes more care to populate the new banned score table. Any entry that previously banned the removed parent $j$ remains valid and retains its relative position in the new table. For example, if $h^i_2$ is removed as a potential parent in table \ref{table:score_tables}, then the $B^i_5, B^i_4, B^i_1$, and $B^i_0$ banned scores -- which correspond to the subsets not including $h^i_2$ -- are preserved (and this will also be true for the plus-banned columns in the plus-banned score table). In addition, the retained entries are already ordered correctly from original versions of the tables (so in table \ref{table:score_tables}, the ordering of $B^i_5, B^i_4, B^i_1, B^i_0$ is appropriate for an updated contracted version of the table).

To populate the new column for re-adding $j$ in the plus-banned score table, we must subtract each retained score in the contracted banned score table from the removed supersets from the original table. For instance, continuing the example of re-adding $h^i_2$, its first entry of the $+h^i_2$ column will be $B^i_7 - B^i_5$, as this contains all graphs that would ban no parents if we force $h^i_2$ to be a parent (and the other 3 entries in the column will be $B^i_6 - B^i_4$, $B^i_3-B^i_1$, and $B^i_2-B^i_0$). When working with scores on the log-scale, this requires us to use the inverse of the LSE, which we define as LogMinusExp (LME):
\begin{align}
    \text{Suppose } LSE(a,b) &= c,\nonumber\\
    \text{then } LME(c,b) &= a, LME(c, a) = b\label{eq:lme}
\end{align}

To ensure numerical stability of the LME, especially for small numbers, it is advisable to use the \texttt{log1p} and \texttt{expm1} functions in R (or equivalents in other languages) as guided by \cite{machler_log1mexp}.

Since all values in a contracted table are derived from its pre-contracted counterpart, we can use memoization to update the tables in $O(K)$ time. As a result, updating the space to reflect a contraction does not add computational overhead to the birth-death restricted search sampler.

\section{Simulation Details}

\subsection{Setup}\label{app:sim_setup}
As mentioned in Section \ref{subsec:sim_setup}, we consider linear acyclic causal models. Specifically, they all use the following structure:
\begin{align*}
    X_i &= e_i + \sum_{X_j \in \bm{Pa}(G)_i} B_{ji}X_j\\
    B_{ji} &\sim \text{Uniform}(0.4, 2)\\
    G &\sim \text{Graph Probability Model of Choice}\\
    e_i &\sim \text{Data Error Model of Choice}
\end{align*}

\begin{table}[ht!]
\centering
\small
\setlength{\tabcolsep}{6pt}
\renewcommand{\arraystretch}{1.15}
\begin{tabular}{p{0.36\linewidth} p{0.58\linewidth}}
\hline
\multicolumn{2}{c}{\textbf{Configuration Components}} \\
\hline

Nodes ($p$) 
& $\{20,\;80,\;140,\;200\}$ \\

Sample size ($n$) 
& $n \in \{0.5p,\;2p,\;10p\}$ \\

Graph model for $G$ 
& Erd\H{o}s--R\'enyi (ER); \\
& Stochastic Block Model (SBM) (2 blocks); \\
& Hierarchical SBM with 3 clusters (hSBM) (1-, 2-, and 3-block structure) \\

Error model for $e_i$ 
& Gaussian noise: $N(0,1)$; \\
& Functional causal model with mixture error: \\
& $0.5\,N(0,1) + 0.5\,N(0,2)$ \\

Sparsity regime 
& (i) $1 + K_{\text{init}}$; (ii) Fixed sparsity across all nodes\textsuperscript{\textdagger} \\

Initial search space 
& PC-algorithm based; \\
& GES-based\textsuperscript{\textsection}; \\
& Kuipers et al.\ iterative expansion \\

Random seeds 
& 25 per configuration \\

MCMC replicates
& 4 chains for $p \in \{20,80\}$; 1 chain for $p \in \{140,200\}$ \\

\hline
\multicolumn{2}{c}{\textbf{Outputs}} \\
\hline

DAG inference outputs 
& BROOD; BiDAG; BROOD${+}$ \\

Skeleton inference outputs 
& BROOD; BiDAG; BDgraph; BROOD${+}$ \\

Evaluation Metrics & ROC AUC; PR AUC; $Pr^{+}$; $Pr^{-}$; Run-Time\textsuperscript{\maltese} \\

\hline
\end{tabular}
\caption{
Summary of synthetic-data experiments.
All data are generated according to the linear acyclic causal model
$X_i = e_i + \sum_{X_j \in \bm{Pa}(G)_i} B_{ji} X_j$.
}
\label{table:experiment_summary}
\vspace{1mm}
\raggedright
\footnotesize
\textsuperscript{\textdagger}Fixed-sparsity experiments were not conducted for $p=200$.\\
\textsuperscript{\textsection}GES-based initialization was only used for fixed-sparsity experiments.\\
\textsuperscript{\maltese}$Pr^{+}$ and $Pr^{-}$ are the estimated mean true edge probability, and mean non-edge probability, respectively.
\end{table}

Table \ref{table:experiment_summary} describes the holistic set of synthetic-data experiments. 

\subsubsection{Graph Scores}
The Bayesian Gaussian equivalence (BGe) score \citep{heckerman_learning_1995} and G-Wishart score \citep{aliye_montecarlo_2005} were used for BiDAG and BDgraph, respectively. The two primary difference between these scores are (1) the BGe score is defined for DAGs while the G-Wishart is a score for undirected graphs, and (2) the BGe score utilizes an unconstrained Wishart prior on the precision matrix, whereas the G-Wishart score employs a constrained prior. Specifically, the G-Wishart distribution is restricted to the subspace of precision matrices that have the same sparsity as the graph (so if an edge $j-i$ is not in the graph, then its score will only consider the set of precision matrices with a 0 at index $(j,i)$).

\subsubsection{Graph Model Setup}

For the Erd\H{o}s--R\'enyi experiments, we use the \texttt{randDAG} function from the \texttt{pcalg} package in R to sample the  model, specifying the expected node degree to be 4. This amounts to a connection probability of $\sim0.21$ for 20 nodes, and $\sim0.02$ for 200 nodes. 

For Stochastic Block Model (SBM) and Hierarchical Stochastic Block Model (hSBM) experiments, we respectively employ the \texttt{sample\_sbm} and \texttt{sample\_hierarchical\_sbm} functions from the \texttt{igraph} package in R to sample directed graphs. To ensure acyclicity, we then sample a random topological order of the nodes and retain only those edges consistent with the order. 

The SBM consists of two blocks, with node assignment drawn independently, with probability 0.8 of block 1 and 0.2 of block 2. Conditional on block membership, directed edges are generated independently according to the following connection probability matrix:
\begin{align}
    C_{SBM} &= \begin{bmatrix}
        \min(0.15, \frac{6}{p}) & \min(0.01, \frac{2}{p}) \\
        \min(0.01, \frac{2}{p}) & \min(0.08, \frac{4}{p})
    \end{bmatrix}\label{eq:sbm_connect_matr}
\end{align}
This model creates two blocks where within-block connections are likelier than between-block connections. 

The hSBM introduces a richer community structure by partitioning the nodes into three top-level clusters, with cluster sizes proportional to $(0.1, 0.3, 0.6)$. Each cluster is generated according to unique block models:
\begin{enumerate}
    \item A single-block cluster with $\min(0.1, \frac{4}{p})$ connection probability
    \item A two-block cluster with $(\frac{1}{3}, \frac{2}{3})$ probability of assignment and the same connection probability matrix as the SBM in~\eqref{eq:sbm_connect_matr}
    \item A three-block cluster with $(\frac{1}{6}, \frac{1}{3}, \frac{1}{2})$ probability of assignment and a connection probability matrix of 
    $\begin{bmatrix}
        \min(0.2, \frac{8}{p}) & \min(0.01, \frac{1}{p}) & \min(0.2, \frac{8}{p}) \\ 
        \min(0.01, \frac{1}{p}) & 0 & \min(0.08, \frac{4}{p}) \\ 
        \min(0.2, \frac{8}{p}) & \min(0.08, \frac{4}{p}) & \min(0.08, \frac{4}{p})
    \end{bmatrix}$
\end{enumerate}
The third cluster introduces more complex relationships both within and between blocks; while blocks 1 and 3 within this setup exhibit relatively strong within- and between-block connectivity, block 2 enforces near-deterministic separation within a set of nodes by assigning zero probability to within-block connections, but allows for relatively strong connectivity to nodes within block 3. This hierarchical design yields graphs with both community and anti-community structure.

\subsubsection{Initial Search Space Setup}

To create all search spaces, we impose a fixed sparsity capacity used across all experiments (to ensure our initial search space does not exceed our computation budget). For the plus-one experiments, this amounts to a sparsity capacity of $\max(10, 3+0.05p)$, and for the fixed sparsity experiments, this amounts to a sparsity capacity of $\max(12, \text{round}(3+0.06p ))$.

To create the search space from the PC algorithm, we use the \texttt{skeleton} function from the \texttt{pcalg} package in R, using Gaussian independence hypothesis tests, and a hypothesis testing threshold of $\min
\left(0.4, \frac{20}{p}\right)$.

We use the \texttt{iterativeMCMC} function from \texttt{BiDAG} to run the iterative expansion procedure proposed in \cite{kuipers_efficient_2021}. In addition to imposing the maximal fixed sparsity used across all experiments as the hard limit for parent set sizes, we also use a soft limit of 10, where with fewer than 10 candidate parents, we include all edges in the undirected skeleton of the MAP DAG (so if $i \to j$ is an edge in the search space, so will $j \to i$), but beyond this, only edges in the MAP DAG are included. We use the sample hypothesis testing threshold as we used for the PC algorithm to initialize this iterative procedure.

To create the GES search space, we use the \texttt{ges} function from the \texttt{pcalg} package, using the BIC score (i.e., using an $L_0$-penalized Gaussian maximum likelihood estimator, and a penalty parameter of $\log(n)$).

We used the default values for all other parameters for the functions we described above.

\subsubsection{MCMC Setup}

We employ the \texttt{orderMCMC} function in \texttt{BiDAG} and the \texttt{bdgraph} function in \texttt{BDgraph} to run BiDAG and BDgraph, respectively.

We run all chains for $B= \min(25000, \lceil p^2\log(p) \rceil)$ steps, with an additional warm-up of $\lfloor 0.1B \rfloor$ steps. For BROOD, we use the defaults of $c^* = 1$ and $\ell = 0.1$ for all experiments, as discussed in section \ref{subsec:param_choices}. To analyze estimated edge probabilities, we thin BROOD and BiDAG to keep 2500 samples; BDgraph returns edge probabilities directly from running it without a thinning parameter available, so we keep all of its samples in edge probability estimates.

We estimate all edge probabilities directly from the graph samples of an algorithm. Suppose $G^{(1)},...,G^{(B)}$ are the set of graph samples across an MCMC chain under model $\mathcal{M}$. Then, the estimated edge probability of $i \to j$ under $\mathcal{M}$ is:
\begin{align*}
    \widehat{\Pr}_{\mathcal{M}}(i \to j) &= \frac{1}{B}\sum_{b=1}^B \mathbbm{1}_{i \to j \in E_{G^{(b)}}}
\end{align*}
where $E_G$ is the edge set of a graph $G$.

For estimating the skeleton edge probability from DAG samples, we modify the formula:
\begin{align*}
    \widehat{\Pr}_{\mathcal{M}}(i-j) &= \frac{1}{B}\sum_{b=1}^B \Big(\mathbbm{1}_{i \to j \in E_{G^{(b)}}} + \mathbbm{1}_{j \to i \in E_{G^{(b)}}}\Big)
\end{align*}

Let $G_{true}$ be the true DAG used to generate the data, $\mathcal{E}^+ := \{(i,j): i\to j \in E_{G_{true}}\}$, and $\mathcal{E}^- := \{(i,j): i\to j \notin E_{G_{true}}\}$ (and for skeleton analysis, we redefine $\mathcal{E}^+ := \{(i,j): (i\to j \in E_{G_{true}}) \cup (j\to i \in E_{G_{true}})\} $, and analogously redefine $\mathcal{E}^{-}$). Then, from $\widehat{\Pr}_{\mathcal{M}}$, we output the Receiver Operator Characteristic Area Under the Curve (ROC AUC), the Precision-Recall Area Under the Curve (PR AUC), as well as $Pr^{+}$ and $Pr^{-}$, which are defined as:
\begin{align*}
    Pr^{+}(\mathcal{M}) &= \frac{1}{|\mathcal{E}^+|}
\sum_{(i,j) \in \mathcal{E}^+}
\widehat{\Pr}_{\mathcal{M}}(i \to j)\\
    Pr^{-}(\mathcal{M}) &= \frac{1}{|\mathcal{E}^-|}
\sum_{(i,j) \in \mathcal{E}^-}
\widehat{\Pr}_{\mathcal{M}}(i \to j)
\end{align*}

To compute ROC AUC and PR AUC, we treat each node pair $(i,j)$ as a binary instance, with assignment of 1 when $(i,j) \in \mathcal{E}^+$, and 0 when $(i,j) \in \mathcal{E}^-$. We then compute ROC and precision–recall curves by thresholding $\widehat{\Pr}_{\mathcal{M}}$ across all $(i,j)$ and report the corresponding areas under the curve.

\subsection{Further Results}\label{app:further_sims}

\subsubsection{Performance: Gaussian SEM, Plus-One Runs, DAG Output}\label{app:gauss_plus1_dag}

In the main paper, Figures \ref{fig:auc_roc_gauss_dag} and \ref{fig:auc_pr_gauss_dag} evaluate model quality on the DAG output for Gaussian SEM data, with plus-one sparsity capping for BROOD and BiDAG. Figure \ref{fig:prplus_gauss_dag_plus1} shows the models' mean edge probability ($Pr^+$) and Figure \ref{fig:prminus_gauss_dag_plus1} shows the mean non-edge probability ($Pr^-$) for these data. Similar to the findings in the main text, BiDAG performs well relative to BROOD in situations where using a single restricted fixed search space that is found by an iterative MAP procedure can be effective, like when there are few nodes, when there is a lot of data relative to the number of nodes, and when the graph model is expected to yield uni-modal posterior scores. BROOD performs better as the number of nodes increases, as the ratio between the number of observations and the number of nodes decreases, and when the posterior scores have a higher chance of being multi-modal.

\begin{figure}
    \centering
    \includegraphics[width=0.9\linewidth]{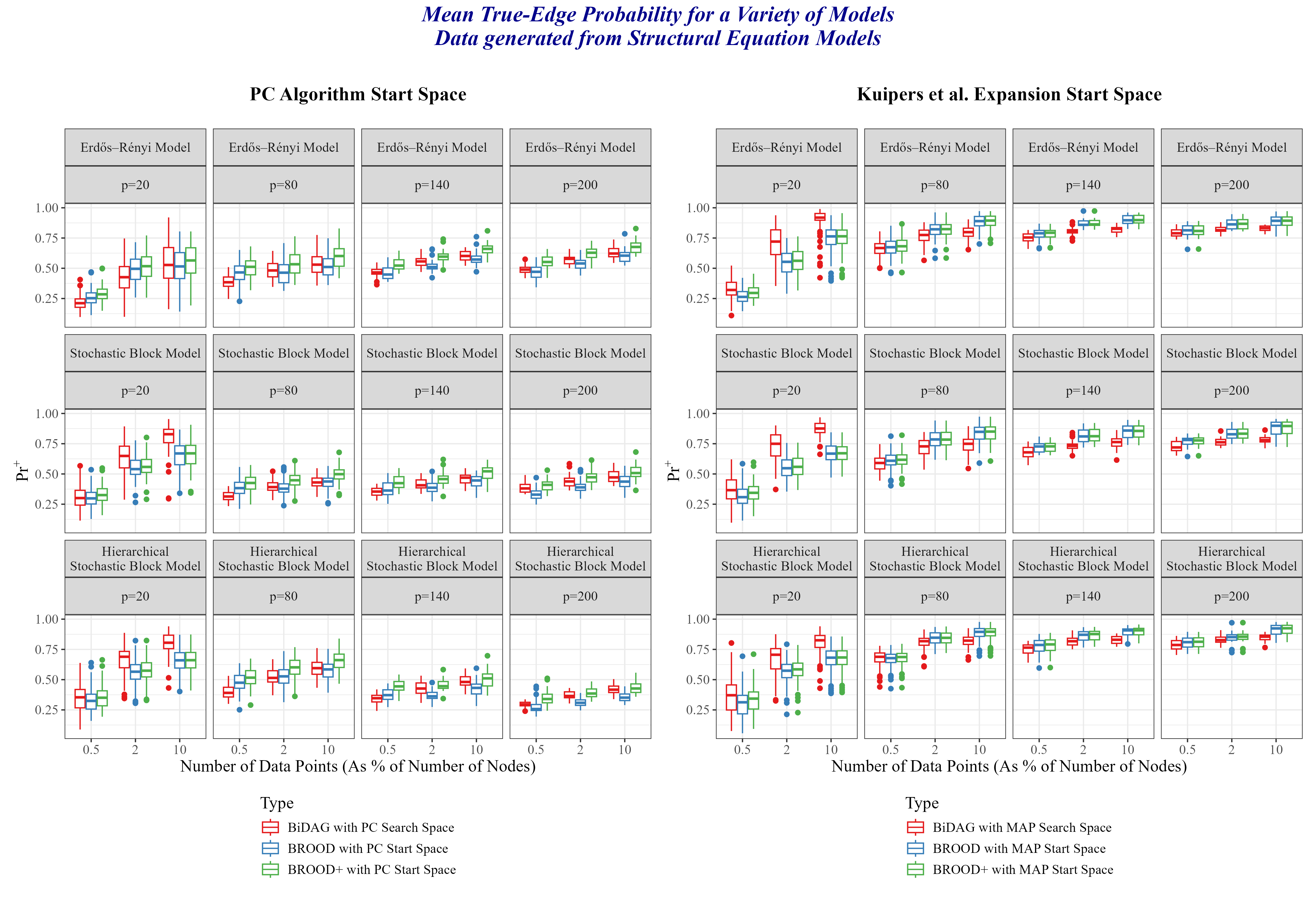}
    \caption{$Pr^+$ results with Gaussian SEM data, using plus-one sparsity.}
    \label{fig:prplus_gauss_dag_plus1}
\end{figure}

\begin{figure}
    \centering
    \includegraphics[width=0.9\linewidth]{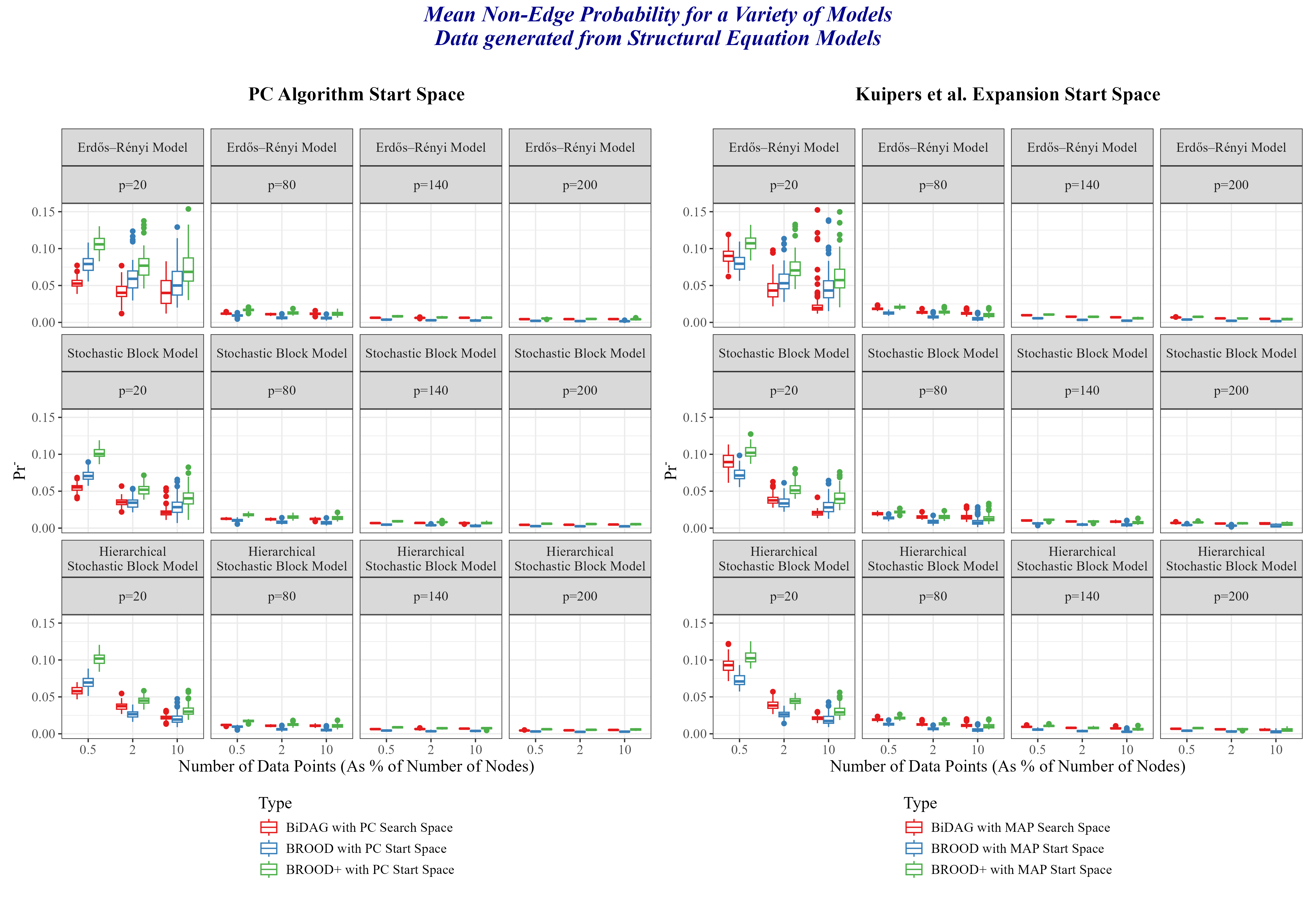}
    \caption{$Pr^-$ results with Gaussian SEM data, using plus-one sparsity.}
    \label{fig:prminus_gauss_dag_plus1}
\end{figure}

\subsubsection{Performance: Gaussian SEM, Plus-One Runs, Skeleton Output}

In the main paper, Figure \ref{fig:auc_roc_gauss_skel} showed the ROC AUC for the skeleton version of the Gaussian SEM data generation model with plus-one sparsity capping for BROOD and BiDAG. Figures \ref{fig:auc_pr_gauss_skel_plus1}, \ref{fig:prplus_gauss_skel_plus1}, \ref{fig:prminus_gauss_skel_plus1} show the PR AUC, $Pr^+$, and $Pr^-$ for these data. The trends echo the findings from Sections \ref{subsec:sim_results}, and \ref{app:gauss_plus1_dag}: BiDAG performs comparatively better in easier sampling regimes, while BROOD performs comparatively better in harder sampling regimes.

\begin{figure}
    \centering
    \includegraphics[width=0.9\linewidth]{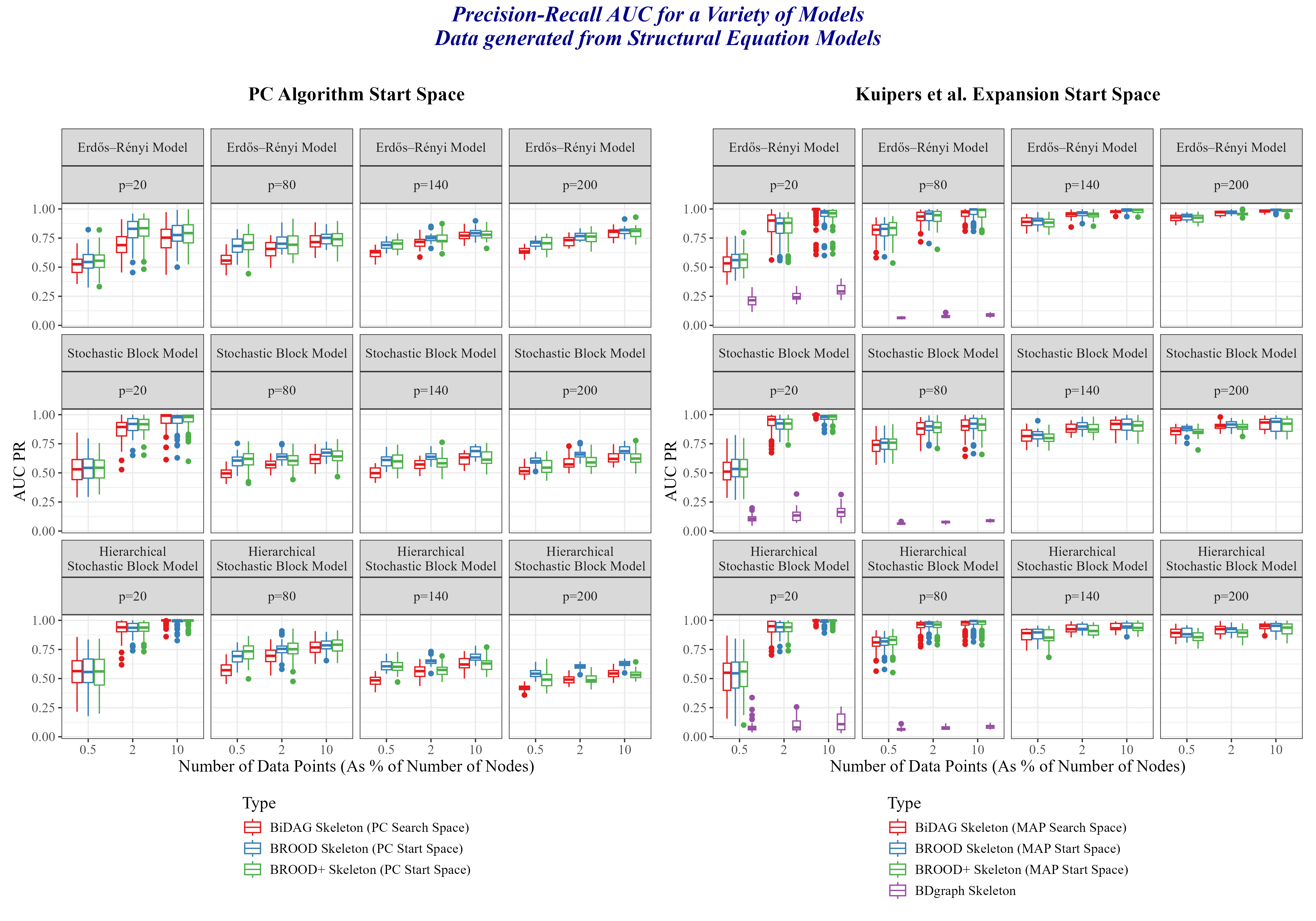}
    \caption{Skeleton version of PR AUC results with Gaussian SEM data, using plus-one sparsity.}
    \label{fig:auc_pr_gauss_skel_plus1}
\end{figure}

\begin{figure}
    \centering
    \includegraphics[width=0.9\linewidth]{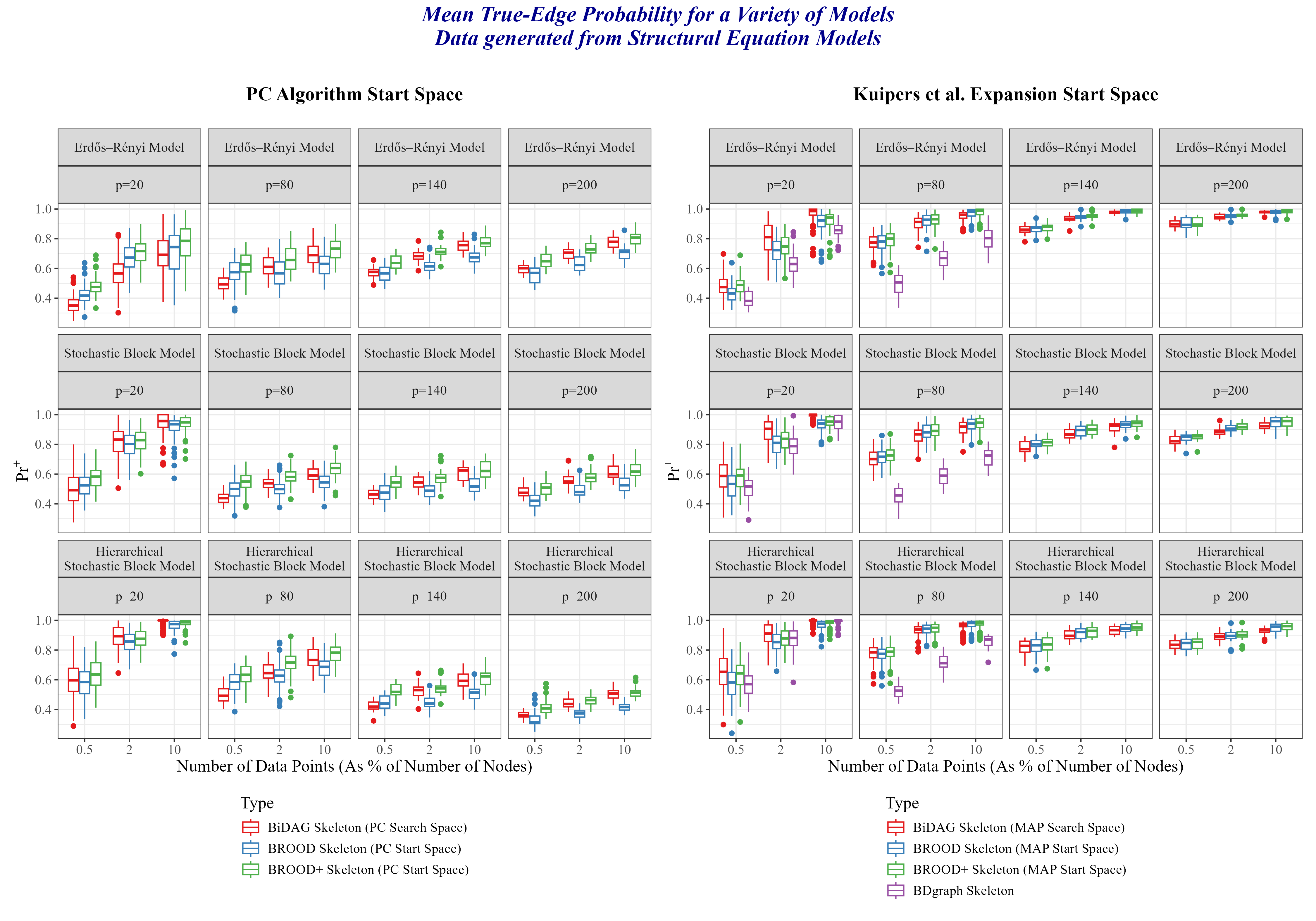}
    \caption{Skeleton version of $Pr^+$ results with Gaussian SEM data, using plus-one sparsity.}
    \label{fig:prplus_gauss_skel_plus1}
\end{figure}

\begin{figure}
    \centering
    \includegraphics[width=0.9\linewidth]{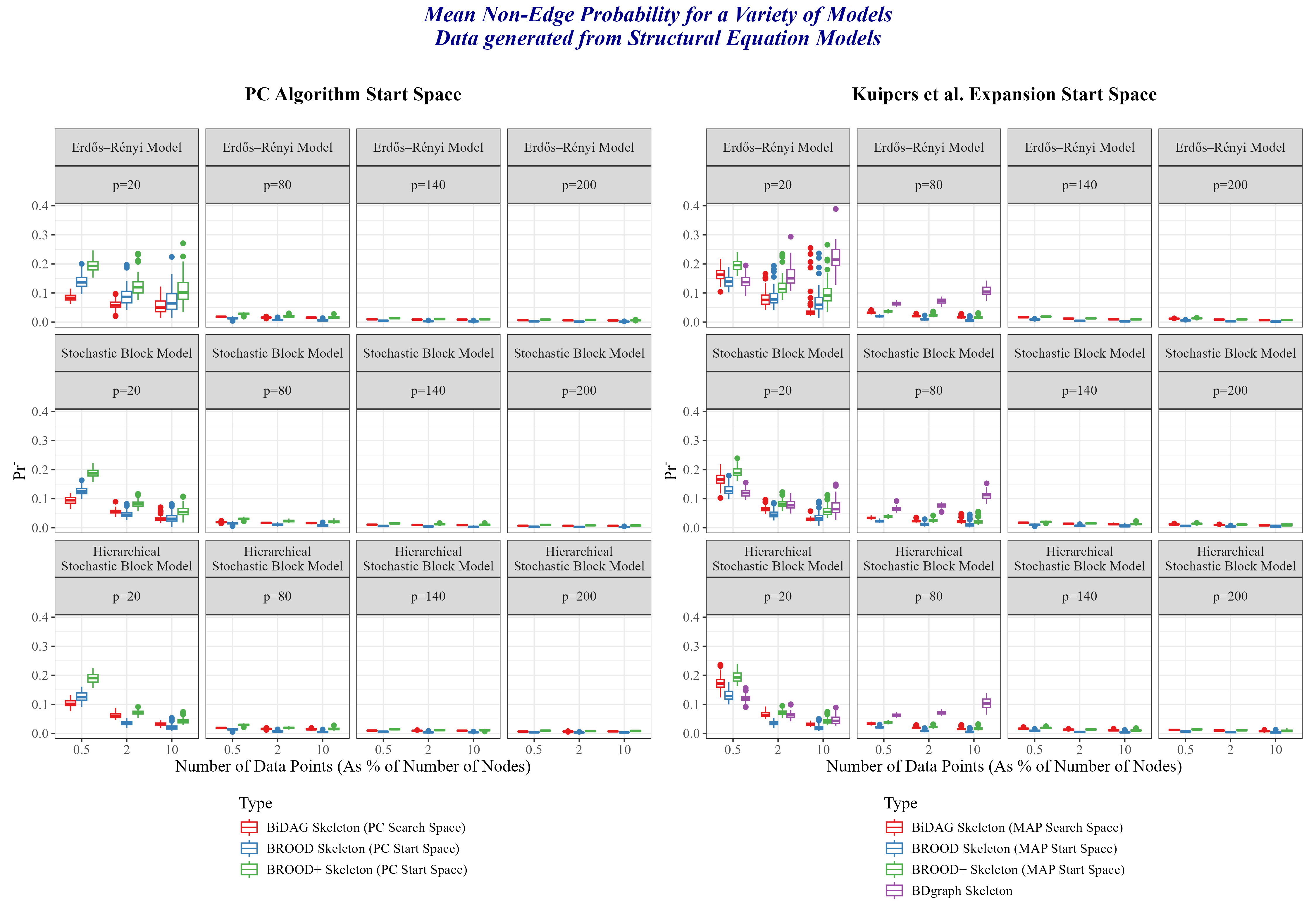}
    \caption{Skeleton version of $Pr^-$ results with Gaussian SEM data, using plus-one sparsity.}
    \label{fig:prminus_gauss_skel_plus1}
\end{figure}

\subsubsection{Performance: FCM, Plus-One Runs, DAG and Skeleton Outputs}
Changing the error model for data generation from a Gaussian SEM (which has $\mathcal{N}(0,1)$ errors) to a FCM (which is a mixture between $\mathcal{N}(0,1)$ errors and $\mathcal{N}(0,2)$ errors) has minimal impact on the results, both for DAGs and skeletons, as presented in Figures \ref{fig:auc_roc_fcm_dag_plus1}-\ref{fig:prminus_fcm_skel_plus1}. One interesting insight is that the relative performance from using DAGs to skeletons has a similar level of change for FCM data generation as for Gaussian SEM data generation. This suggests that the algorithm is not especially sensitive to identifiability issues that we may expect from a Gaussian SEM but not from a FCM. 

\begin{figure}
    \centering
    \includegraphics[width=0.9\linewidth]{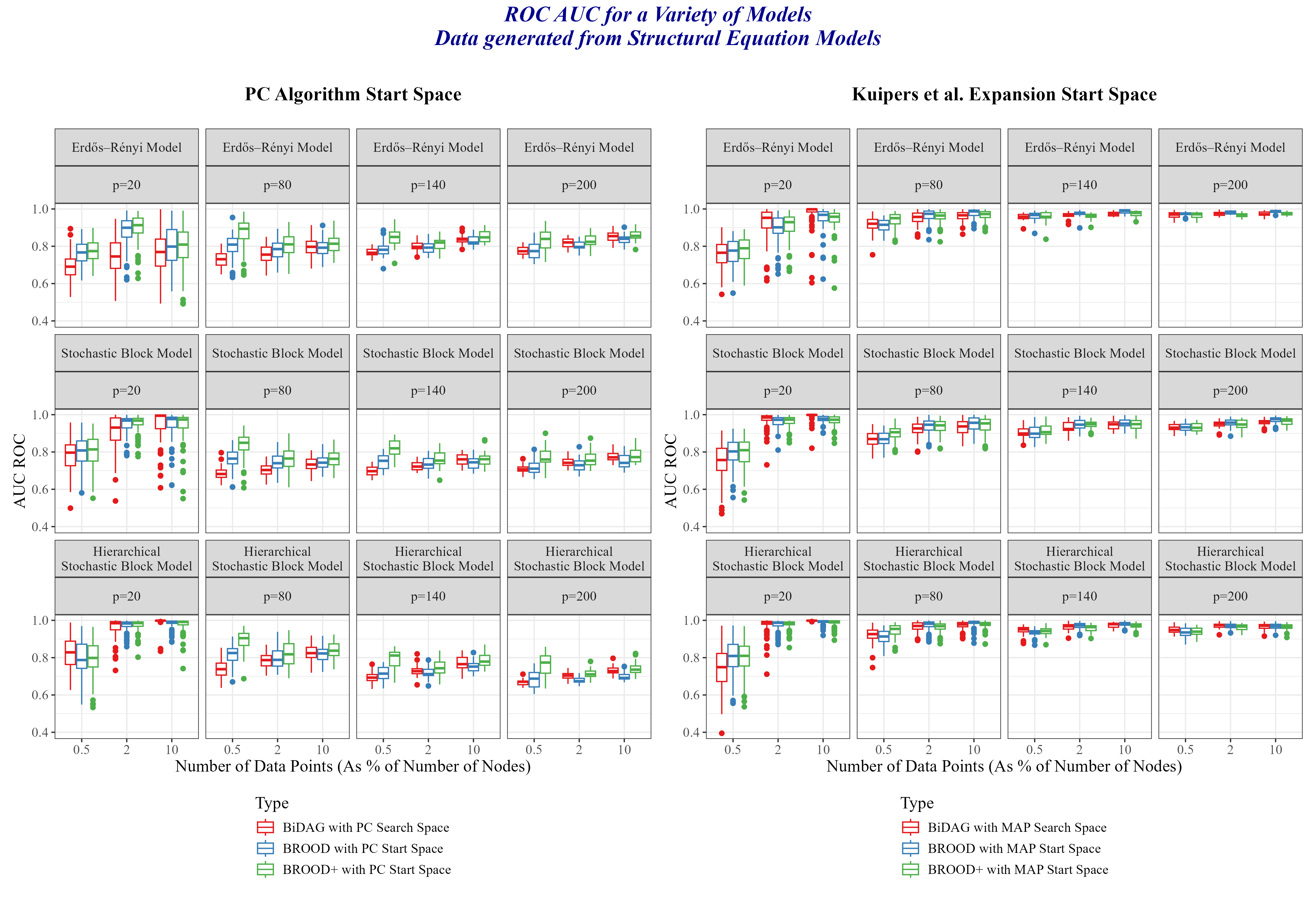}
    \caption{ROC AUC results with FCM data, using plus-one sparsity.}
    \label{fig:auc_roc_fcm_dag_plus1}
\end{figure}

\begin{figure}
    \centering
    \includegraphics[width=0.9\linewidth]{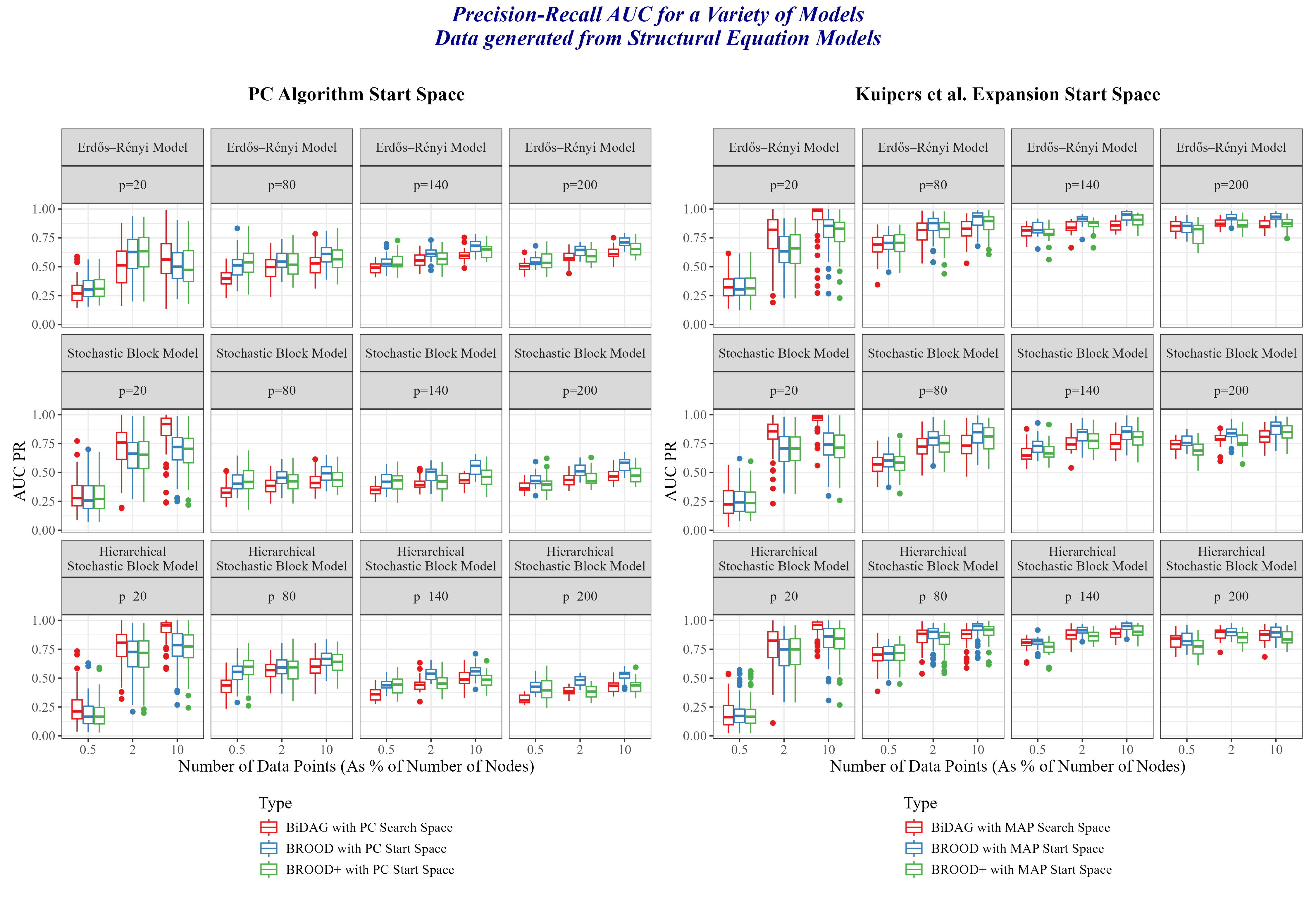}
    \caption{PR AUC results with FCM data, using plus-one sparsity.}
    \label{fig:auc_pr_fcm_dag_plus1}
\end{figure}

\begin{figure}
    \centering
    \includegraphics[width=0.9\linewidth]{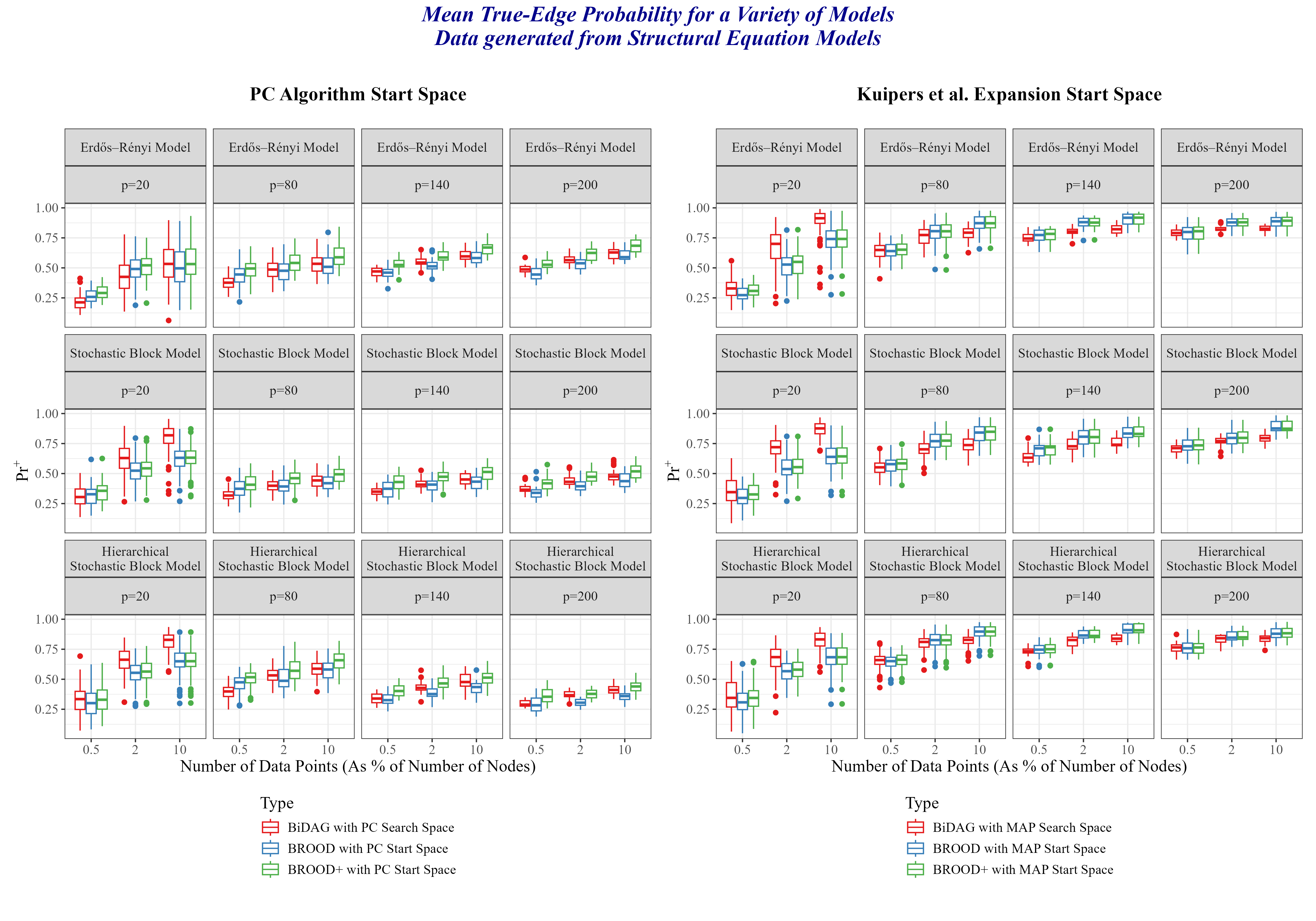}
    \caption{$Pr^+$ results with FCM data, using plus-one sparsity.}
    \label{fig:prplus_fcm_dag_plus1}
\end{figure}

\begin{figure}
    \centering
    \includegraphics[width=0.9\linewidth]{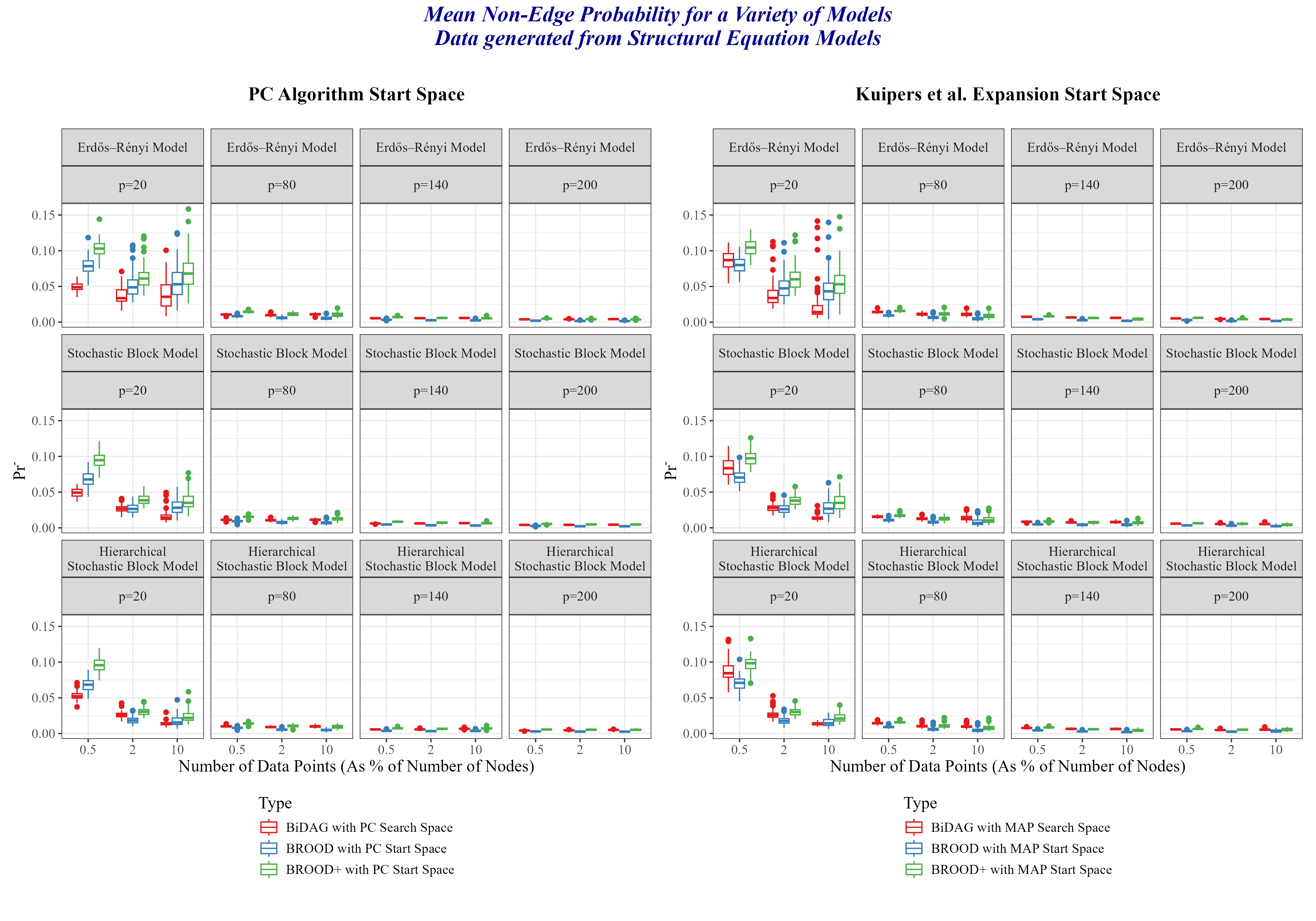}
    \caption{$Pr^-$ results with FCM data, using plus-one sparsity.}
    \label{fig:prminus_fcm_dag_plus1}
\end{figure}

\begin{figure}
    \centering
    \includegraphics[width=0.9\linewidth]{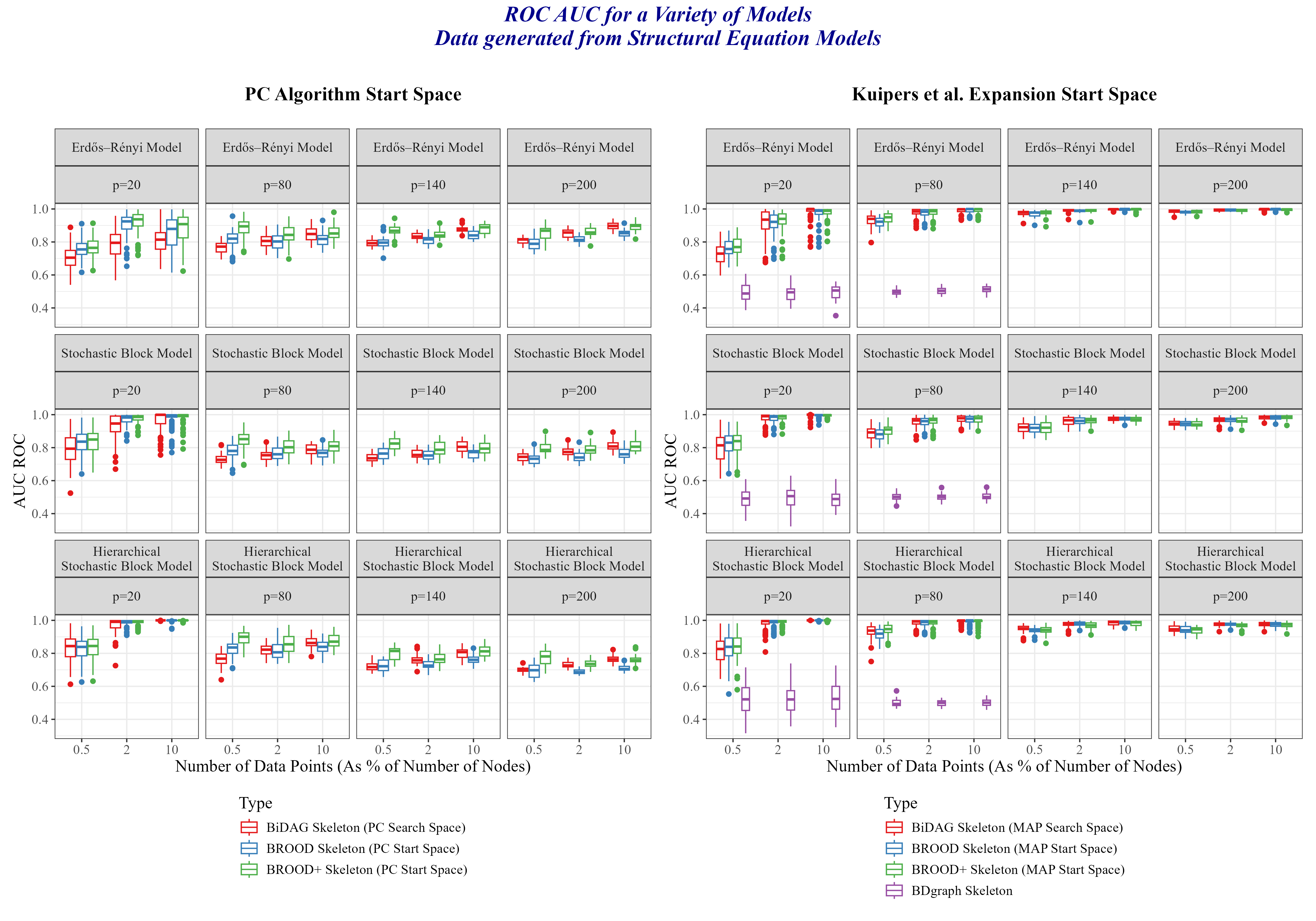}
    \caption{Skeleton version ROC AUC results with FCM data, using plus-one sparsity.}
    \label{fig:auc_roc_fcm_skel_plus1}
\end{figure}

\begin{figure}
    \centering
    \includegraphics[width=0.9\linewidth]{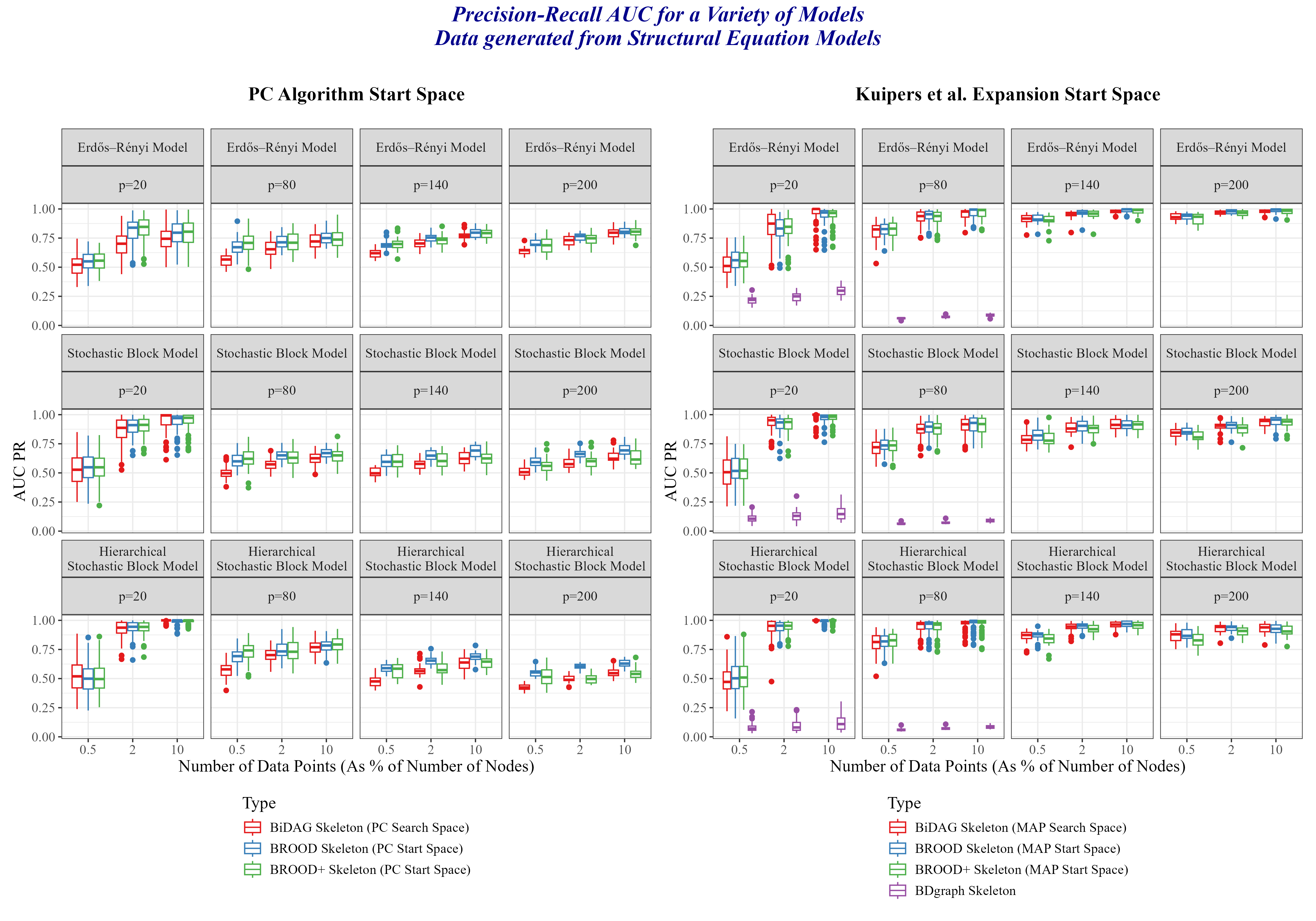}
    \caption{Skeleton version of PR AUC results with FCM data, using plus-one sparsity.}
    \label{fig:auc_pr_fcm_skel_plus1}
\end{figure}

\begin{figure}
    \centering
    \includegraphics[width=0.9\linewidth]{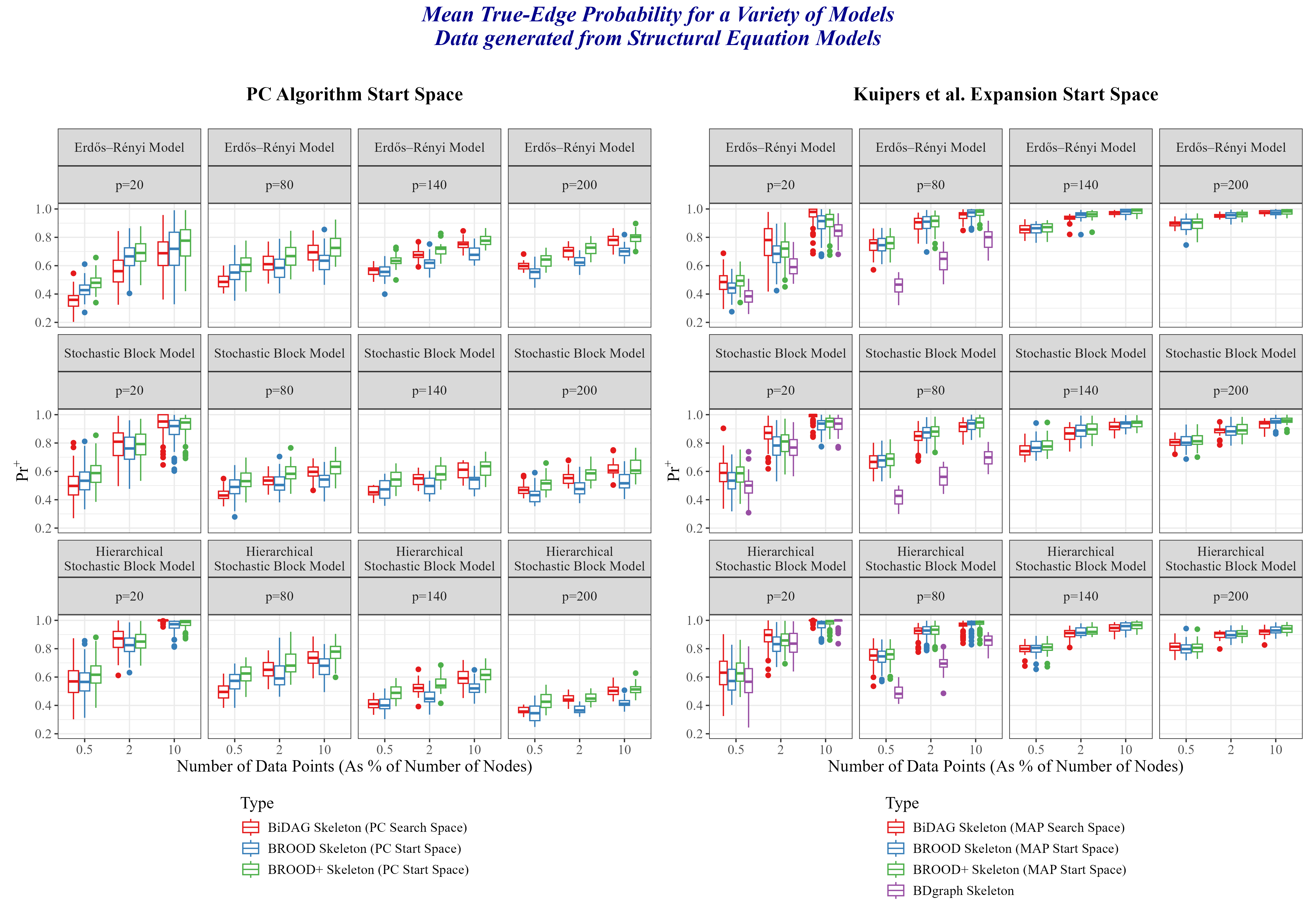}
    \caption{Skeleton version of $Pr^+$ results with FCM data, using plus-one sparsity.}
    \label{fig:prplus_fcm_skel_plus1}
\end{figure}

\begin{figure}
    \centering
    \includegraphics[width=0.9\linewidth]{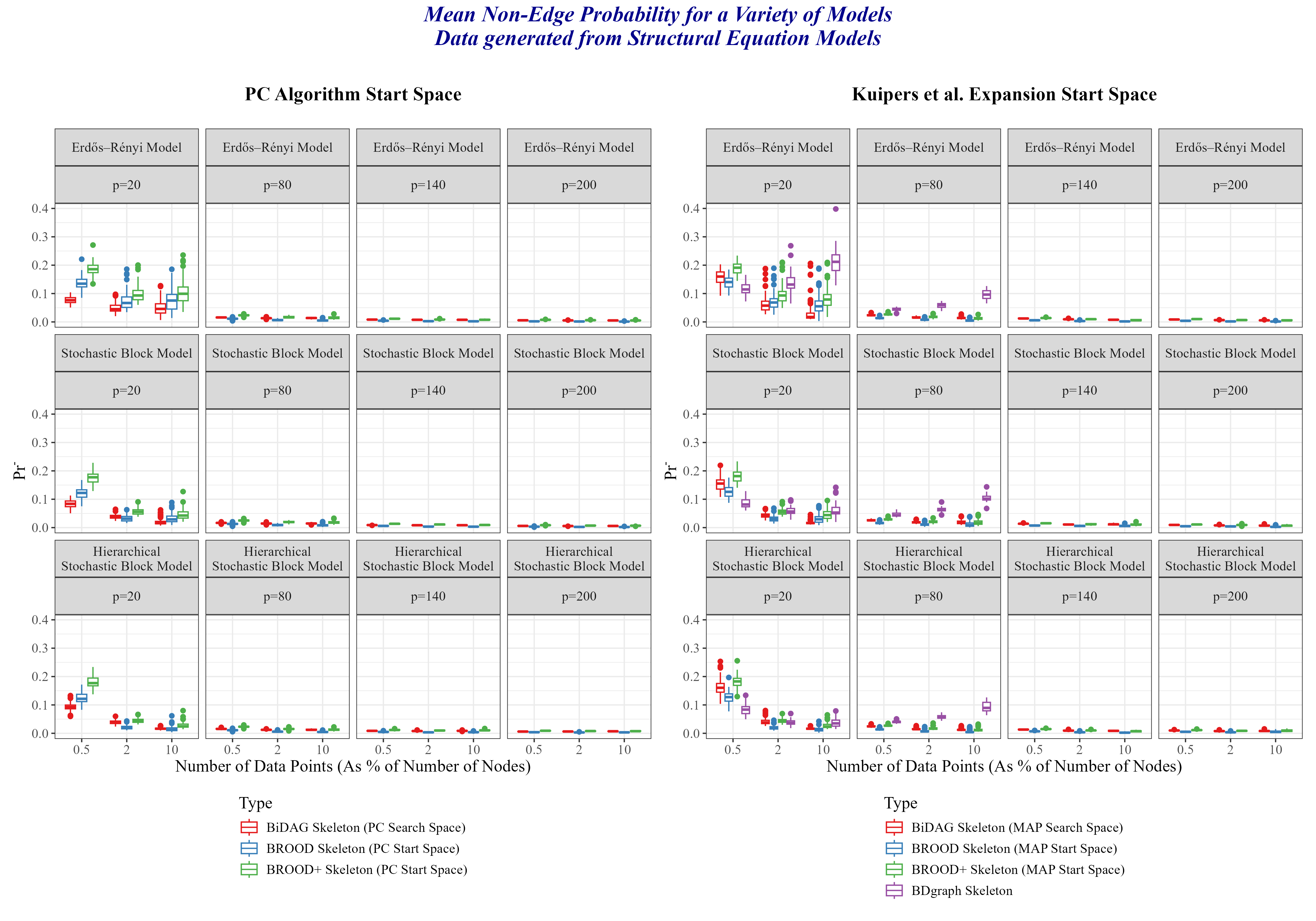}
    \caption{Skeleton version of $Pr^-$ results with FCM data, using plus-one sparsity.}
    \label{fig:prminus_fcm_skel_plus1}
\end{figure}

\subsubsection{Performance: Gaussian SEMs, Fixed Sparsity Runs}

Figure \ref{fig:auc_roc_gauss_dag2} capped the sparsity across to a fixed amount across all experiments for BROOD, regardless of the starting search spaces (whereas the plus-one sparsity allowed for the cap to be one more than the largest parent set size in the starting search space). Fixing the sparsity across initializations allows for us to see the extent to which BROOD can recover from poor initializations. We found in the main text that BROOD can best recover from a poorly initialized starting search space when the ratio between the number of data points to number of nodes is small; this implies that the more diffuse the score, the better BROOD can navigate the transdimensional space.

Figures \ref{fig:auc_pr_gauss_dag_fixed}-\ref{fig:prminus_gauss_skel_fixed} show further breakdowns of the Gaussian SEM models using fixed-sparsity capping. Figures \ref{fig:prplus_gauss_dag_fixed}, \ref{fig:prminus_gauss_dag_fixed}, \ref{fig:prplus_gauss_skel_fixed}, and \ref{fig:prminus_gauss_skel_fixed} are especially insightful into how BROOD recovers from poorly-initialized search spaces when the ratio of number of observations to number of nodes is 0.5: initializing with a PC search space, BROOD sometimes adds true edges to the space that the PC algorithm omits but also sometimes adds non-edges to the space, while initializing with the Kuipers et al. expansion search space, BROOD sometimes takes away true edges in order to exclude non-edges. Initializing with the GES search space lands in between the two. Overall, this amounts to similar $Pr^+$ and $Pr^-$ across all three initializations of the search space.

\begin{figure}
    \centering
    \includegraphics[width=0.9\linewidth]{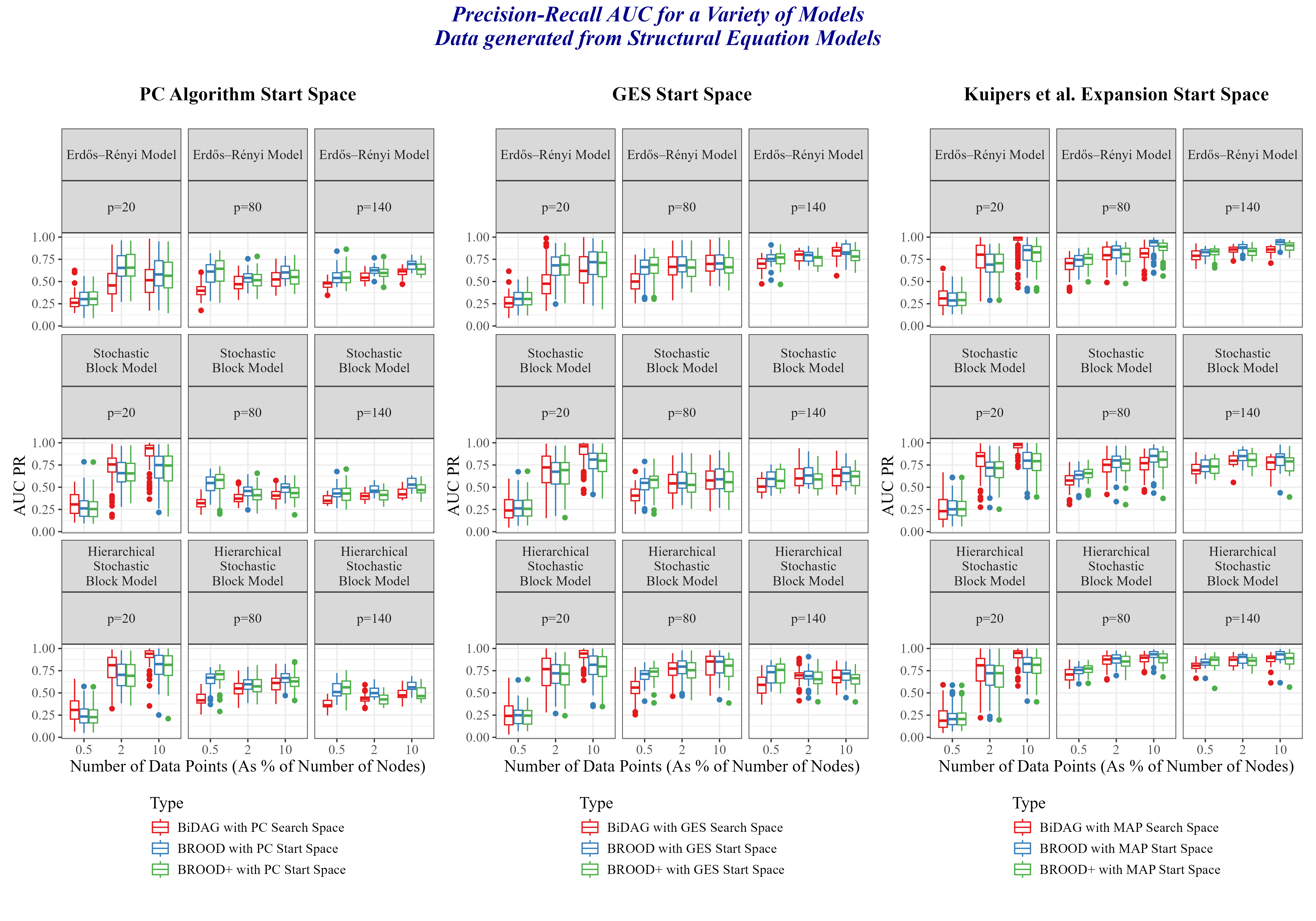}
    \caption{PR AUC results with Gaussian SEM data, using fixed sparsity.}
    \label{fig:auc_pr_gauss_dag_fixed}
\end{figure}

\begin{figure}
    \centering
    \includegraphics[width=0.9\linewidth]{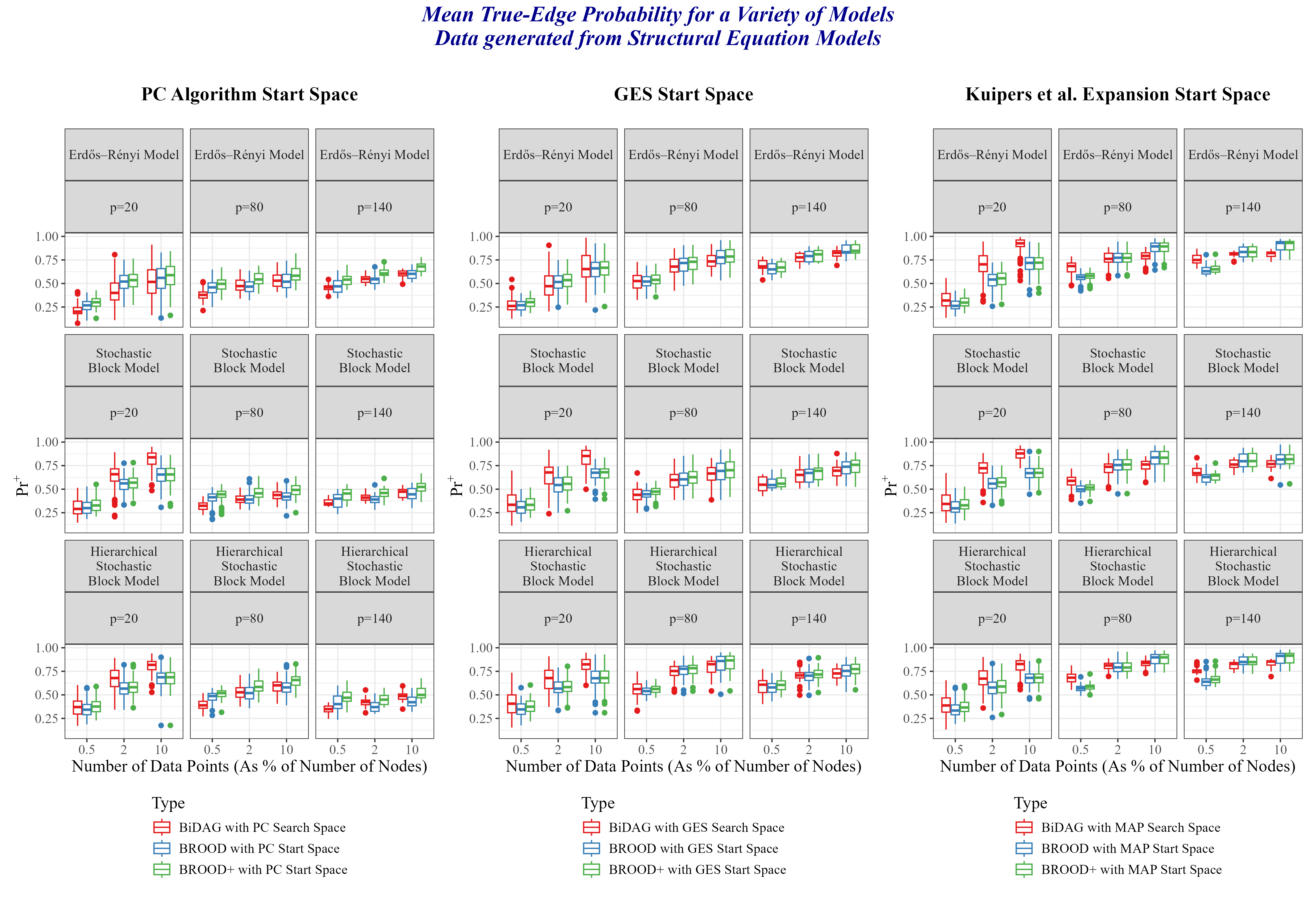}
    \caption{$Pr^+$ results with Gaussian SEM data, using fixed sparsity.}
    \label{fig:prplus_gauss_dag_fixed}
\end{figure}

\begin{figure}
    \centering
    \includegraphics[width=0.9\linewidth]{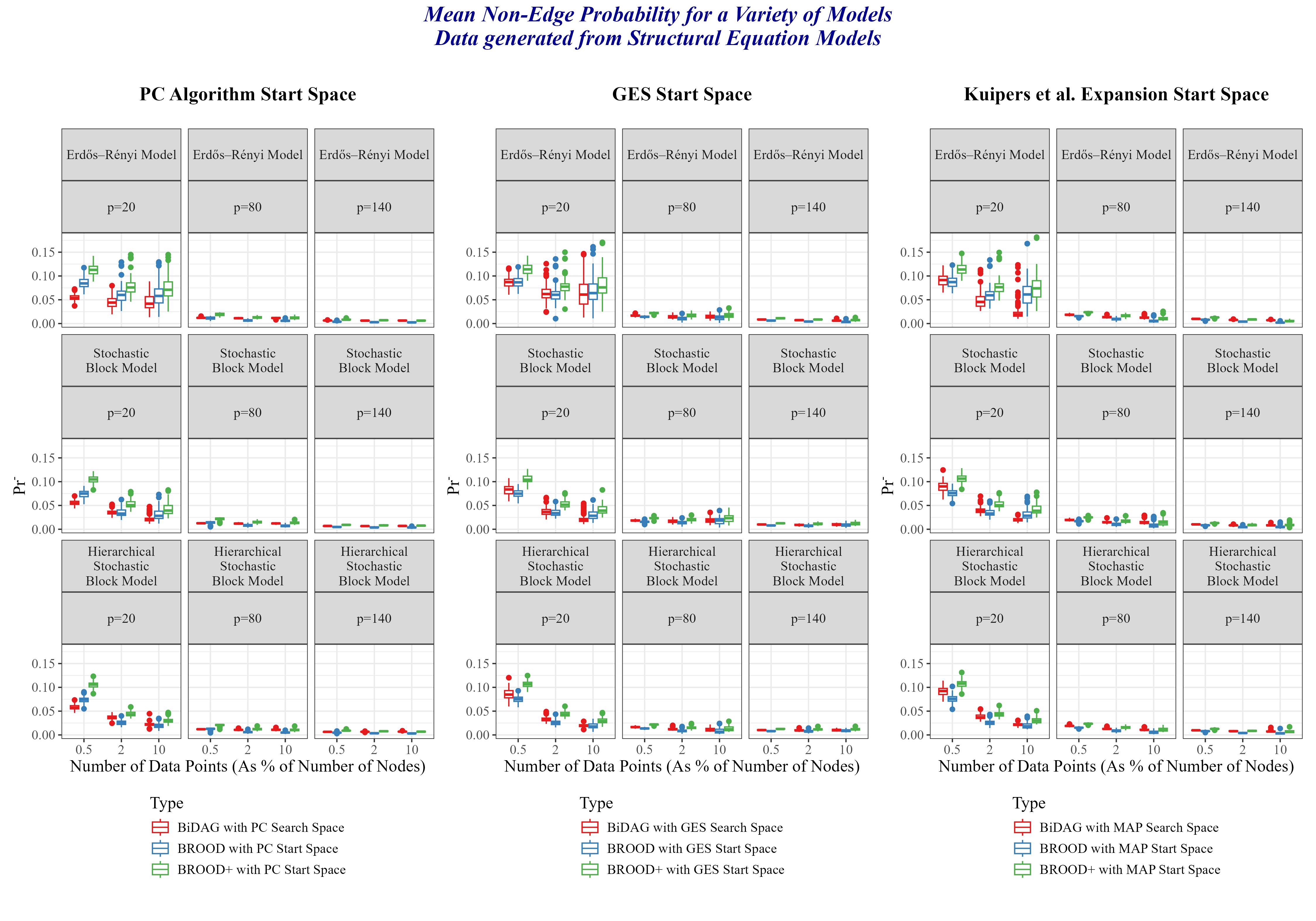}
    \caption{$Pr^-$ results with Gaussian SEM data, using fixed sparsity.}
    \label{fig:prminus_gauss_dag_fixed}
\end{figure}

\begin{figure}
    \centering
    \includegraphics[width=0.9\linewidth]{images/fixed/Simulation9_AUC_ROC_Gaussian_skel.png}
    \caption{Skeleton version of ROC AUC results with Gaussian SEM data, using fixed sparsity.}
    \label{fig:auc_roc_gauss_skel_fixed}
\end{figure}

\begin{figure}
    \centering
    \includegraphics[width=0.9\linewidth]{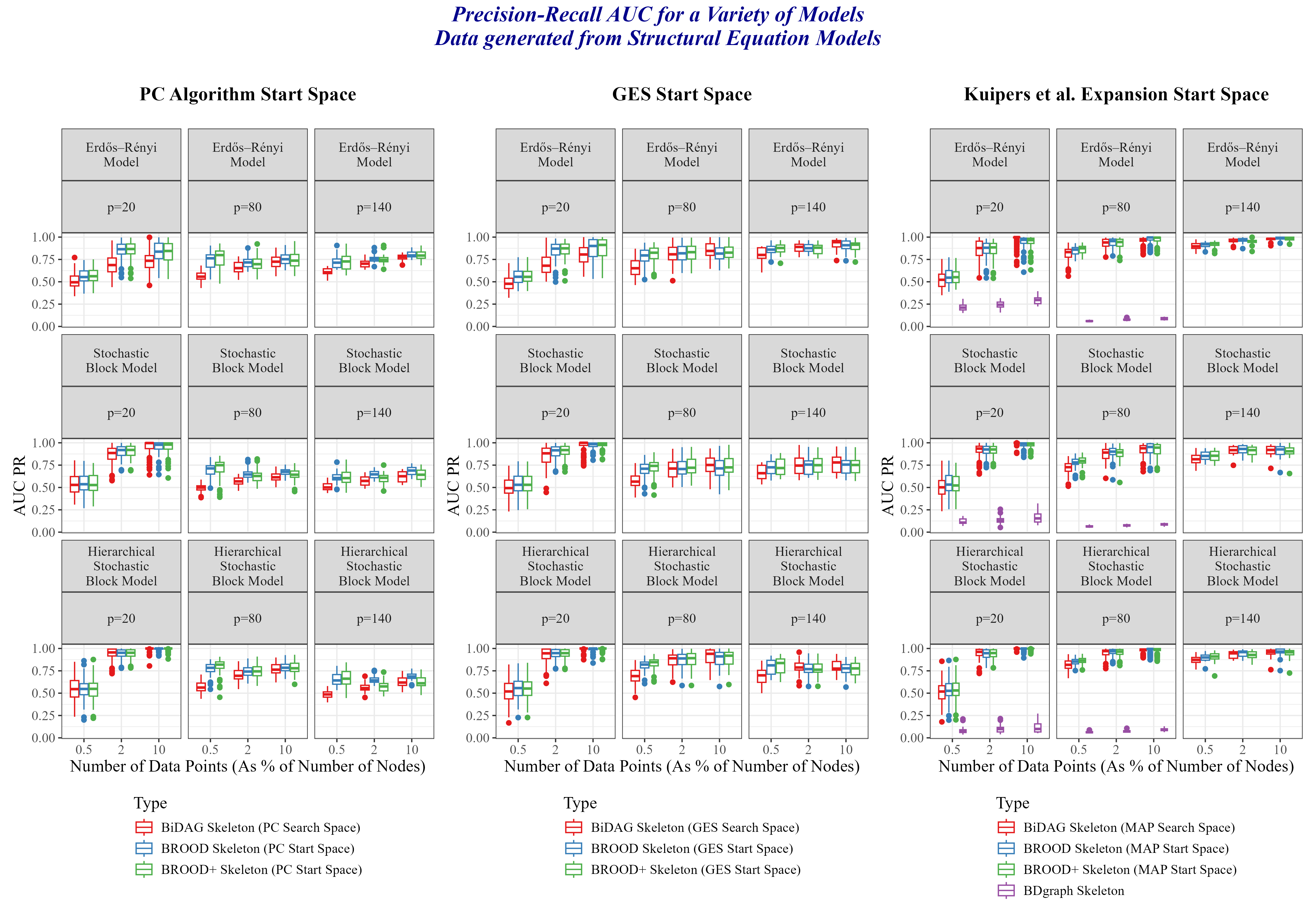}
    \caption{Skeleton version of PR AUC results with Gaussian SEM data, using fixed sparsity.}
    \label{fig:auc_pr_gauss_skel_fixed}
\end{figure}

\begin{figure}
    \centering
    \includegraphics[width=0.9\linewidth]{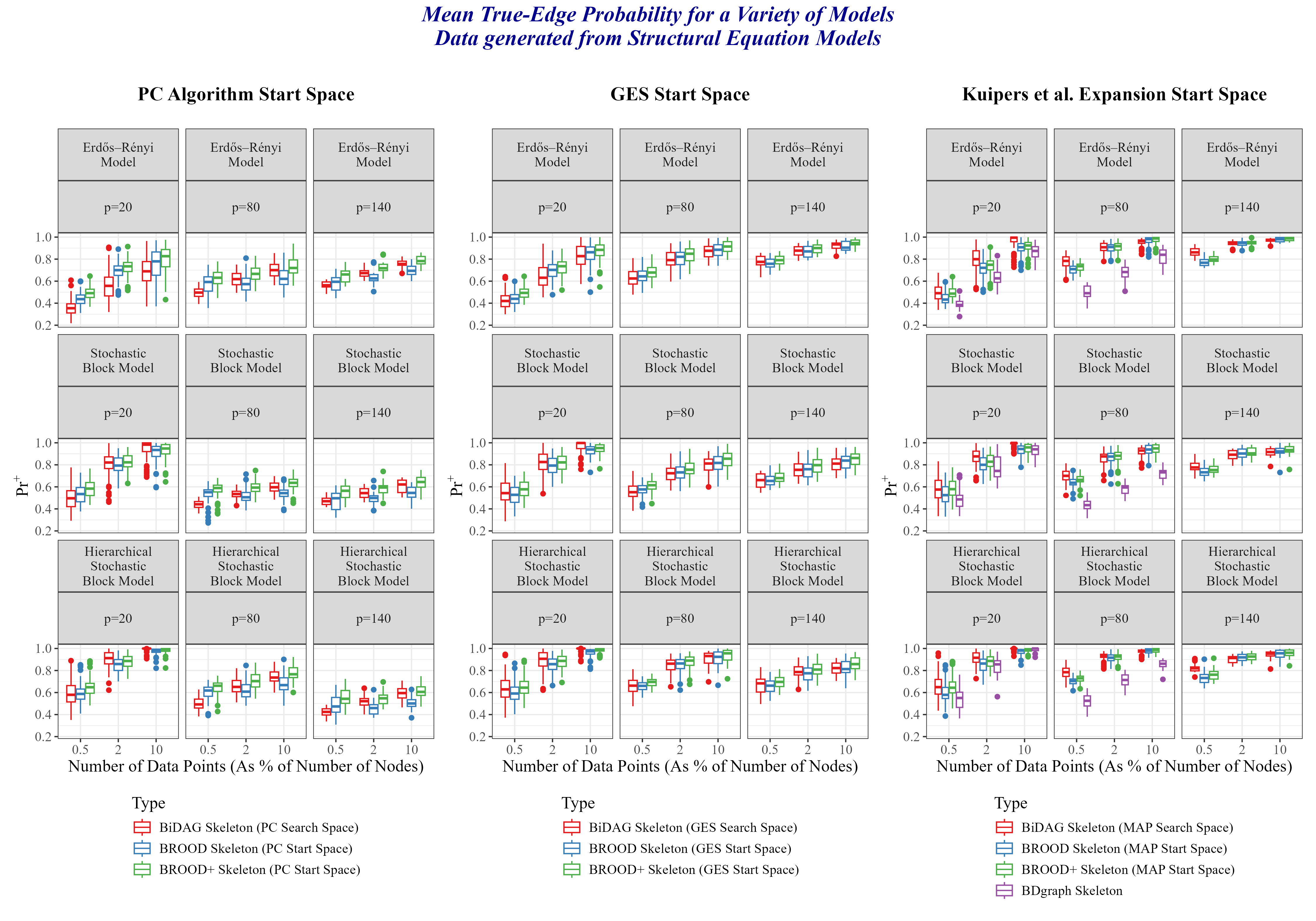}
    \caption{Skeleton version of $Pr^+$ results with Gaussian SEM data, using fixed sparsity.}
    \label{fig:prplus_gauss_skel_fixed}
\end{figure}

\begin{figure}
    \centering
    \includegraphics[width=0.9\linewidth]{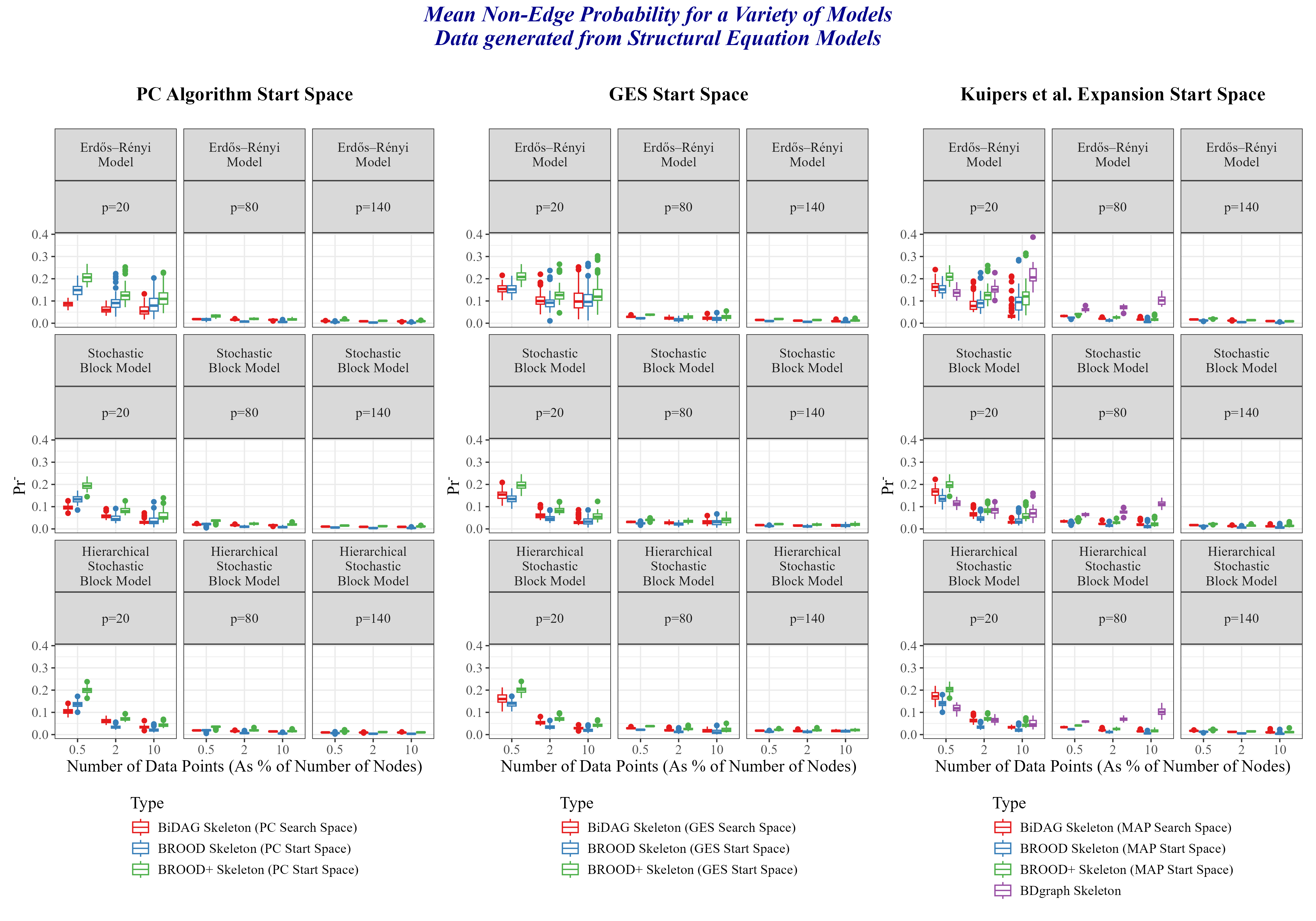}
    \caption{Skeleton version of $Pr^-$ results with Gaussian SEM data, using fixed sparsity.}
    \label{fig:prminus_gauss_skel_fixed}
\end{figure}

\subsubsection{Performance: FCM, Fixed Sparsity Runs}

Figures \ref{fig:auc_roc_fcm_dag_fixed}-\ref{fig:auc_pr_fcm_skel_fixed} show the ROC AUC and PR AUC (for both DAGs and skeletons) for FCM generated data with fixed sparsity capping. Using FCM data generation yields similar results to Gaussian SEM data generation when analyzing the impact of initialization with fixed sparsity capping: in settings with posteriors that are expected to be disperse, BROOD tends to be robust to recovering from poorly initialized search spaces, but is more sensitive to the initialization in settings where posteriors are expected to be peakier.

\begin{figure}
    \centering
    \includegraphics[width=0.9\linewidth]{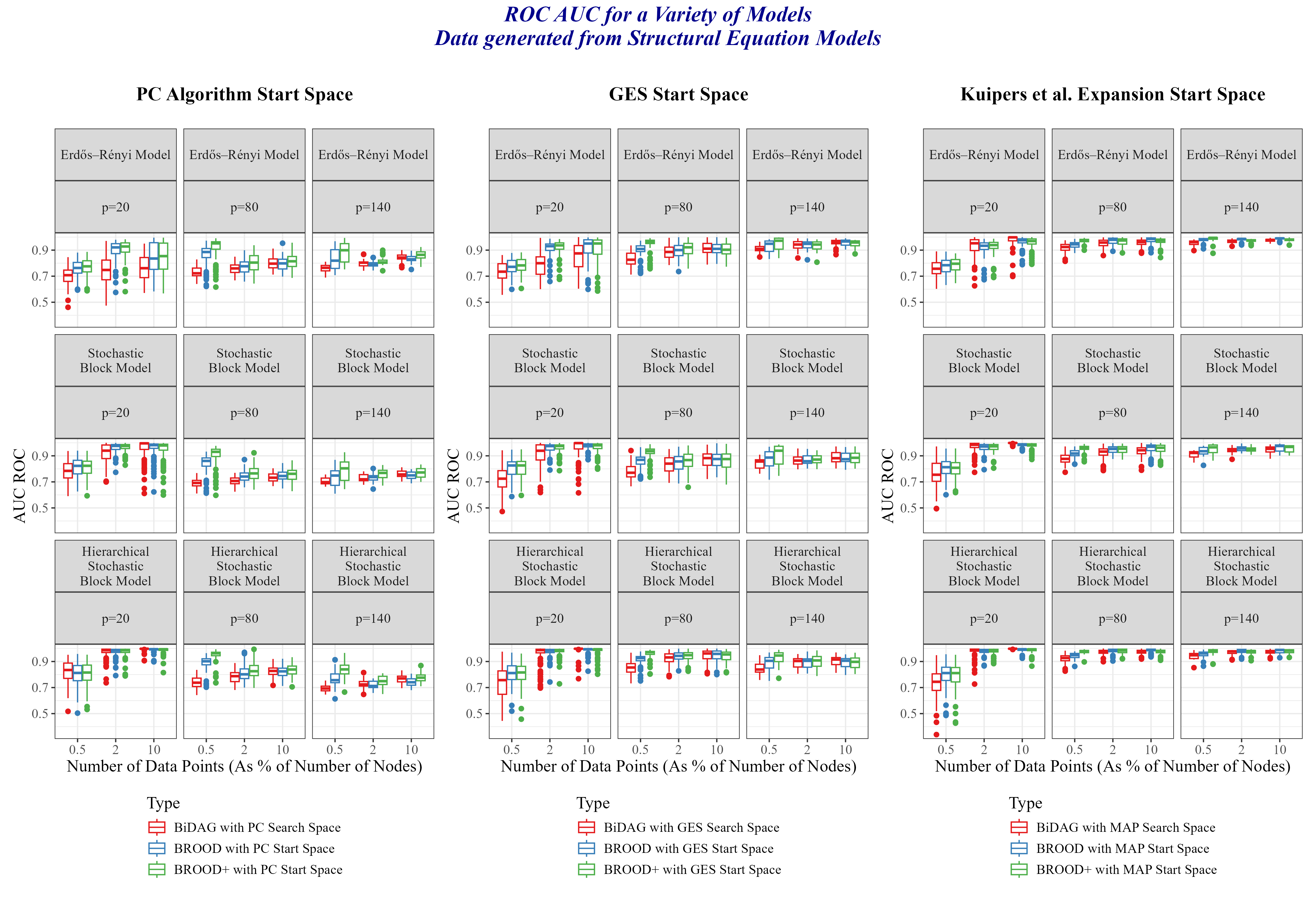}
    \caption{ROC AUC results with FCM data, using fixed sparsity.}
    \label{fig:auc_roc_fcm_dag_fixed}
\end{figure}

\begin{figure}
    \centering
    \includegraphics[width=0.9\linewidth]{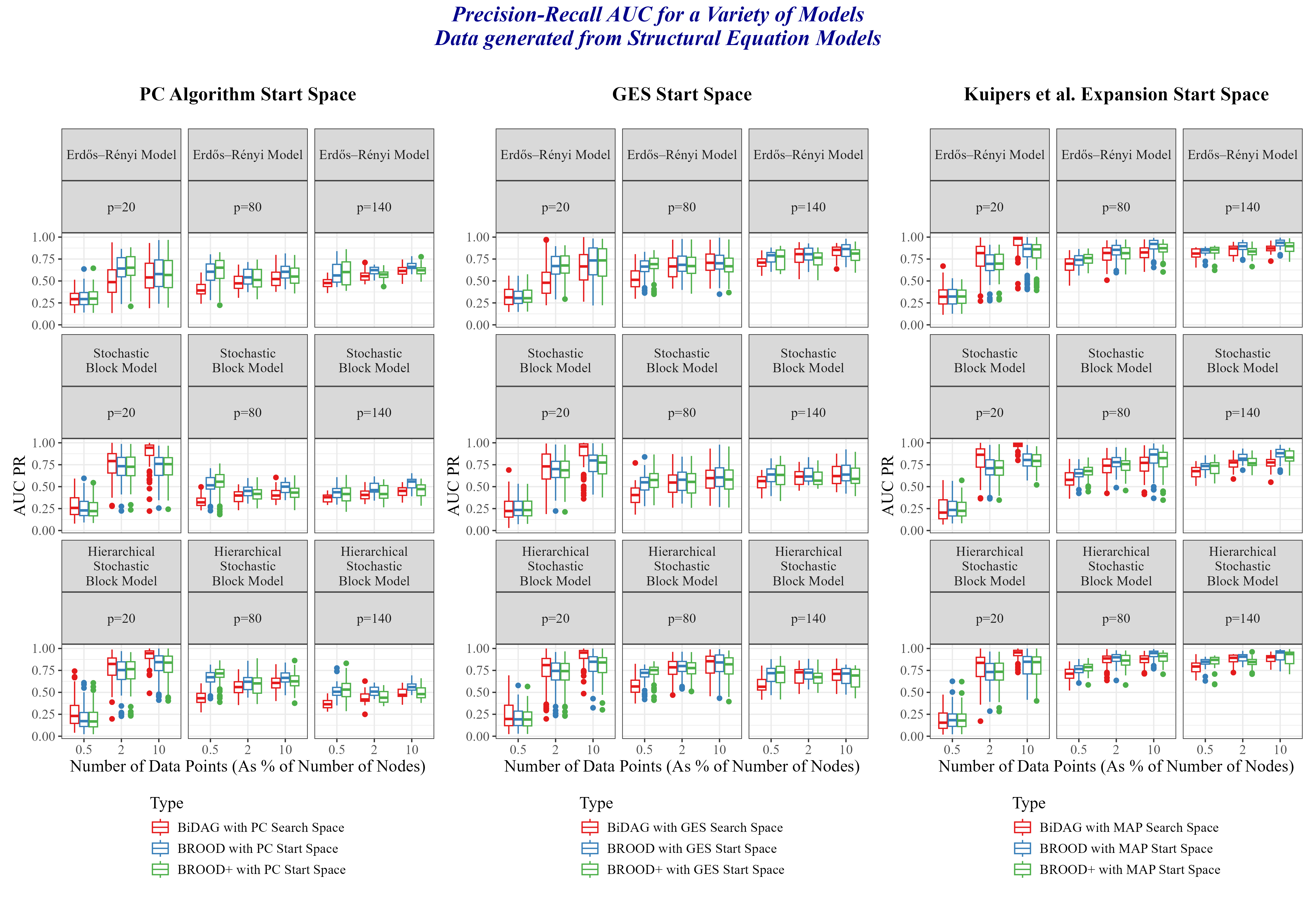}
    \caption{PR AUC results with FCM data, using fixed sparsity.}
    \label{fig:auc_pr_fcm_dag_fixed}
\end{figure}

\begin{figure}
    \centering
    \includegraphics[width=0.9\linewidth]{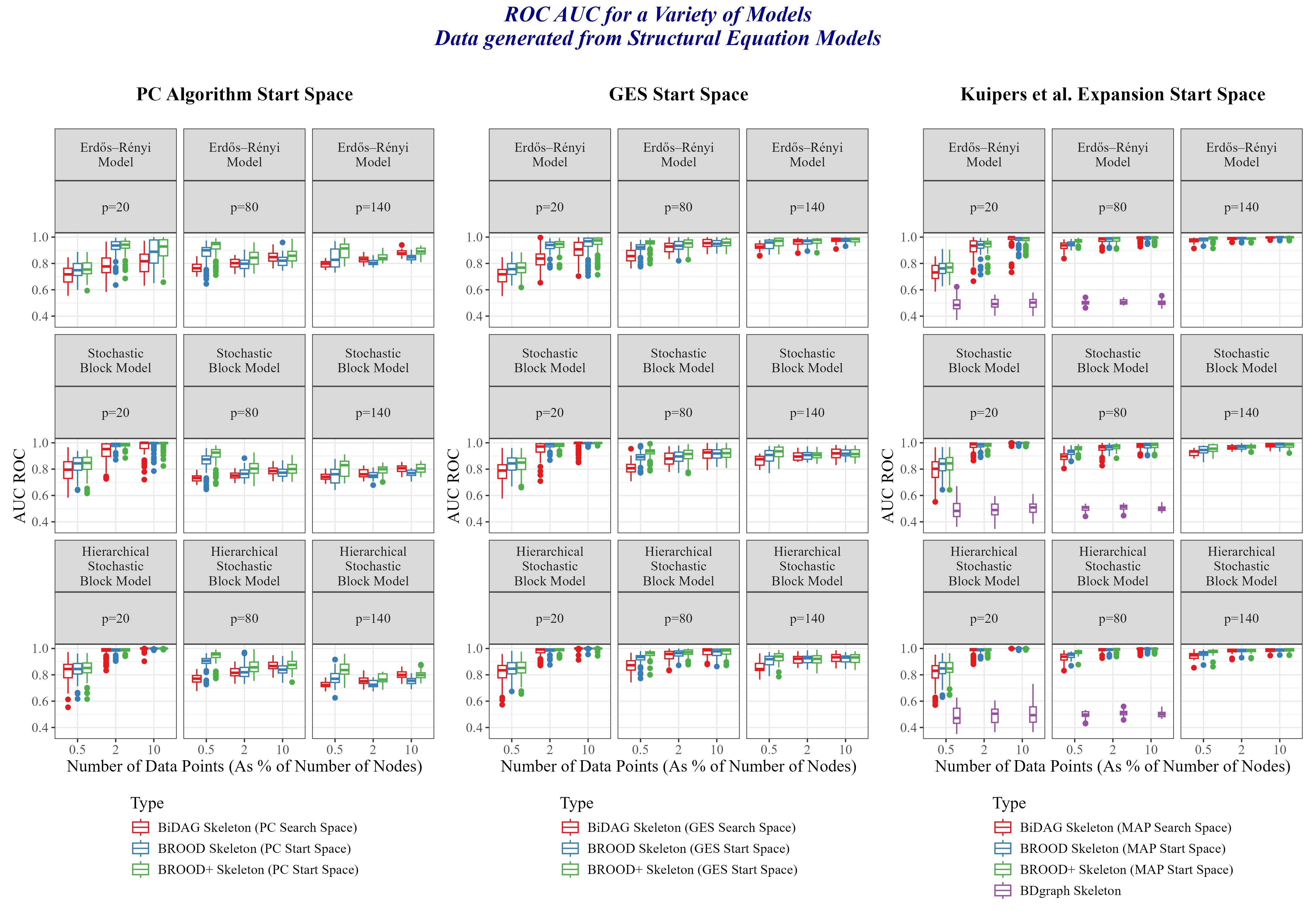}
    \caption{Skeleton version ROC AUC results with FCM data, using fixed sparsity.}
    \label{fig:auc_roc_fcm_skel_fixed}
\end{figure}

\begin{figure}
    \centering
    \includegraphics[width=0.9\linewidth]{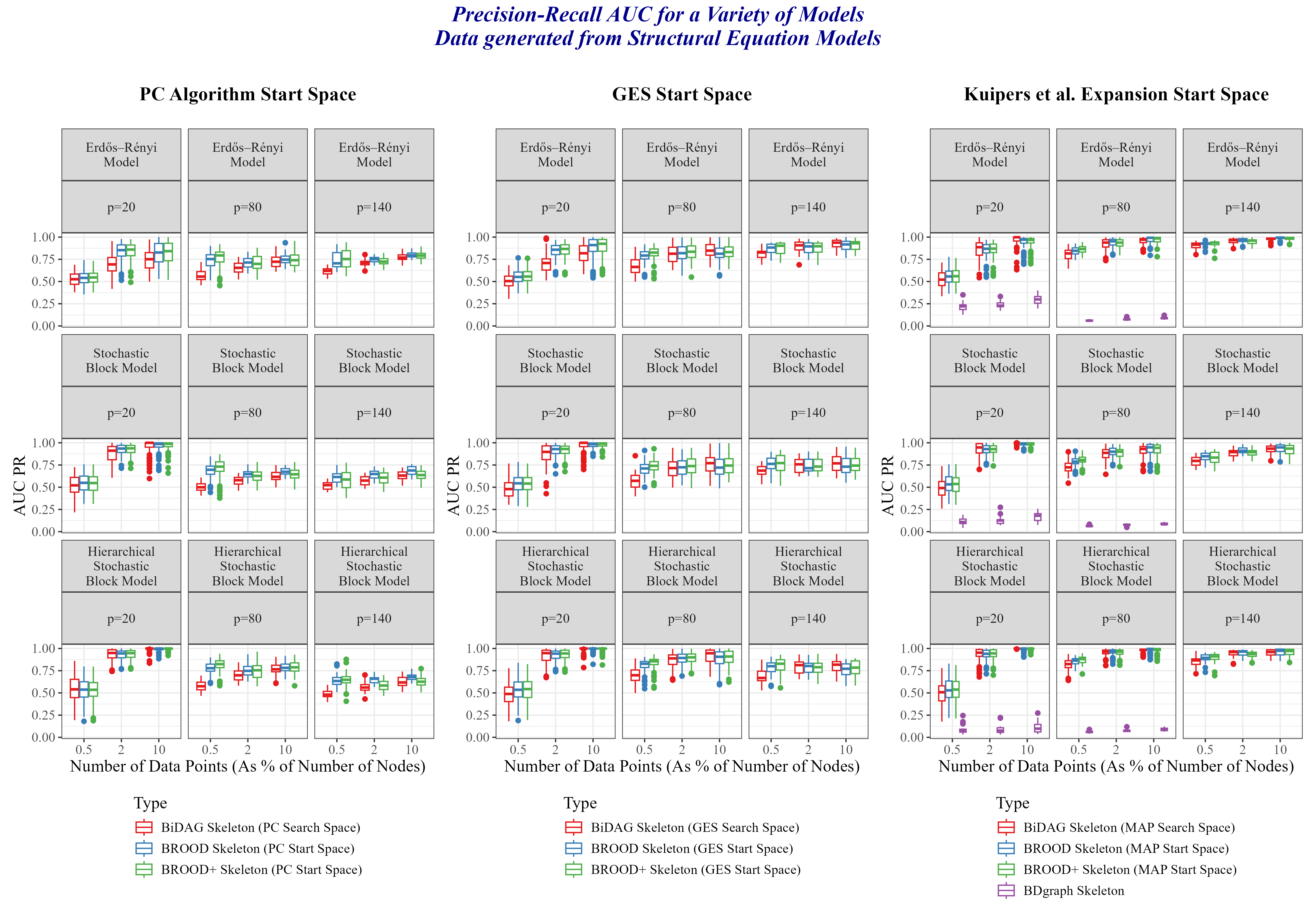}
    \caption{Skeleton version of PR AUC results with FCM data, using fixed sparsity.}
    \label{fig:auc_pr_fcm_skel_fixed}
\end{figure}

\subsubsection{Computation Time}

In the main text, we showed the downside of BROOD's flexibility in terms of log-time of the sampler finishing in Figure \ref{fig:log_time_gauss} on Gaussian SEM data, with plus-one sparsity. Figures \ref{fig:log_time_fcm_plus1}, \ref{fig:log_time_gauss_fixed}, \ref{fig:log_time_fcm_fixed} show the results for FCM data with plus-one sparsity, Gaussian SEM data with fixed sparsity, and FCM data with fixed sparsity, respectively.

Results for these are quite similar to those shown in the main text: BROOD is slower than BiDAG, but the gap closes as the problem gets more complex. This suggests that while there is a baked-in cost to running BROOD over BiDAG, if BROOD can be robust to a poorly initialized starting search space, for sufficiently difficult problems, it can be a better solution in both performance and computational time than existing methods for MCMC-based structure inference.

\begin{figure}
    \centering
    \includegraphics[width=0.9\linewidth]{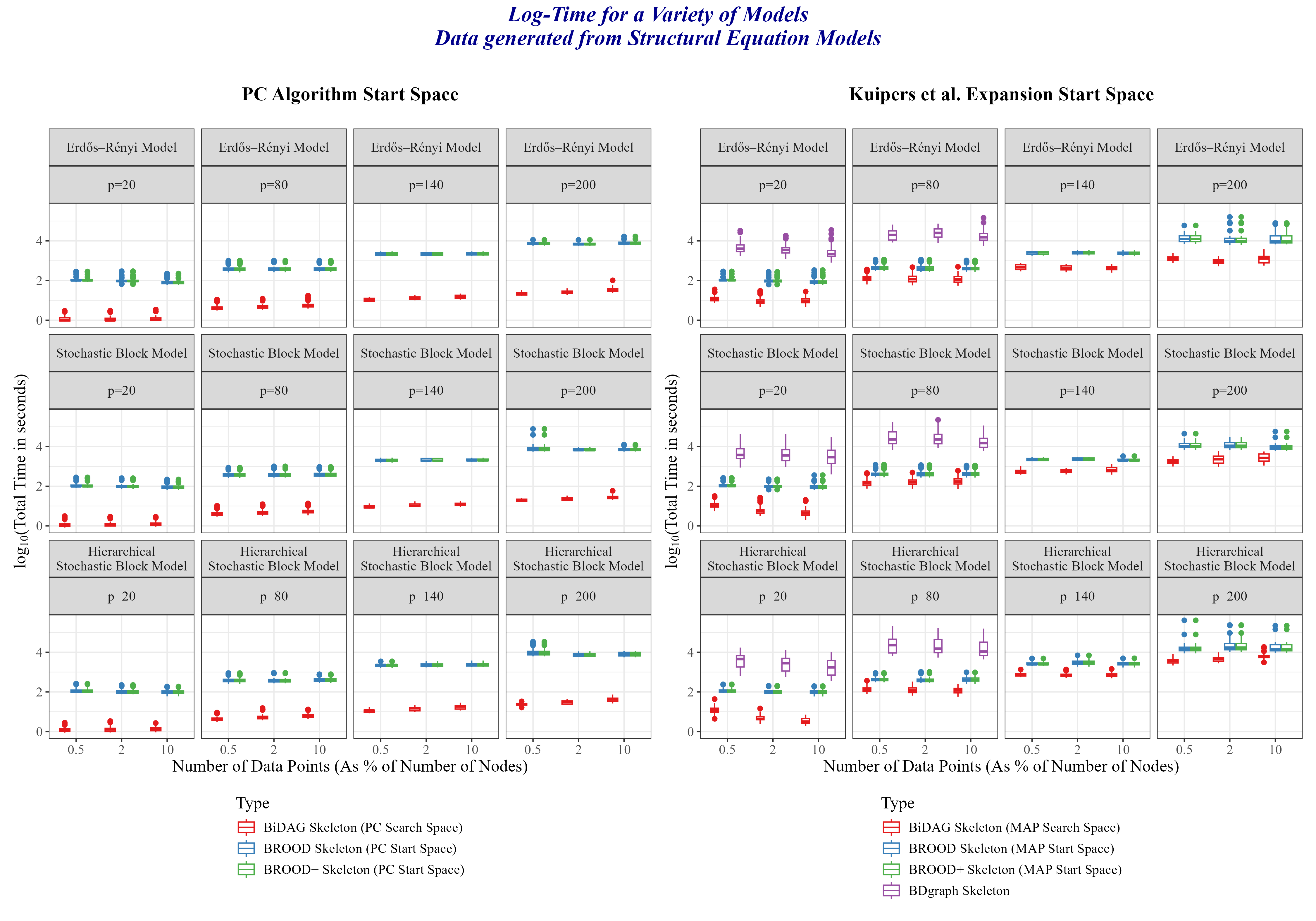}
    \caption{$\log_{10}(\text{Run Time})$ results for FCM data, using plus-one sparsity}
    \label{fig:log_time_fcm_plus1}
\end{figure}

\begin{figure}
    \centering
    \includegraphics[width=0.9\linewidth]{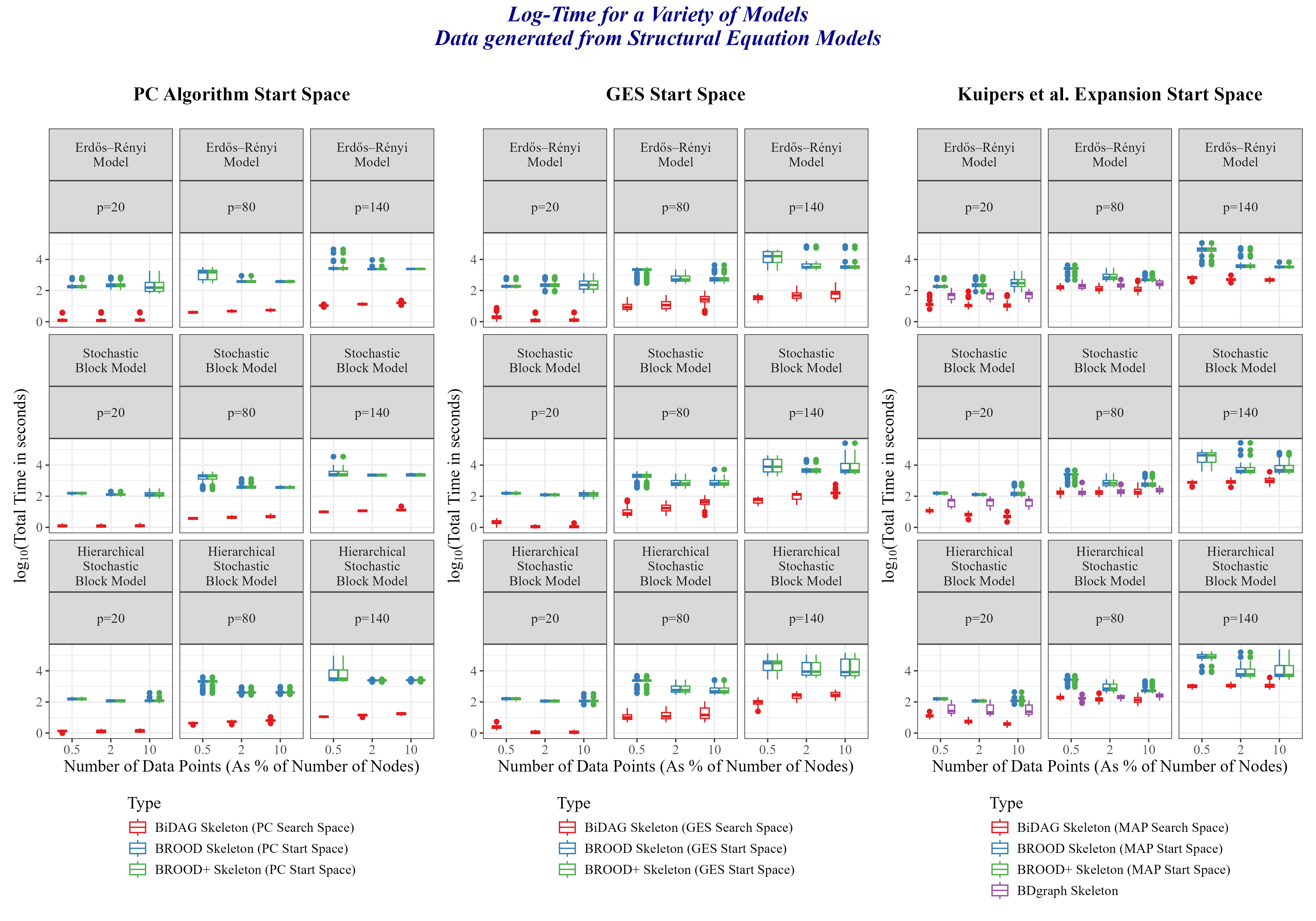}
    \caption{$\log_{10}(\text{Run Time})$ results for Gaussian SEM data, using fixed sparsity}
    \label{fig:log_time_gauss_fixed}
\end{figure}

\begin{figure}
    \centering
    \includegraphics[width=0.9\linewidth]{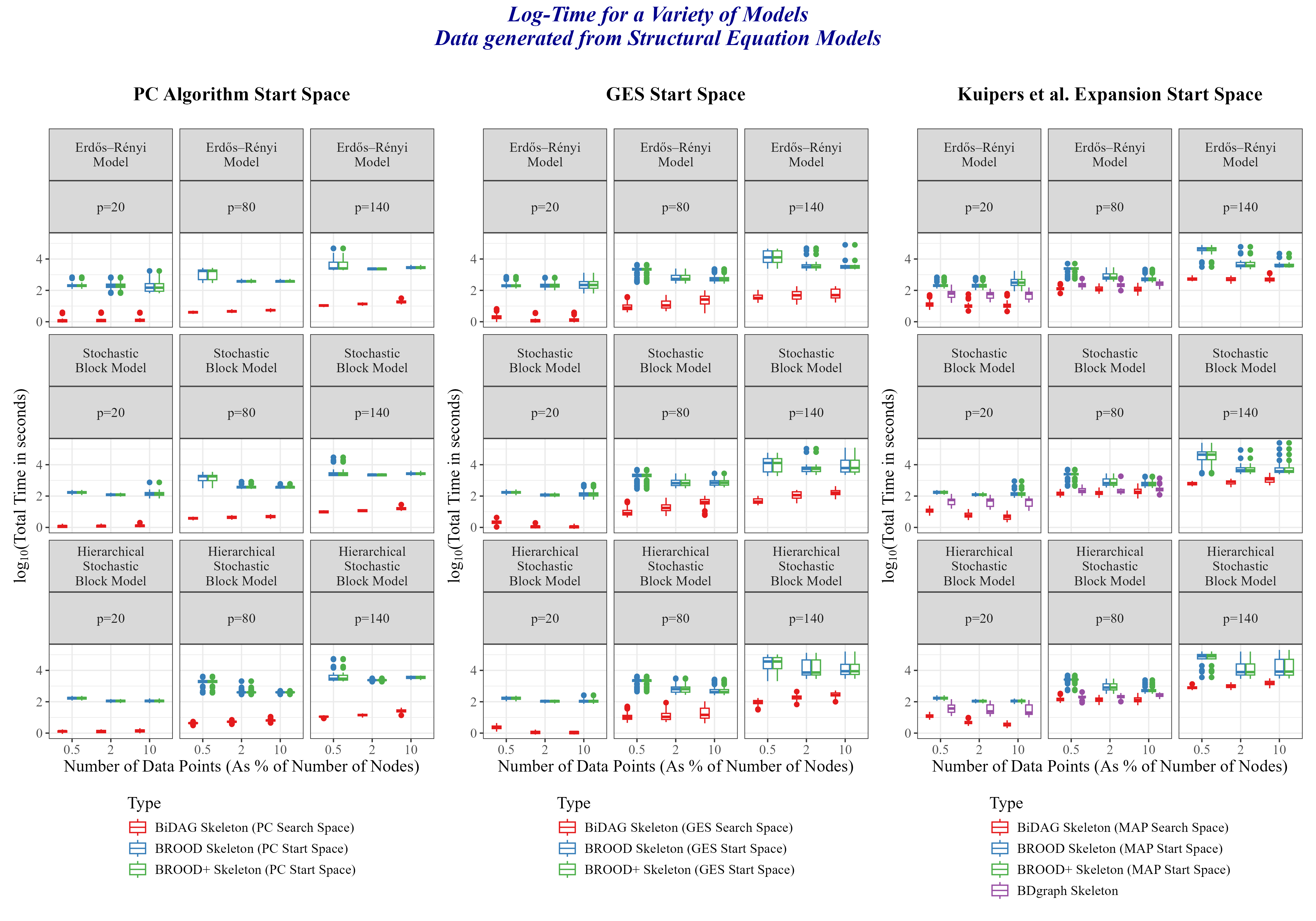}
    \caption{$\log_{10}(\text{Run Time})$ results for FCM data, using fixed sparsity}
    \label{fig:log_time_fcm_fixed}
\end{figure}

\end{document}

%% file: preamble/preamble.tex
\usepackage[T1]{fontenc}
\usepackage{tgtermes}

\usepackage{amssymb}

\usepackage{scalefnt,letltxmacro}
\LetLtxMacro{\oldtextsc}{\textsc}
\renewcommand{\textsc}[1]{\oldtextsc{\scalefont{1.10}#1}}
\usepackage[ttdefault=true]{AnonymousPro}

\usepackage{amsmath}
\usepackage{amsthm}
\usepackage{bbm}
\usepackage{bm}

\usepackage[
  paper  = letterpaper,
  left   = 1.25in,
  right  = 1.25in,
  top    = 1.0in,
  bottom = 1.0in,
  ]{geometry}
\usepackage[english]{babel}
\usepackage[parfill]{parskip}
\usepackage{afterpage}
\usepackage{enumitem}
\usepackage{framed}
\usepackage{xspace}

\usepackage{lineno}

\usepackage{ragged2e}

\newcounter{parcount}

\usepackage[dvipsnames]{xcolor}
\definecolor{shadecolor}{gray}{0.9}

\usepackage{graphicx}
\usepackage[labelfont=bf]{caption}
\usepackage[format=hang]{subcaption}
\usepackage{nicematrix}

\usepackage{booktabs, array, adjustbox}

\usepackage[algoruled,ruled,vlined]{algorithm2e}
\usepackage{listings}
\usepackage{fancyvrb}
\fvset{fontsize=\small}

\usepackage{natbib}
\usepackage[colorlinks,linktoc=all]{hyperref}
\usepackage[all]{hypcap}
\hypersetup{citecolor=MidnightBlue}
\hypersetup{linkcolor=MidnightBlue}
\hypersetup{urlcolor=MidnightBlue}
\usepackage[nameinlink]{cleveref}
\creflabelformat{equation}{#2\textup{#1}#3}  

\usepackage
[acronym,nowarn,section,nogroupskip,nonumberlist]{glossaries}
\glsdisablehyper{}

\usepackage[english]{babel}

\newtheorem{theorem}{Theorem}[section]
\newtheorem{corollary}{Corollary}[theorem]
\newtheorem{lemma}[theorem]{Lemma}

\newtheorem{remark}{Remark}

\lstdefinestyle{mystyle}{
    commentstyle=\color{OliveGreen},
    numberstyle=\tiny\color{black!60},
    stringstyle=\color{BrickRed},
    basicstyle=\ttfamily\scriptsize,
    breakatwhitespace=false,
    breaklines=true,
    captionpos=b,
    keepspaces=true,
    numbers=none,
    numbersep=5pt,
    showspaces=false,
    showstringspaces=false,
    showtabs=false,
    tabsize=2
}
\usepackage{dsfont}

\usepackage[sc,compact]{titlesec}
\titleformat{\section}
  {\large\normalfont\scshape}{\thesection}{1em}{}
\titleformat{\subsection}
  {\normalfont\scshape}{\thesubsection}{1em}{}

%% file: preamble/preamble_tikz.tex
\usepackage{tikz}

\usepackage{subcaption}
\usetikzlibrary{arrows.meta, positioning, decorations.pathmorphing}

\usetikzlibrary{bayesnet} 

\pgfdeclarelayer{edgelayer}
\pgfdeclarelayer{nodelayer}
\pgfsetlayers{edgelayer,nodelayer,main}

\definecolor{hexcolor0xbfbfbf}{rgb}{0.749,0.749,0.749}

\tikzset{>=latex}
\tikzstyle{none}   = [inner sep=0pt]
\tikzstyle{line}   = [ -, thick, shorten <=1pt, shorten >=1pt ]
\tikzstyle{arrow}  = [ ->, thick, shorten <=1pt, shorten >=1pt ]
\tikzstyle{ardash} = [ dashed, ->, thick, shorten <=1pt, shorten >=1pt ]

\tikzstyle{empty}=[circle,opacity=0.0,text opacity=1.0,inner sep=0pt]
\tikzstyle{box}=[rectangle,fill=White,draw=Black]
\tikzstyle{filled}=[circle,thick,fill=hexcolor0xbfbfbf,draw=Black]
\tikzstyle{hollow}=[circle,thick,fill=White,draw=Black]
\tikzstyle{param}=[rectangle,fill=Black,draw=Black,inner sep=0pt,minimum width=4pt,minimum height=4pt]
\tikzstyle{paramhollow}=[rectangle,thick,fill=White,draw=Black,inner sep=0pt,minimum
width=4pt,minimum height=4pt]

\usepackage{pgfplots}                               
\pgfplotsset{compat=newest}
\pgfplotsset{plot coordinates/math parser=false}
\newlength\figureheight
\newlength\figurewidth
\newlength\figureheightsmall
\newlength\figurewidthsmall
\definecolor{POSTcolor}{rgb}{0.48, 0.20, 0.58} 
\definecolor{Qcolor}{rgb}{0.00, 0.53, 0.22} 